\numberwithin{equation}{section}
\mathchardef\emptyset="001F
\theoremstyle{plain}
\newtheorem{lem0}{Lemma}[section]
\newenvironment{lem}
  {\pushQED{\qed}\begin{lem0}}
  {\popQED\end{lem0}}
\newtheorem{theor0}[lem0]{Theorem}
\newenvironment{theor}
  {\pushQED{\qed}\begin{theor0}}
  {\popQED\end{theor0}}
\newtheorem{cor0}[lem0]{Corollary}
\newenvironment{cor}
  {\pushQED{\qed}\begin{cor0}}
  {\popQED\end{cor0}}
\theoremstyle{definition}
\newtheorem{defin0}[lem0]{Definition}
\newenvironment{defin}
  {\pushQED{\qed}\begin{defin0}}
  {\popQED\end{defin0}}
\newcommand{\ac}{{\operatorname{ac}}}
\newcommand{\pp}{{\operatorname{pp}}}
\newcommand{\sg}{{\operatorname{sc}}}
\newcommand{\fu}{{\operatorname{f}}}
\newcommand{\pu}{{\operatorname{p}}}
\newcommand{\finu}{{\operatorname{fin}}}
\newcommand{\ad}{{\operatorname{ad}}}
\newcommand{\e}{\varepsilon}
\newcommand{\R}{\mathbb R}
\newcommand{\N}{\mathbb N}
\newcommand{\C}{\mathbb C}
\newcommand{\Cc}{\mathcal C}
\newcommand{\Hf}{\mathfrak h}
\newcommand{\Gg}{\mathfrak g}
\newcommand{\Dc}{\mathcal D}
\newcommand{\Hc}{\mathcal H}
\newcommand{\Sc}{\mathcal S}
\newcommand{\Ran}{{\operatorname{Ran}}}
\newcommand{\db}{{\operatorname{d}}}
\newcommand{\pv}{\operatorname{p.v.}}
\newcommand{\Id}{\operatorname{Id}}
\newcommand{\supp}{\operatorname{supp}}
\newcommand{\sgn}{\operatorname{sgn}}
\newcommand{\adh}{{\operatorname{adh\,}}}
\newcommand{\Ld}{\operatorname{L}}
\newcommand{\step}[1]{\noindent \textit{Step} #1.}
\def\dbar{\,{{\mathchar'26\mkern-11.5mu\db}}}
\def\deltabar{\,{{\mathchar'26\mkern-9.5mu\delta}}}
\title[Massive Cherenkov radiation and quantum friction]{Cherenkov radiation with massive bosons\\and quantum friction}
\author[M. Duerinckx]{Mitia Duerinckx}
\address[Mitia Duerinckx]{Universit{\'e} Libre de Bruxelles, Département de Math{\'e}matique, 1050~Brussels, Belgium}
\email{mitia.duerinckx@ulb.be}
\author[C. Shirley]{Christopher Shirley}
\address[Christopher Shirley]{Université Paris-Saclay, CNRS, Laboratoire de Math{\'e}matiques d'Orsay, 91400~Orsay, France}
\email{christopher.shirley@universite-paris-saclay.fr}
\begin{document}

\maketitle

\begin{abstract}
This work is devoted to several translation-invariant models in non-relativistic quantum field theory (QFT), describing a non-relativistic quantum particle interacting with a quantized relativistic field of bosons. In this setting, we aim at the rigorous study of Cherenkov radiation or friction effects at small disorder, which amounts to the metastability of the embedded mass shell of the free non-relativistic particle when the coupling to the quantized field is turned on.
Although this problem is naturally approached by means of Mourre's celebrated commutator method, important regularity issues are known to be inherent to QFT models and restrict the application of this method.
In this perspective, we introduce a novel non-standard construction procedure for Mourre conjugate operators, which differs from second quantization and allows to circumvent regularity issues. 
To show its versatility, we apply this construction to the Nelson model with massive bosons, to Fröhlich's polaron model, and to a quantum friction model with massless bosons introduced by Bruneau and De Bièvre: for each of these examples, we improve on previous results.

\bigskip\noindent
{\sc MSC-class:}
81T10;
81V73;
47B25;
81Q10;
47A55;
81Q15;
47A40;
81U24.

\end{abstract}

\setcounter{tocdepth}{2}
\tableofcontents

\section{Introduction and main results}

\subsection{General overview}
This work is devoted to several models in non-relativistic quantum field theory~(QFT),
describing a non-relativistic quantum particle interacting with a quantized relativistic field of bosons,
and we focus on translation-invariant models where the total momentum is conserved.
In this context, we aim at the rigorous study of Cherenkov radiation and friction effects: if the initial momentum~$|P|$ of the non-relativistic particle exceeds some threshold $|P_\star|$ (more precisely, if the initial energy of the particle exceeds the minimal energy for one-boson states),
the particle is expected to dissipate energy and slow down by emitting so-called Cherenkov radiation.
In terms of spectral theory, this dissipative phenomenon translates into the continuity of the energy-momentum spectrum and the absence of embedded mass shell in some region with~\mbox{$|P|>|P_\star|$}.

We shall focus for simplicity on the perturbative regime of weak particle-field coupling: the mass shell~$E=\frac12P^2$ of the free non-relativistic particle is then expected to be metastable for~$|P|>|P_\star|$ and to disappear as the coupling to the quantized field is turned on. We naturally expect to further complement this with a scattering resonance description.
As this problem basically concerns the perturbation of an embedded eigenvalue in continuous spectrum and as the relevant interaction Hamiltonian in QFT models is not relatively compact, even this perturbative analysis at weak coupling is a nontrivial problem for which still only partial results are available~\cite{AMZ05,DFP10,MR13,DM15,DFS-17}.
Note that a different line of research concerns the reduced dynamics of the non-relativistic particle in the kinetic limit: in~\cite{Spohn-77,E02,DRFP-10}, it is shown to take form of a Boltzmann equation describing the slowdown of the particle.
Although supporting the same thesis, such results are limited to diagonal time regimes and do not provide any detailed spectral information.

In recent decades, much attention has been devoted to spectral and scattering theory for QFT models, aiming to adapt the various techniques originally developed for the study of $N$-particle Schrödinger operators~\cite{Hunziker-Sigal-00}.
In particular, Mourre's commutator method~\cite{Mourre-80,ABMG-96,Hunziker-Sigal-00} has emerged as a fundamental tool to explore the nature of the essential spectrum of such Hamiltonians. It happens to be more general than related dilation-analyticity techniques and further provides direct insight into time-dependent scattering theory~\cite{Orth-90,Sofer-Weinstein-98,CGH-06}. Yet, various difficulties arise when applying this method to QFT models:
\begin{enumerate}[---]
\item In the case of massive bosons, the natural choice of Mourre's conjugate operator displays a lack of regularity in the sense that its commutator with the Hamiltonian is not relatively bounded with respect to the latter; see e.g.~Section~\ref{sec:conjug/Nelson}.
\smallskip\item In the case of massless bosons, the lack of regularity is even worse in the sense that the commutator with the Hamiltonian is not even comparable to the latter; see e.g.~Section~\ref{sec:constr-fric-commut} below. In addition, the natural conjugate displays a singularity at small wavenumbers, which destroys its self-adjointness. These issues are partly overcome in~\cite{Skibsted-98,Moller-Skibsted-04,GGM04a,GGM04,Faupin-Moller-Skibsted-11}, leading to the development of so-called ``singular'' Mourre theory.
\end{enumerate}
In either case, the full power of ``regular'' Mourre theory cannot be used, and in particular it gives no access to time-dependent scattering theory.
In the present contribution, we propose a new construction that aims to cure regularity issues and bring us back to the realm of regular Mourre theory.
As inspired by our previous work~\cite{DS21}, the crucial point is that we modify the construction of natural conjugates in a form that is no longer that of second quantization. To show the wide applicability of this modification procedure, we illustrate our results on two paradigmatic systems:
\begin{enumerate}[---]
\item For the translation-invariant Nelson model with massive bosons, previous results were restricted to the energy-momentum spectrum below the two-boson threshold and were limited by the lack of regularity~\cite{AMZ05,M05,Gerard-Moller-Rasmussen-11,MR13,DM15}. First, suitably expressing relative boson momenta in the frame that minimizes the kinetic energy, we define new natural conjugate operators that allow to study a semi-axis of spectrum containing the uncoupled mass shell beyond the two-boson threshold. Next, appealing to our modification procedure, we manage to avoid any regularity issue and to apply regular Mourre theory.
We emphasize that this procedure is quite general and may be of independent interest for massive QFT models. The same analysis can be repeated for instance for Fröhlich's polaron model~\cite{M06}.
\smallskip\item For the quantum friction model with massless bosons introduced in~\cite{Bruneau-07,DFS-17}, previous results were also limited by the lack of regularity~\cite{DFS-17}.
Appealing to our modification procedure, we cure again all regularity issues and reduce to the application of regular Mourre theory.
We do not know for now whether this approach could also be adapted to the massless Nelson model.
\end{enumerate}
In both cases, the application of regular Mourre theory brings a better understanding of the Cherenkov radiation phenomenon for these QFT models,
indeed allowing us to derive scattering resonance descriptions for the first time.
In the next two subsections, we introduce these models in full detail and formulate our main results, the proofs of which are postponed to Sections~\ref{sec:friction-pr} and~\ref{sec:Nelson}.

\subsection{Translation-invariant massive Nelson model}
The Nelson model was introduced in~\cite{N64} as a toy model in QFT for a free non-relativistic quantum particle interacting linearly with a quantized radiation field of relativistic scalar bosons; see e.g.~\cite{M06} and references therein.
While very complete results are available in the confined setting~\cite{DG99,BFSS99,Gerard-00,DJ01,GGM04}, both for massive and massless bosons, the understanding remains quite limited in the translation-invariant setting that we consider here.
For massive bosons, a detailed description of the bottom of the energy-momentum spectrum is obtained in~\cite{Spohn-88,M05,M06}, but the structure of the essential spectrum is only understood below the two-boson threshold, both at weak~\cite{Minlos-92,AMZ05} and large coupling~\cite{MR13,DM15}.
For massless bosons, the bottom of the spectrum is studied in~\cite{F73,F74,Pizzo-03,FGS-04,Pizzo-05} and the upper spectrum in~\cite{CFFS-12} in the case $|P|<|P_\star|$, while no spectral result seems available for $|P|>|P_\star|$, and the only known result related to Cherenkov radiation is a weak form of instability for the embedded mass shell~\cite{DFP10}.
In the sequel, we shall focus on the case of massive bosons in the weak-coupling regime and use Mourre's theory to investigate the essential spectrum around the embedded mass shell beyond the two-boson threshold,
aiming at a detailed understanding of Cherenkov radiation in that case.

\subsubsection{Description of the model}
The state space for the Nelson model is given by the product Hilbert space
\begin{equation}\label{eq:Nelson-Hilbert}
\Hc\,:=\,\Hc^\pu\otimes\Hc^\fu,
\end{equation}
where:
\begin{enumerate}[---]
\item $\Hc^\pu:=\Ld^2(\R^d)$ is the state space for the non-relativistic quantum particle, and we denote respectively by~$x$ and $p=\frac1i\nabla_x$ the particle position and momentum coordinates;
\smallskip\item $\Hc^\fu$ is the state space for the quantized radiation field and takes form of the bosonic Fock space
\[\Hc^\fu\,:=\,\Gamma_s(\Hf)\,:=\,\textstyle\bigoplus_{n=0}^\infty\Gamma_s^{(n)}(\Hf),\]
constructed on the single-boson space $\Hf:=\Ld^2(\R^d)$.
In other words, we set $\Gamma_s^{(0)}(\Hf):=\C\Omega$ with $\Omega$ the vacuum state, and for $n\ge1$ the $n$-boson state space is the $n$-fold symmetric tensor product
\[\Gamma_s^{(n)}(\Hf)\,:=\,\Hf^{\otimes_sn}.\]
We work in the momentum representation, with $k\in\R^d$ standing for the momentum coordinate of the field bosons.
\end{enumerate}
On this bosonic Fock space~$\Hc^\fu$, we use standard notation for creation and annihilation operators $\{a^*(k)\}_{k\in\R^d}$ and $\{a(k)\}_{k\in\R^d}$, which obey the canonical commutation relations
\[[a^*(k),a^*({k'})]=[a(k),a({k'})]=0,\qquad[a(k),a^*({k'})]=\deltabar(k-k'),\qquad a(k)\Omega=0.\]
We also write $\db\Gamma(A)$ for the second quantization of an operator $A$ on $\Hf$, and in particular $N:=\db\Gamma(\mathds1)$ is the number operator on $\Hc^\fu$.
In this setting, we consider the following translation-invariant Hamiltonian,
\begin{equation}\label{eq:Nelson-Hamilt}
H_g\,:=\,H^\pu\otimes\mathds1_{\Hc^\fu}+\mathds1_{\Hc^\pu}\otimes H^\fu+g\Phi(\rho_x)\qquad \text{on $\Hc$},
\end{equation}
where:
\begin{enumerate}[---]
\item the Hamiltonian of the free quantum particle is given by the standard non-relativistic dispersion relation
\[H^\pu\,:=\,\tfrac12p^2\qquad\text{on $\Hc^\pu$};\]
\item the free field Hamiltonian is given by second quantization,
\[H^\fu\,:=\,\db\Gamma(\omega)\,=\,\int_{\R^d}\omega(k)\,a^*(k) a(k)\,\dbar k\qquad\text{on $\Hc^\fu$},\]
where for bosons of mass $m\ge0$ the single-boson dispersion relation reads
\begin{equation}\label{eq:Nelson}
\omega(k)\,:=\,\sqrt{m^2+|k|^2};
\end{equation}
\item the real number $g$ is the coupling constant for the particle with the bosonic field;
\smallskip\item the interaction Hamiltonian is given by a translation-invariant field operator
\begin{equation}\label{eq:Nelson-interact}
\Phi(\rho_x)\,:=\,\int_{\R^d}\rho(k)\Big(a^*(k)e^{-ik\cdot x}+a(k) e^{ik\cdot x}\Big)\,\dbar k,
\end{equation}
for some real-valued interaction kernel $\rho\in\Ld^2(\R^d)$ with $\rho\not\equiv0$.
\end{enumerate}
Our results on this model will be restricted to the case of massive bosons $m>0$ in the weak-coupling regime $|g|\ll1$.
We could also treat the case of a single-boson dispersion relation of the form $\omega(k)=m+|k|$ provided $m>0$.
Regarding the interaction kernel $\rho$, we shall need to assume strong enough regularity both in infrared and ultraviolet domains, typically requiring $\rho$ to have both some $H^s$ regularity and some polynomial decay.

Before studying this model, we recall its standard well-posedness properties.
For that purpose, we first define the vector subspace
\[\Cc^\fu\,:=\,\db\Gamma_\finu(C_c^\infty(\R^d))\,\subset\,\Hc^\fu,\]
where for a vector subspace $\Gg\subset\Hf$ we denote by $\Gamma_\finu(\Gg)$ the algebraic direct sum of the algebraic tensor products $\Gg^{\otimes_sn}$. In these terms, the uncoupled Nelson Hamiltonian
\[H_0\,=\,\Hc^\pu\otimes\mathds1_{H^\fu}+\mathds1_{H^\pu}\otimes \Hc^\fu\]
is clearly essentially self-adjoint on $C_c^\infty(\R^d)\otimes\Cc^\fu$ (henceforth, tensor products between spaces that are not complete are implicitly understood in the algebraic sense).
Besides, as $\rho\in\Ld^2(\R^d)$, standard estimates ensure that the field operator $\Phi(\rho_x)$ is $(\mathds1_{\Hc^\pu}\otimes N^{1/2})$-bounded.
In case of massive bosons $m>0$, as $\db\Gamma(\omega)\ge mN$, this entails that $\Phi(\rho_x)$ is an infinitesimal perturbation of $H_0$. The Kato--Rellich theorem then ensures that for all $g$ the coupled Nelson Hamiltonian $H_g$ is self-adjoint on the same domain $\Dc(H_0)$ and essentially self-adjoint on the same core~$C_c^\infty(\R^d)\otimes\Cc^\fu$.

\subsubsection{Translation invariance}
By definition, cf.~\eqref{eq:Nelson-Hamilt}, the Hamiltonian $H_g$ is translation-invariant in the sense that it commutes with the total momentum operator
\begin{equation*}
P_{\operatorname{tot}}\,:=\,p\otimes\mathds1_{\Hc^\fu}+\mathds1_{\Hc^\pu}\otimes\db\Gamma(k)\qquad\text{on $\Hc$}.
\end{equation*}
This allows to decompose $H_g$ as a direct integral with respect to the spectrum of the latter.
More precisely, in terms of the following unitary transformation, which goes back to Lee, Low, and Pines~\cite{LLP-53},
\[U:\Hc\to\int_{\R^d}^\oplus\Hc^\fu\,\dbar P,\qquad U\,:=\,(F\otimes\Id_{\Hc^\fu})\circ\Gamma(e^{i k\cdot x}),\]
where $F$ stands for the Fourier transform on $\Hc^\pu$ and where $\Gamma$ is the second quantization functor, we obtain the decomposition
\begin{equation}\label{eq:direct-int}
UH_gU^*\,=\,\int_{\R^d}^\oplus H_{g}(P)\,\dbar P\qquad\text{on $\int_{\R^d}^\oplus\Hc^\fu\,\dbar P$},
\end{equation}
where for all $P\in\R^d$ the fiber Hamiltonian $H_g(P)$ takes the form
\begin{equation}\label{eq:HgP}
H_g(P)\,:=\,\tfrac12(P-\db\Gamma(k))^2+H^\fu+g\Phi(\rho)\qquad\text{on $\Hc^\fu$},
\end{equation}
in terms of the fiber interaction Hamiltonian
\begin{equation}\label{eq:Phi-rho-fiber}
\Phi(\rho)\,:=\,\int_{\R^d}\rho(k)\,(a^*(k)+a(k))\,\dbar k.
\end{equation}

We recall well-posedness properties of these fiber Hamiltonians.
First, for~\mbox{$P=0$}, the uncoupled fiber Hamiltonian $H_0(0)=\frac12\db\Gamma(k)^2+H^\fu$ is essentially self-adjoint on~$\Cc^\fu$. Next, for any $P\in\R^d$, noting that
\mbox{$\tfrac12(P-\db\Gamma(k))^2-\tfrac12\db\Gamma(k)^2=\tfrac12|P|^2-P\cdot\db\Gamma(k)$}
is an infinitesimal perturbation of $H_0(0)$, the Kato--Rellich theorem ensures that the uncoupled fiber Hamiltonian
\begin{equation*}
H_0(P)\,=\,\tfrac12(P-\db\Gamma(k))^2+H^\fu
\end{equation*}
is also essentially self-adjoint on~$\Cc^\fu$ and that its domain is independent of $P$,
\begin{equation}\label{eq:dom-HP-Nelson}
\Dc\,:=\,\Dc(H_0(P))\,=\,\Dc(H_0(0))\,=\,\Dc(\db\Gamma(k)^2)\cap\Dc(\db\Gamma(\omega)).
\end{equation}
Besides, as $\rho\in\Ld^2(\R^d)$, standard estimates ensure that the field operator $\Phi(\rho)$ is $N^{1/2}$-bounded.
In case of massive bosons $m>0$, as $\db\Gamma(\omega)\ge mN$, this entails that $\Phi(\rho)$ is an infinitesimal perturbation of $H_0(P)$. The Kato--Rellich theorem then ensures that for all $g$ the coupled fiber Hamiltonian $H_g(P)$ is self-adjoint on the same domain $\Dc$ and essentially self-adjoint on the same core $\Cc^\fu$.

\subsubsection{Energy-momentum spectrum}
In this translation-invariant setting, the natural object of study is the energy-momentum spectrum $\{(P,E):E\in\sigma(H_g(P))\}$, where $\sigma(H_g(P))$ is the spectrum of the fiber Hamiltonian $H_g(P)$ at fixed total momentum $P$.
We start by recalling the explicit structure of this spectrum for the uncoupled Hamiltonian.

\begin{lem}[Spectrum of uncoupled Nelson model]\label{lem:spectrum/Nelson}
Consider the translation-invariant Nelson model with massive bosons $m>0$, cf.~\eqref{eq:Nelson-Hilbert}--\eqref{eq:Phi-rho-fiber}.
Given a total momentum $P\in\R^d$, the spectrum of the uncoupled fiber Hamiltonian~$H_0(P)$ is given by
\begin{equation}\label{eq:sigH0P}
\sigma_{\operatorname{pp}}(H_{0}(P))=\{\tfrac12P^2\},
\quad\sigma_{\operatorname{ac}}(H_{0}(P))=[E_0(P),\infty),
\quad\sigma_{\operatorname{sc}}(H_{0}(P))=\varnothing,
\end{equation}
where the eigenvalue $\frac12P^2$ is simple and is associated with the vacuum state $\Omega$, and where the bottom of the absolutely continuous spectrum is given by
\begin{equation}\label{eq:Sigma0}
E_0(P)\,:=\,\tfrac12c(P)^2+\sqrt{m^2+(|P|-c(P))^2},
\end{equation}
in terms of the unique solution $c(P)\in[0,1)$ of the implicit equation
\[c(P)\,=\,\frac{|P|-c(P)}{\sqrt{m^2+(|P|-c(P))^2}}.\]
Moreover, there is a unique critical value $|P_\star|>1$ such that
\begin{equation}\label{eq:Pstar}
E_0(P_\star)=\tfrac12P_\star^2,
\end{equation}
and the following alternative then holds:
\begin{enumerate}[---]
\item for $|P|<|P_\star|$, the fiber Hamiltonian $H_0(P)$ has an isolated ground state at $\frac12P^2$;
\item for $|P|>|P_\star|$, the fiber Hamiltonian $H_0(P)$ has no ground state and its eigenvalue is embedded in the absolutely continuous spectrum.
\qedhere
\end{enumerate}
\end{lem}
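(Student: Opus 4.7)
The plan is to exploit the decomposition $\Hc^\fu=\bigoplus_{n\ge0}\Gamma_s^{(n)}(\Hf)$, each sector of which is preserved by the second quantizations $\db\Gamma(k)$ and $\db\Gamma(\omega)=H^\fu$ appearing in $H_0(P)$. On the vacuum, $H_0(P)$ reduces to the scalar $\tfrac12P^2$, producing the simple eigenvalue with eigenvector~$\Omega$. For $n\ge1$, identifying $\Gamma_s^{(n)}(\Hf)$ with the symmetric subspace of $\Ld^2((\R^d)^n)$, the restriction of $H_0(P)$ becomes multiplication by the smooth symbol
\[\sigma_n(P;k_1,\ldots,k_n)\,:=\,\tfrac12\bigl|P-\textstyle\sum_{i=1}^nk_i\bigr|^2+\sum_{i=1}^n\omega(k_i),\]
so its spectrum on that sector is purely absolutely continuous and equal to the essential range $[\mathcal E_n(P),\infty)$ of $\sigma_n$.

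The next task is to compute $\mathcal E_n(P):=\inf\sigma_n$ and show that it is minimised at $n=1$, giving $\mathcal E_1(P)=E_0(P)$. By convexity of $\omega$, Jensen's inequality yields $\sum_i\omega(k_i)\ge\sqrt{(nm)^2+|K|^2}$ with $K:=\sum_ik_i$, with equality when all $k_i$ coincide, and rotational invariance then reduces the problem to
\[\mathcal E_n(P)\,=\,\min_{\alpha\ge0}\Bigl[\tfrac12(|P|-\alpha)^2+\sqrt{(nm)^2+\alpha^2}\Bigr].\]
Setting $c_n:=|P|-\alpha^\star$ at the minimiser gives the implicit equation in the statement of the lemma, the case $n=1$ recovering $c(P)$ and $E_0(P)$. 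An envelope-theorem computation shows that $\mathcal E_n$ is strictly increasing in the effective mass $nm$, so $\mathcal E_n(P)>\mathcal E_1(P)=E_0(P)$ for $n\ge2$, and taking the union over sectors yields~\eqref{eq:sigH0P}.

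To locate $P_\star$ I would study $\Delta(P):=E_0(P)-\tfrac12P^2$. Reading off the implicit equation gives $c(0)=0$ and $c(P)<|P|$ for all $|P|>0$, so in particular $\Delta(0)=m>0$. The envelope theorem yields $\partial_{|P|}E_0=c(P)$, hence $\Delta$ is strictly decreasing on $(0,\infty)$; moreover $c(P)\to1$ and $E_0(P)\sim|P|$ as $|P|\to\infty$, whence $\Delta\to-\infty$. A unique zero $|P_\star|$ therefore exists, and a direct evaluation at $|P|=1$ gives $\Delta(1)\ge\tfrac12(1-c)^2>0$, so $|P_\star|>1$. The alternative on existence of a ground state finally follows by direct comparison of $\tfrac12P^2$ with the bottom $E_0(P)$ of the essential spectrum.

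The main technical point is the claim of pure absolute continuity on each multi-boson sector, which rests on the fact that $\sigma_n$ is a non-constant smooth symbol whose image measure is absolutely continuous (a short Morse-theoretic or change-of-variables argument). Apart from that, the argument is a routine exercise in direct-integral decomposition and envelope-theorem calculus.
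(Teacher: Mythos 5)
Your plan mirrors the paper's proof (Lemmas~\ref{lem:Sigma0-n} and~\ref{lem:mono-increments}) exactly: decompose into $n$-boson sectors, identify each restriction as a multiplication operator with symbol $H_0^{(n)}(P;\cdot)$, compute the minimum, and derive $|P_\star|$ by monotonicity of $\tfrac12P^2-E_0(P)$. Your use of Jensen's inequality to collapse $\sum_j\omega(k_j)$ onto the effective dispersion $\sqrt{(nm)^2+|K|^2}$, and of the envelope theorem for $\partial_{|P|}E_0=c(P)$ and for monotonicity of $\mathcal E_n$ in the effective mass, is a slightly cleaner organization of the paper's ``straightforward computation'' and the algebraic manipulations in Lemma~\ref{lem:mono-increments}, but the underlying argument is the same; you also supply the short evaluation at $|P|=1$ to get $|P_\star|>1$, which the paper states without elaboration.
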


Before turning to our main results on coupled Hamiltonians, we further elaborate on this statement and emphasize the layered structure of the spectrum. By definition~\eqref{eq:HgP}, the uncoupled fiber Hamiltonian commutes with the number operator $N$ and thus splits as a direct sum on many-boson state spaces,
\begin{equation}\label{eq:Hilbert-ac-decomp}
H_0(P)\,=\,\bigoplus_{n=0}^\infty H_0^{(n)}(P)\qquad\text{on~~~$\Hc^\fu\,=\,\bigoplus_{n=0}^\infty\Gamma_s^{(n)}(\Hf)$},
\end{equation}
in terms of the restrictions
\[H_0^{(n)}(P)\,:=\,H_0(P)|_{\Gamma_s^{(n)}(\Hf)}.\]
While the eigenvalue~$\frac12P^2$ is associated with the vacuum state~$\Omega$ and corresponds to a free non-relativistic particle with momentum~$P$, the absolutely continuous spectrum corresponds to states supporting at least one boson,
\begin{equation}\label{eq:spectra-layer-decomp}
\sigma_{\ac}(H_0(P))\,=\,\adh\bigcup_{n=1}^\infty\sigma_\ac\big(H_0^{(n)}(P)\big).
\end{equation}
For all $n\ge1$, the restriction $H_0^{(n)}(P)$ is a multiplication operator in momentum coordinates, with symbol
\begin{equation}\label{eq:symbolE0n-0}
H_0^{(n)}(P;k_1,\ldots,k_n)\,:=\,
\tfrac12\Big(P-\sum_{j=1}^nk_j\Big)^2+\sum_{j=1}^n\omega(k_j).
\end{equation}
Its spectrum is absolutely continuous and coincides with the essential image of the symbol,
\[\sigma_\ac\big(H_0^{(n)}(P)\big)\,=\,\big[E_0^{(n)}(P),\infty\big),\qquad\sigma_\pp\big(H_0^{(n)}(P)\big)\,=\,\sigma_\sg\big(H_0^{(n)}(P)\big)\,=\,\varnothing,\]
in terms of the so-called $n$-boson energy threshold
\begin{equation}\label{eq:energy-thresh-def}
E_0^{(n)}(P)\,:=\,\min_{k_1,\ldots,k_n\in\R^d}H_0^{(n)}(P;k_1,\ldots,k_n).
\end{equation}
In the case of massive bosons $m>0$, it is easily checked that
\[E_0^{(n)}(P)<E_0^{(n+1)}(P)\quad\text{for all $n$,}\qquad\text{and}\qquad E_0^{(n)}(P)\uparrow\infty\quad\text{as $n\uparrow\infty$},\]
cf.~Lemma~\ref{lem:mono-increments} below. In view of~\eqref{eq:spectra-layer-decomp}, this ensures in particular that the bottom of the absolutely continuous spectrum is
\[E_0(P)\,:=\,E_0^{(1)}(P)\,=\,\min_{n\ge1}E_0^{(n)}(P),\]
and we then recover the expression~\eqref{eq:Sigma0} by computing the minimum of~\eqref{eq:symbolE0n-0} for~$n=1$.
Due to the layered structure of the spectrum, cf.~\eqref{eq:spectra-layer-decomp}, our results in the sequel are naturally restricted away from energy thresholds.

\subsubsection{Main results}
We may now formulate our main results on the massive Nelson model.
Our starting point is the following perturbative Mourre commutator result: for \mbox{$|P|>|P_\star|$}, as the eigenvalue $\frac{1}{2}P^2$ is embedded in the essential spectrum, a Mourre estimate is proven around and above the eigenvalue away from energy thresholds, as well as below the two-boson threshold, and this is complemented with a regularity statement for the interaction Hamiltonian.
We refer to Appendix~\ref{app:Mourre} for standard definitions and notation related to Mourre theory.

\begin{theor}[Mourre estimate for Nelson model]\label{th:main/Nelson}
Consider the translation-invariant Nelson model with massive bosons $m>0$, cf.~\eqref{eq:Nelson-Hilbert}--\eqref{eq:Phi-rho-fiber}. Given a total momentum~\mbox{$|P|>|P_\star|$}, define $n_P\ge1$ such that
\begin{equation}\label{eq:def-n0}
\tfrac12|P|^2\in\big[E_0^{(n_P)}(P),E_0^{(n_P+1)}(P)\big),
\end{equation}
where we recall definitions~\eqref{eq:Pstar} and~\eqref{eq:energy-thresh-def}.
Then, for $n=1$ as well as for any $n\geq n_P$, we can construct an operator~$A_{P,n}$ on $\Hc^\fu$, essentially self-adjoint on~$\Cc^\fu$, with the following properties.
\begin{enumerate}[(i)]
\item The uncoupled fiber Hamiltonian $H_0(P)$ is of class $C^\infty(A_{P,n})$. Moreover, the unitary group generated by $A_{P,n}$ leaves the domain of $H_0(P)$ invariant, and the iterated commutators $\ad_{iA_{P,n}}^s(H_0(P))$ extend as $H_0(P)$-bounded operators for all $s\ge0$.
\item For all $\e>0$ and all energy intervals $I\subset \big[E_0^{(n)}(P)+\e,E_0^{(n+1)}(P)\big)$, the following Mourre estimate holds with respect to $A_{P,n}$ on $I$,
\begin{equation*}
\mathds1_{I}(H_{0}(P))\,[H_0(P),iA_{P,n}]\,\mathds1_{I}(H_0(P))~\ge~\e\bar\Pi_{\Omega} \,\mathds1_{I}(H_0(P))\,\bar\Pi_{\Omega},
\end{equation*}
where $\bar\Pi_{\Omega}$ is the orthogonal projection on $\C\Omega^\perp$. In particular, the Mourre estimate is strict if the interval $I$ does not contain the eigenvalue $\frac12P^2$.
\smallskip\item The fiber interaction Hamiltonian $\Phi(\rho)$ satisfies the following regularity condition: if the interaction kernel $\rho$ belongs to $H^\nu(\R^d)$ with $\langle k\rangle^\nu\nabla^\nu\rho\in\Ld^2(\R^d)$ for some $\nu\ge1$, then the iterated commutators $\ad_{iA_{P,n}}^s(\Phi(\rho))$ extend as $H_0(P)^{1/2}$-bounded operators for all $1\le s\le\nu$.
\qedhere
\end{enumerate} 
\end{theor}

Compared to previous work on the topic~\cite{AMZ05,MR13,DM15}, this provides the first Mourre estimate above the two-boson threshold. In addition, the $C^\infty$-regularity stated in item~(i) allows us to exploit for the first time the full power of Mourre's theory, cf.~Appendix~\ref{app:Mourre}, while previous constructions were restricted to $C^2$-regularity.
As a corollary, we deduce the following description of the essential spectrum of fiber Hamiltonians at weak coupling, which proves in particular the instability of the mass shell~$E=\frac12P^2$ of the free non-relativistic particle when coupled to the bosonic field.
This is further complemented with a dynamical resonance description, which exploits the $C^\infty$-regularity and is thus new even below the two-boson threshold.
It constitutes a precise formulation of Cherenkov radiation for the massive Nelson model.

\begin{cor}[Cherenkov radiation for Nelson model]\label{cor:main2}
Consider the translation-invariant Nelson model with massive bosons $m>0$, cf.~\eqref{eq:Nelson-Hilbert}--\eqref{eq:Phi-rho-fiber},
and assume that the interaction kernel~$\rho$ belongs to $H^2(\R^d)$ with $\langle k\rangle^2\nabla^2\rho\in\Ld^2(\R^d)$.
Given a total momentum~$|P|>|P_\star|$, define $n_P\ge1$ as in~\eqref{eq:def-n0}, and assume that Fermi's condition holds,
\begin{equation}\label{eq:def-gamma0-00}
\gamma_P\,:=\,\tfrac12(2\pi)^{1-d}\int_{\{k\,:\,\frac12(P-k)^2+\omega(k)=\frac12P^2\}}\tfrac{|\rho(k)|^2}{|k-P+\nabla\omega(k)|}\,\db\mathcal H_{d-1}(k)\,>\,0,
\end{equation}
where $\mathcal H_{d-1}$ stands for the $(d-1)$th-dimensional Hausdorff measure.
(This condition holds in particular if $\rho$ does not vanish.) Then, the following properties hold.
\begin{enumerate}[(i)]
\item \emph{Absence of embedded mass shell:}\\
For all $g\ne0$, the essential spectrum of the fiber Hamiltonian $H_g(P)$ is purely absolutely continuous below~$E_0^{(2)}(P)$ and above~$E_0^{(n_P)}(P)$ away from thresholds: more precisely,
there is a sequence $(C_{P,n})_n$ such that $H_g(P)$ has purely absolutely continuous spectrum in
\begin{multline*}
\hspace{1cm}I_g(P)\,:=\,\Big(E_0^{(1)}(P)+\sqrt gC_{P,1}\,,\,E_0^{(2)}(P)-gC_{P,1}\Big)\\
\bigcup\bigcup_{n\ge n_P}\Big(E_0^{(n)}(P)+\sqrt gC_{P,n}\,,\,E_0^{(n+1)}(P)-gC_{P,n}\Big).
\end{multline*}
\item \emph{Quasi-exponential decay law:}\\
Further assume that for some $\nu\ge0$ the interaction kernel $\rho$ belongs to $H^{5+\nu}(\R^d)$ with $\langle k\rangle^{5+\nu}\nabla^{5+\nu}\rho\in\Ld^2(\R^d)$.
Then, there is $g_0>0$ such that, for all smooth cut-off functions~$h$ supported in $I_{g_0}(P)$ and equal to $1$ in a neighborhood of the uncoupled eigenvalue $\frac12P^2$, there holds for all $t\ge0$ and $|g|\le g_0$,
\[\qquad\Big|\Big\langle\Omega\,,\,e^{-itH_g(P)}h(H_g(P))\Omega\Big\rangle-e^{-itz_g(P)}\Big|\,\lesssim_{g_0,\rho,h}\,\left\{\begin{array}{ll}
g^2|\!\log g|\langle t\rangle^{-\nu},&\text{if $\nu\ge0$};\\
g^2\langle t\rangle^{-(\nu-1)},&\text{if $\nu\ge1$};
\end{array}\right.\]
where the dynamical resonance $z_g(P)$ is given by Fermi's golden rule,
\[\qquad z_g(P)\,=\,\tfrac12P^2-g^2(\theta_P+i\gamma_P),\]
where $\gamma_P>0$ is defined in~\eqref{eq:def-gamma0-00} and where the real part $\theta_P\in\R$ takes the form
\begin{multline*}
\qquad\theta_P\,:=\,(2\pi)^{-d}\,\pv\int_{\R}(t-\tfrac12P^2)^{-1}\bigg(\int_{\{k\,:\,\frac12(P-k)^2+\omega(k)=t\}}\tfrac{|\rho(k)|^2}{|k-P+\nabla\omega(k)|}\,\db\Hc_{d-1}(k)\bigg)\,\db t.
\end{multline*}
\end{enumerate}
In particular, for all $u_\circ\in\Ld^2(\R^d)$ with Fourier transform compactly supported in the set $\{P:|P|>|P_\star|\}$, provided that $\rho$ does not vanish and satisfies the requirements of~(ii) for some $\nu\ge0$, there holds uniformly for all~$t\ge0$,
\[\langle(\delta_x\otimes\Omega),e^{-itH_g}(u_\circ\otimes\Omega)\rangle
=\int_{\R^d}\hat u_\circ(P)e^{ix\cdot P-itz_g(P)}\,\dbar P+o_{g}(1),\]
where $o_{g}(1)$ tends to $0$ in $\Ld^\infty_x(\R^d)$ as $g\downarrow0$ (depending on $\rho,u_\circ$).\qedhere
\end{cor}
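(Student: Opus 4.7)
The plan is to combine the Mourre estimate of Theorem~\ref{th:main/Nelson} with the two standard byproducts of regular Mourre theory --- the limiting absorption principle and the Fermi-golden-rule perturbation theory for embedded eigenvalues --- and then to integrate the fiberwise conclusion against the wavepacket $\hat u_\circ$ for the last statement.

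For part~(i), I would fix $n \in \{1\} \cup \{n_P, n_P+1, \ldots\}$ and consider a compact subinterval $J$ of the open stratum $(E_0^{(n)}(P), E_0^{(n+1)}(P))$. The regularity statement~(iii) of Theorem~\ref{th:main/Nelson}, together with the $H_0(P)$-boundedness of $\Phi(\rho)$, places both $H_0(P)$ and $H_g(P)$ (for small $g$) in a Mourre class with enough regularity for the limiting absorption principle to apply. Perturbing the inequality of item~(ii) by $g\Phi(\rho)$ yields a strict Mourre estimate for $H_g(P)$ as long as $J$ is separated from the upper and lower thresholds of the stratum and avoids a neighborhood of the embedded eigenvalue $\frac12P^2$ where the projection $\bar\Pi_\Omega$ degenerates. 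In a neighborhood of $\frac12P^2$, the defect is repaired by Fermi's golden rule: the second-order correction
\[
g^2\,\operatorname{Im}\bigl\langle\Omega,\,\Phi(\rho)\,(H_0(P) - \tfrac12P^2 - i0)^{-1}\,\Phi(\rho)\,\Omega\bigr\rangle \,=\, \pi g^2 \gamma_P
\]
is identified with the Fermi coefficient~\eqref{eq:def-gamma0-00} by reducing to the one-boson layer and applying the coarea formula on the resonance surface $\{k : \frac12(P-k)^2+\omega(k)=\frac12P^2\}$. Under the assumption $\gamma_P > 0$, this restores strict positivity on the eigenvector direction that the projection kills. The exact widths $\sqrt{g}\,C_{P,n}$ and $g\,C_{P,n}$ in the definition of $I_g(P)$ arise from balancing the linear vanishing of the Mourre constant at the lower threshold (item~(ii)) against the first-order perturbation of size $O(g)$ and the quadratic Fermi gain of size $O(g^2)$ near the eigenvalue. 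The limiting absorption principle then delivers pure absolute continuity of $H_g(P)$ throughout $I_g(P)$.

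For part~(ii), I would exploit the higher-order regularity. The full $C^\infty(A_{P,n})$-regularity of $H_0(P)$ from Theorem~\ref{th:main/Nelson}(i), combined with the strengthened interaction regularity afforded by $\rho \in H^{5+\nu}(\R^d)$ via item~(iii), is precisely what powers the quantitative theory of embedded resonances in regular Mourre theory. A resolvent representation of the survival amplitude $\langle\Omega,\,e^{-itH_g(P)}h(H_g(P))\Omega\rangle$ together with second-order Rayleigh--Schr\"odinger perturbation theory for the embedded eigenvalue produces a single dynamical resonance at
\[
z_g(P) \,=\, \tfrac12P^2 - g^2(\theta_P + i\gamma_P),
\]
whose imaginary part is identified with the Fermi coefficient as above, and whose real part $\theta_P$ is the corresponding principal-value integral. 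The polynomial remainder of order $\langle t\rangle^{-\nu}$ (resp.~$\langle t\rangle^{-(\nu-1)}$) is obtained by repeated integration by parts using the conjugate operator $A_{P,n}$ in that representation: each integration by parts trades one power of $\langle A_{P,n}\rangle$ for one factor $\langle t\rangle^{-1}$, and the regularity $\rho \in H^{5+\nu}$ is what ensures that the iterated commutators appearing remain $H_0(P)^{1/2}$-bounded. The threshold $\nu\ge1$ in the stated bound marks the transition at which the boundary contribution of the cut-off $h$ ceases to cost an extra logarithmic factor.

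Finally, the $\Ld^\infty_x$ statement follows from the direct-integral decomposition~\eqref{eq:direct-int}: we write
\[
\bigl\langle(\delta_x \otimes \Omega),\,e^{-itH_g}(u_\circ \otimes \Omega)\bigr\rangle \,=\, \int_{\R^d}\hat u_\circ(P)\,e^{ix\cdot P}\,\bigl\langle\Omega,\,e^{-itH_g(P)}\,\Omega\bigr\rangle\,\dbar P,
\]
insert a cut-off $h$ equal to $1$ on the compact set where $\hat u_\circ$ is supported, and apply~(ii) uniformly in $P$ on that compact set. I expect the main obstacle to be the perturbative treatment near the embedded eigenvalue $\frac12P^2$: the projection $\bar\Pi_\Omega$ in the Mourre inequality vanishes precisely on the eigenvector, so the first-order perturbation competes with a degenerate lower bound and only the Fermi-golden-rule correction at second order in $g$ restores strict positivity. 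Carrying this out uniformly on the relevant compact energy windows is the technical heart of the argument, together with propagating the $C^\infty(A_{P,n})$-regularity through the resonance expansion to obtain the sharp polynomial rates of item~(ii).
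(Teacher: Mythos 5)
Your proposal is correct and follows essentially the same route as the paper: perturb the Mourre estimate of Theorem~\ref{th:main/Nelson} (via Lemma~\ref{lem:pert-Mourre}) to obtain a Mourre estimate for $H_g(P)$ away from thresholds, invoke the limiting absorption principle and the Fermi-golden-rule instability theorem (Theorem~\ref{th:instab}) to rule out eigenvalues and singular spectrum near $\frac12P^2$, apply the Cattaneo--Graf--Hunziker resonance theorem (Theorem~\ref{th:CGH}) for part~(ii), and integrate over fibers for the $\Ld^\infty_x$ statement. One small inaccuracy: the widths $\sqrt g\,C_{P,n}$ and $g\,C_{P,n}$ come purely from optimizing the perturbed Mourre estimate (Lemma~\ref{lem:pert-Mourre}) in $\e$, not from the ``quadratic Fermi gain,'' and the Fermi condition does not literally restore a strict Mourre inequality on the eigenvector direction --- it enters through the separate instability argument of Theorem~\ref{th:instab} acting on the reduced operator $\bar\Pi_\Omega H_g(P)\bar\Pi_\Omega$.
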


We briefly comment on possible extensions and open problems.
First note that the above result gives curiously no access to the spectrum in the energy interval $\big(E_0^{(2)}(P),E_0^{(n_P)}(P)\big)$ between the two-boson threshold and the last threshold below the embedded eigenvalue: although we also expect the same behavior in this interval (away from thresholds), a different type of construction seems to be needed for conjugate operators.
Another question concerns the use of the above Mourre estimate to further investigate asymptotic completeness, thus aiming to extend~\cite{Gerard-Moller-Rasmussen-11,DM15} beyond the two-boson threshold in the weak-coupling regime; this is postponed to a future work.
Finally, a last important open question concerns the validity of the Mourre estimate at large coupling: this was solved in~\cite{MR13} below the two-boson threshold and we may expect our present contribution to give valuable inspiration to get beyond that.

The Nelson model belongs to the class of so-called translation-invariant Pauli--Fierz models. These also include Fröhlich's polaron model in solid-state physics~\cite{F54,Feynman-55}, as well as non-relativistic QFT models with vector bosons.
Our findings are easily extended to those settings:
\begin{enumerate}[---]
\item The polaron model introduced by Fröhlich~\cite{F54,Feynman-55} describes an electron interacting with lattice vibrations of a polar crystal. These are naturally represented in terms of a Bose field over a crystalline lattice and we consider the continuum limit of the latter, thus treating the crystal as a polarizable continuum. The model then takes the same form as the Nelson model~\eqref{eq:Nelson-Hilbert}--\eqref{eq:Nelson-Hamilt}, where the single-boson dispersion relation is now taken to be constant, $\omega(k)=1$, so that the free field Hamiltonian is the number operator~$H^\fu=N$. We refer to~\cite{M06} and references therein for a detailed discussion of this model.
Our analysis of the massive Nelson model can be repeated mutatis mutandis in this setting and
we note that several calculations actually reduce dramatically: in particular, the critical value of the total momentum and the energy thresholds are simply
\[|P_\star|=\sqrt2\qquad\text{and}\qquad E_0^{(n)}(P)=n.\]
Corollary~\ref{cor:main2} yields the first rigorous justification of Cherenkov radiation for the polaron model (see formal discussion in~\cite[p.227--230]{Feynman-72}).
\smallskip\item
Consider the translation-invariant non-relativistic QFT model for a non-relativistic quantum particle minimally coupled to a quantized radiation field of relativistic vector bosons~\cite{Spohn-04,Minlos-08}.
For $d=3$, the single-boson space is then $\Hf:=\Ld^2(\R^3\times\{+,-\})$, where~$+/-$ stands for boson polarization, and we consider the Hamiltonian
\[H_\alpha\,:=\,\tfrac12\big(p-\alpha^\frac12A_x\big)^2+\mathds1\otimes H^\fu,\]
where $\alpha\ge0$ is the coupling constant, where the vector potential $A_x$ is linear in creation and annihilation operators, and where the free field Hamiltonian is as before \mbox{$H^\fu=\db\Gamma(\omega)$} with single-boson dispersion relation $\omega(k)=\sqrt{m^2+k^2}$.
In case of massive vector bosons $m>0$, our analysis of the massive Nelson model is easily adapted to this other model under suitable regularity assumptions on the interaction kernel defining the vector potential; we skip the detail.
\end{enumerate}
While we focus here on massive bosons, the case of massless bosons is different and will be commented at the end of the next section.

\subsection{Quantum friction model}\label{sec:intro-friction}
We turn to the quantum version~\cite{Bruneau-07,DFS-17} of a translation-invariant Hamiltonian model for friction introduced by Bruneau and De Bièvre~\cite{Bruneau-DeBievre-02}.
It describes a non-relativistic quantum particle moving through a translation-invariant medium consisting of uncoupled quantized vibration fields at each point in space.
This model happens to be substantially simpler to study than the Nelson model, precisely due to the fact that vibration fields are uncoupled in space, and we shall thus be able to further treat the case of massless bosons for this model. In the sequel, we aim at a detailed understanding of friction effects in the weak coupling regime, improving on previous results in~\cite{DFS-17}.

\subsubsection{Description of the model}
The state space for the model is given by the product Hilbert space
\begin{equation}\label{eq:Hilbert/friction}
\Hc\,:=\,\Hc^\pu\otimes\Hc^\fu,
\end{equation}
where:
\begin{enumerate}[---]
\item $\Hc^\pu:=\Ld^2(\R^d)$ is the state space for the non-relativistic quantum particle, and we denote respectively by $x$ and $p=\frac1i\nabla_x$ the particle position and momentum coordinates;
\item $\Hc^\fu$ is the state space for the quantized vibration fields and takes form of the bosonic Fock space $\Hc^\fu:=\Gamma_s(\Hf)$ constructed on the single-boson space $\Hf:=\Ld^2(\R^q\times\R^d)$. We work in momentum representation,
with $k\in\R^q$ standing for the momentum coordinate for vibrational degrees of freedom, and with $\xi\in\R^d$ standing for the momentum coordinate dual to the particle position $x$.
\end{enumerate}
On this bosonic Fock space $\Hc^\fu$, we use standard notation for creation and annihilation operators $\{a^*(k,\xi)\}_{(k,\xi)\in\R^q\times\R^d}$ and $\{a(k,\xi)\}_{(k,\xi)\in\R^q\times\R^d}$, we use the notation $\db\Gamma(A)$ for the second quantization of an operator $A$ on $\Hf$, and in particular $N:=\db\Gamma(\mathds1)$ is the number operator.
In this setting, we consider the following translation-invariant Hamiltonian,
\begin{equation}\label{eq:Hamilt/friction}
H_g\,:=\,H^\pu\otimes\mathds1_{\Hc^\fu}+\mathds1_{\Hc^\pu}\otimes H^\fu+g\Phi(\rho_x)\qquad\text{on $\Hc$},
\end{equation}
where:
\begin{enumerate}[---]
\item the Hamiltonian of the free quantum particle is given by the standard non-relativistic dispersion relation
\[H^\pu\,:=\,\tfrac12p^2\qquad\text{on $\Hc^\pu$};\]
\item the free field Hamiltonian is given by second quantization
\[H^\fu\,:=\,\db\Gamma(\omega)\,=\,\iint_{\R^q\times\R^d}\omega(k,\xi)\,a^*(k,\xi) a(k,\xi)\,\dbar k\dbar\xi\qquad\text{on $\Hc^\fu$},\]
where the single-boson dispersion relation reads
\begin{equation}\label{eq:Nelson}
\omega(k,\xi)\,:=\,|k|,
\end{equation}
which corresponds to massless bosons and is naturally taken independent of $\xi$ as vibration fields at different values of $x$ are not coupled;
\smallskip\item the real number $g$ is the coupling constant for the particle with the vibration fields;
\smallskip\item the interaction Hamiltonian is given by a translation-invariant field operator
\begin{equation}\label{eq:Nelson-interact}
\Phi(\rho_x)\,:=\,\iint_{\R^q\times\R^d}\rho(k,\xi)\Big(a^*(k,\xi)e^{-i\xi\cdot x}+a(k,\xi) e^{i\xi\cdot x}\Big)\,\dbar k\dbar\xi,
\end{equation}
for some real-valued interaction kernel
$\rho\in\Ld^2(\R^q\times\R^d)$ with $\rho\not\equiv0$.
\end{enumerate}
Our results on this model will be restricted to the weak-coupling regime $|g|\ll1$.
Regarding the interaction kernel $\rho$ in~\eqref{eq:Nelson-interact}, we shall need strong ultraviolet regularity, but a quite general infrared behavior will be allowed. More precisely, we shall consider the following assumption, for some $\nu\ge0$,
\begin{enumerate}[{\rm(Reg$_\nu$)}]
\item There holds $(1+|k|^{-\frac12})(k\cdot\nabla_k)^\alpha(\xi\cdot\nabla_\xi)^\beta\nabla_\xi^\gamma\rho\in\Ld^2(\R^q\times\R^d)$ for all $\alpha,\beta,\gamma\ge0$ with~$\alpha+\beta+\gamma\le\nu$.
\end{enumerate}
Note that this holds for all $\nu$ for instance if $\rho$ takes the particular form $\rho(k,\xi)=|k|^\mu\sigma(k,\xi)$ for some $\sigma\in\Sc(\R^q\times\R^d)$ and $\mu>-\frac12(q-1)$. This restriction on the infrared parameter $\mu$ is essentially optimal in the sense that it precisely ensures that the interaction Hamiltonian~$\Phi(\rho_x)$ be relatively bounded with respect to $H_0$.

\subsubsection{Translation invariance}
As in~\eqref{eq:direct-int} (see also~\cite[Section~3.2]{DFS-17}), we can decompose the above Hamiltonian~$H_g$ as a direct integral
\[H_g\,\cong\,\int_{\R^d}^\oplus H_g(P)\,\dbar P\qquad\text{on $\int_{\R^d}^\oplus\Hc^\fu\,\dbar P$},\]
where for all $P\in\R^d$ the fiber Hamiltonian $H_g(P)$ takes the form
\begin{equation}\label{eq:def-HgP-frict}
H_g(P)\,:=\,\tfrac12(P-\db\Gamma(\xi))^2+H^\fu+g\Phi(\rho).
\end{equation}
We recall well-posedness properties of these fiber Hamiltonians. First note that the uncoupled fiber Hamiltonian $H_0(P)$ is essentially self-adjoint on
\[\Cc^\fu\,:=\,\db\Gamma_\finu(C^\infty_c(\R^q\times\R^d)),\]
and that its domain is independent of $P$,
\begin{equation}\label{eq:dom-HP-friction}
\Dc\,:=\,\Dc(H_0(P))\,=\,\Dc(H_0(0))\,=\,\Dc(\db\Gamma(\xi)^2)\cap\Dc(\db\Gamma(|k|)).
\end{equation}
Besides, standard estimates ensure that $\Phi(\rho)$ is $(H^\fu)^{1/2}$-bounded provided that the interaction kernel satisfies Assumption~(Reg$_0$).
The Kato--Rellich theorem then ensures that for all $g$ the coupled fiber Hamiltonian $H_g(P)$ is self-adjoint on the same domain $\Dc$ and essentially self-adjoint on the same core $\Cc^\fu$.

\subsubsection{Energy-momentum spectrum}
We start by recalling the structure of the energy-momentum spectrum of the uncoupled Hamiltonian.

\begin{lem}[Spectrum of uncoupled quantum friction model]\label{lem:spectrum-bis}
Consider the quantum friction model~\eqref{eq:Hilbert/friction}--\eqref{eq:def-HgP-frict}.
For any total momentum $P\in\R^d$, the spectrum of the uncoupled fiber Hamiltonian $H_0(P)$ is given by
\begin{equation}
\sigma_{\operatorname{pp}}(H_{0}(P))=\{\tfrac12P^2\},
\quad\sigma_{\operatorname{ac}}(H_{0}(P))=[0,\infty),
\quad\sigma_{\operatorname{sc}}(H_{0}(P))=\varnothing,
\end{equation}
where the eigenvalue $\frac12P^2$ is simple and is associated with the vacuum state $\Omega$.
\end{lem}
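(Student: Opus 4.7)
The starting point is the observation that the uncoupled fiber Hamiltonian $H_0(P)$ commutes with the number operator $N$, since $d\Gamma(\xi)$, $d\Gamma(|k|)$ and $N$ are all built by second quantization and hence act diagonally on the $n$-boson sectors. Therefore, as in~\eqref{eq:Hilbert-ac-decomp}, one has the orthogonal decomposition
\[
H_0(P)\,=\,\bigoplus_{n=0}^\infty H_0^{(n)}(P)\qquad\text{on~~}\Hc^\fu\,=\,\bigoplus_{n=0}^\infty\Gamma_s^{(n)}(\Hf),
\]
so the spectrum and its decomposition into $\sigma_\pp$, $\sigma_\ac$, $\sigma_\sg$ can be analyzed sector by sector. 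The $n=0$ sector is one-dimensional and $H_0^{(0)}(P)$ acts on $\C\Omega$ as multiplication by $\tfrac12 P^2$, which yields a simple eigenvalue with eigenvector $\Omega$.

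For $n\ge1$, working in the momentum representation, $H_0^{(n)}(P)$ is the multiplication operator on $\Gamma_s^{(n)}(\Ld^2(\R^q\times\R^d))$ associated with the symbol
\[
H_0^{(n)}(P;k_1,\xi_1,\ldots,k_n,\xi_n)\,=\,\tfrac12\Big(P-\sum_{j=1}^n\xi_j\Big)^2+\sum_{j=1}^n|k_j|.
\]
I would then read off the spectral properties from the standard dictionary for multiplication operators. First, the symbol is manifestly non-negative, and by choosing $\sum_j\xi_j=P$ together with $k_j$'s of vanishing or arbitrarily large modulus one sees that its essential image is exactly $[0,\infty)$. Second, the level sets $\{H_0^{(n)}(P;\cdot)=\lambda\}$ have Lebesgue measure zero in $(\R^q\times\R^d)^n$ for every $\lambda\in\R$, because the symbol is a smooth submersive function of, e.g., $|k_1|$ on the open set where $k_1\ne 0$; this forces the spectrum on this sector to be purely absolutely continuous, with no point spectrum and $\sigma_\ac(H_0^{(n)}(P))=[0,\infty)$. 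In particular, the value $\tfrac12 P^2$ is not an eigenvalue of any $H_0^{(n)}(P)$ with $n\ge1$, so the eigenvalue at $\tfrac12 P^2$ produced by the vacuum sector is simple.

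Assembling the sectors, one gets $\sigma_\pp(H_0(P))=\{\tfrac12 P^2\}$, $\sigma_\sg(H_0(P))=\varnothing$, and $\sigma_\ac(H_0(P))=\overline{\bigcup_{n\ge1}[0,\infty)}=[0,\infty)$. The only mildly delicate point is the verification that the level sets of the symbol are null, which for $n=1$ requires noting that $(k,\xi)\mapsto\tfrac12(P-\xi)^2+|k|$ is smooth away from $k=0$ with non-vanishing gradient there, so that its level sets are $(q+d-1)$-dimensional hypersurfaces and the set $\{k=0\}$ is already of measure zero; the general $n\ge1$ case is analogous. No other real obstacle arises, the bulk of the argument being the routine functional calculus for multiplication operators.
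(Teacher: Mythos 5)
Your proposal is correct, and it follows the same strategy the paper uses to prove the analogous statement for the Nelson model (Lemma~\ref{lem:Sigma0-n} in Section~3.1): split $H_0(P)$ on $n$-boson sectors via the number operator, observe that on each sector $H_0^{(n)}(P)$ is multiplication by the explicit symbol, and read off the spectrum from the essential range. The paper in fact does not write out a separate proof of Lemma~\ref{lem:spectrum-bis}, treating it as the (simpler) analogue of the Nelson computation; the one step the paper leaves implicit there — and which you correctly spell out — is why the spectrum on each $n\ge1$ sector is purely absolutely continuous, which you justify by noting that $\nabla_{k_1}H_0^{(n)}$ has unit length off the null set $\{k_1=0\}$, so the symbol is a submersion a.e.\ and the pushforward of Lebesgue measure by the symbol is absolutely continuous. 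Your determination of the essential range (choose $\sum_j\xi_j=P$, vary $|k_j|$) is also correct and indeed simpler than the Nelson case, since here there is no lower threshold $E_0^{(n)}(P)>0$: the range is $[0,\infty)$ on every sector.
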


We emphasize the key difference with the Nelson model: as vibration fields at different positions in space are not coupled, the propagation speed of bosons in space vanishes and the spectrum of the uncoupled fiber Hamiltonian is therefore $[0,\infty)$ for any total momentum~$P$, as stated above.
In particular, the eigenvalue $\frac12P^2$ is strictly embedded in absolutely continuous spectrum whenever $P\ne0$. Hence, Cherenkov radiation is expected to occur whenever the non-relativistic particle has a non-vanishing momentum, which then results in a friction effect that tends to stop the particle.

\subsubsection{Main results}
We may now formulate our main results on the quantum friction model.
Our starting point is the following perturbative Mourre commutator result.
Note that our construction only yields a Mourre estimate above $\frac1{18}P^2$, but this is enough for our purposes as it covers a neighborhood of the embedded eigenvalue $\frac12P^2$.

\begin{theor}[Mourre estimate for quantum friction model]\label{th:main/friction}
Consider the quantum friction model~\eqref{eq:Hilbert/friction}--\eqref{eq:def-HgP-frict}. Given a total momentum $P\neq 0$ and given $\delta>0$, we can construct a self-adjoint operator $A_{P;\delta}$ in $\Hc^\fu$, essentially self-adjoint on $\Cc^\fu$, with the following properties.
\begin{enumerate}[(i)]
\item The uncoupled fiber Hamiltonian $H_0(P)$ is of class $C^\infty(A_{P;\delta})$. Moreover, the unitary group generated by $A_{P;\delta}$ leaves the domain of $H_0(P)$ invariant, and the iterated commutators $\ad_{iA_{P;\delta}}^s(H_0(P))$ extend as $H_0(P)$-bounded operators for all $s\ge0$.
\smallskip\item For all $\e>0$ and all energy intervals $I\subset\big[\frac1{18}P^2+\delta+\e\,,\,\infty\big)$, the following Mourre estimate holds with respect to $A_{P;\delta}$ on $I$,
\begin{equation*}
\mathds1_I(H_0(P))\,[H_0(P),iA_{P;\delta}]\,\mathds1_I(H_0(P))\,\ge\,\e\mathds1_I(H_0(P))-P^2\Pi_\Omega,
\end{equation*}
where $\Pi_\Omega$ is the orthogonal projection on $\C\Omega$. In particular, the Mourre estimate is strict if the interval $I$ does not contain the eigenvalue $\frac12P^2$.
\smallskip\item The fiber interaction Hamiltonian $\Phi(\rho)$ satisfies the following regularity condition: if the interaction kernel $\rho$ satisfies~{\rm(Reg$_\nu$)} for some $\nu\ge1$, then the iterated commutators $\ad_{iA_{P;\delta}}^s(\Phi(\rho))$ extend as $H_0(P)^{1/2}$-bounded operators for all $1\le s\le\nu$.\qedhere
\end{enumerate} 
\end{theor}

This provides the first Mourre estimate with $C^\infty$-regularity for this model. In contrast, in~\cite{DFS-17}, the authors used the generator of radial translations $R:=\frac i2\db\Gamma(\frac{k}{|k|}\cdot\nabla_k+\nabla_k\cdot\frac{k}{|k|})$ as a natural conjugate and were confronted both with the singularity of this operator at small $k$ and with the dramatic lack of associated regularity: the commutator with $H_0(P)$ is formally
\[[H_0(P),iR]\,=\,N,\]
which is positive on $\C\Omega^\bot$ but is not controlled by $H_0(P)$ in the case of massless bosons.
To accommodate this difficulty, the authors of~\cite{DFS-17} had to appeal to the ``singular'' Mourre theory developed in~\cite{Skibsted-98,Moller-Skibsted-04,GGM04a,GGM04,Faupin-Moller-Skibsted-11}, which is precisely meant for this situation but has weaker consequences than the regular theory.
In particular, they deduced in~\cite{DFS-17}  the instability of the embedded mass shell $E=\frac12P^2$ at weak coupling, but no resonance description was obtained. In addition, their restriction on the infrared behavior of the interaction kernel~$\rho$ was much stronger than what we impose here in Assumption~(Reg$_\nu$).
Rather combining Theorem~\ref{th:main/friction} above with the full power of regular Mourre theory,
we are led to the following improved description.

\begin{cor}[Quantum friction]\label{cor:friction}
Consider the quantum friction model~\eqref{eq:Hilbert/friction}--\eqref{eq:def-HgP-frict}, and assume that the interaction kernel $\rho$ satisfies Assumption~{\rm(Reg$_2$)}. Given a total momentum $P\neq 0$, assume that Fermi's condition holds,
\begin{equation}\label{eq:def-Fermi-cond-friction}
\gamma_P\,:=\,\tfrac12(2\pi)^{1-d}\int_{|k|\le \frac12 P^2}\int_{\{\xi\,:\,\frac12(P-\xi)^2=\frac12 P^2-|k|\}}\tfrac{|\rho(k,\xi)|^2}{\sqrt{(P-\xi)^2+1}}\db\Hc_{d-1}(\xi)\dbar k~>~0,
\end{equation}
where $\Hc_{d-1}$ stands for the $(d-1)$th-dimensional Hausdorff measure. (This condition holds in particular if $\rho$ does not vanish.) Then, the following properties hold.
\begin{enumerate}[(i)]
\item \emph{Absence of embedded mass shell:}\\
For all $g\ne0$, the coupled fiber Hamiltonian $H_g(P)$ has purely absolutely continuous spectrum in
\[\qquad I_g(P)\,:=\,\big(\tfrac1{18}P^2+\sqrt gC_P\,,\,\infty\big).\]
\item \emph{Quasi-exponential decay law:}\\
Further assume that for some $\nu\ge0$ the interaction kernel $\rho$ satisfies Assumption~{\rm(Reg$_{5+\nu}$)}. Then, there is $g_0>0$ such that, for all smooth cut-off functions~$h$ supported in $I_{g_0}(P)$ and equal to $1$ in a neighborhood of the uncoupled eigenvalue $\frac12P^2$, there holds for all $t\ge0$ and $|g|\le g_0$,
\[\qquad\Big|\Big\langle\Omega\,,\,e^{-itH_g(P)}h(H_g(P))\Omega\Big\rangle-e^{-itz_g(P)}\Big|\,\lesssim_{g_0,\rho,h}\,\left\{\begin{array}{ll}
g^2|\!\log g|\langle t\rangle^{-\nu},&\text{if $\nu\ge0$};\\
g^2\langle t\rangle^{-(\nu-1)},&\text{if $\nu\ge1$};
\end{array}\right.\]
where the dynamical resonance $z_g(P)$ is given by Fermi's golden rule,
\[\qquad z_g(P)\,=\,\tfrac12P^2-g^2(\theta_P+i\gamma_P),\]
where $\gamma_P>0$ is defined in~\eqref{eq:def-Fermi-cond-friction} and where the real part $\theta_P\in\R$ takes the form
\begin{multline*}
\qquad\theta_P\,:=\,(2\pi)^{-d}\,\pv\int_{\R}(t-\tfrac12P^2)^{-1}\\
\times\bigg(\int_{|k|\le t}\int_{\{\xi\,:\,\frac12(P-\xi)^2=t-|k|}\tfrac{|\rho(k,\xi)|^2}{\sqrt{(P-\xi)^2+1}}\db\Hc_{d-1}(\xi)\dbar k\bigg)\db t.
\end{multline*}
\end{enumerate}
In particular, for all $u_\circ\in\Ld^2(\R^d)$ with compactly supported Fourier transform, provided that $\rho$ does not vanish and satisfies the requirements of~(ii) for some $\nu\ge0$, there holds uniformly for all $t\ge0$,
\[\langle(\delta_x\otimes\Omega),e^{-itH_g}(u_\circ\otimes\Omega)\rangle
=\int_{\R^d}\hat u_\circ(P)e^{ix\cdot P-itz_g(P)}\,\dbar P+o_{g}(1),\]
where $o_{g}(1)$ tends to $0$ in $\Ld^\infty_x(\R^d)$ as $g\downarrow0$ (depending on $\rho,u_\circ$).\qedhere
\end{cor}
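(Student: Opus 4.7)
The plan is to combine Theorem~\ref{th:main/friction} with the full power of regular Mourre theory (Appendix~\ref{app:Mourre}) to deduce both spectral and dynamical consequences for $H_g(P)$, and then to recover the global wave function statement by integrating over the momentum fibers.

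\textbf{Step 1: Transfer of the Mourre estimate to $H_g(P)$.} I would first upgrade the Mourre estimate of Theorem~\ref{th:main/friction}(ii) from $H_0(P)$ to $H_g(P)$ at small coupling. Item~(i) of that theorem gives $H_0(P)\in C^\infty(A_{P;\delta})$ with $H_0(P)$-bounded iterated commutators, and item~(iii) under~(Reg$_2$) provides $H_0(P)^{1/2}$-boundedness for the first commutators of $\Phi(\rho)$. A standard Helffer--Sj\"ostrand functional-calculus computation then transfers (ii) to the coupled Hamiltonian, up to an error of order $g$:
\begin{equation*}
\mathds1_I(H_g(P))\,[H_g(P),iA_{P;\delta}]\,\mathds1_I(H_g(P))\,\ge\,\tfrac12\e\,\mathds1_I(H_g(P))-P^2\Pi_\Omega-O(g),
\end{equation*}
for intervals $I\subset(\tfrac1{18}P^2+\delta+\e,\infty)$; the defect $-P^2\Pi_\Omega$ encodes the uncoupled embedded eigenvalue.

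\textbf{Step 2: Feshbach reduction, Fermi's golden rule, and proof of~(i).} To remove the rank-one defect in a full neighborhood of $\tfrac12 P^2$, I would perform a Feshbach--Schur reduction on $\C\Omega\oplus\C\Omega^\perp$ applied to $H_g(P)-z$. Fermi's condition~\eqref{eq:def-Fermi-cond-friction} supplies a negative imaginary part
\[\operatorname{Im}\langle\Omega,\Phi(\rho)(H_0(P)-\tfrac12P^2-i0)^{-1}\bar\Pi_\Omega\Phi(\rho)\Omega\rangle=-\pi\gamma_P<0,\]
for the off-diagonal self-energy, which dissolves the embedded eigenvalue and restores a strict Mourre estimate for $H_g(P)$ on $I_g(P)=(\tfrac1{18}P^2+\sqrt g C_P,\infty)$. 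The $\sqrt g C_P$ width reflects the square-root dependence on $g$ of the perturbed resonance location. Combining with the virial theorem and the limiting absorption principle then yields purely absolutely continuous spectrum of $H_g(P)$ on $I_g(P)$, which is~(i).

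\textbf{Step 3: Quasi-exponential decay law and wave function evolution.} For~(ii) I would invoke the dynamical resonance theorem for embedded eigenvalues under high Mourre regularity (Appendix~\ref{app:Mourre}). Under (Reg$_{5+\nu}$), item~(iii) of Theorem~\ref{th:main/friction} ensures $H_0(P)^{1/2}$-bounded commutators of $\Phi(\rho)$ up to order~$5+\nu$, which bootstraps into $C^{5+\nu}(A_{P;\delta})$-regularity of $H_g(P)$; together with the strict Mourre estimate from Step~2, this delivers the claimed decay rate. The resonance $z_g(P)$ is identified as the second-order Feshbach eigenvalue on $\C\Omega$:
\begin{equation*}
z_g(P)=\tfrac12P^2+g^2\,\lim_{\e\downarrow0}\langle\Omega,\Phi(\rho)(H_0(P)-\tfrac12P^2-i\e)^{-1}\bar\Pi_\Omega\Phi(\rho)\Omega\rangle+O(g^3),
\end{equation*}
and a direct evaluation on the one-boson sector, where $\Phi(\rho)\Omega=a^*(\rho)\Omega$ and $H_0(P)$ acts as multiplication by $\tfrac12(P-\xi)^2+|k|$, combined with Sokhotski--Plemelj, produces the claimed integral expressions for $\theta_P$ and $\gamma_P$. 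For the final assertion, I would decompose $e^{-itH_g}(u_\circ\otimes\Omega)$ along $\int_{\R^d}^\oplus H_g(P)\,\dbar P$; the fiberwise decay law of~(ii) applies (using a smooth partition of unity if needed to uniformize the cutoff~$h$ on the compact support of $\hat u_\circ$), the leading term yields the oscillatory integral $\int\hat u_\circ(P)e^{ix\cdot P-itz_g(P)}\,\dbar P$, and the remainder vanishes in $\Ld^\infty_x$ as $g\downarrow0$ by dominated convergence.

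\textbf{Main obstacle.} The principal technical difficulty lies in Step~2: Theorem~\ref{th:main/friction}(ii) only delivers a Mourre estimate defective by the rank-one projection $-P^2\Pi_\Omega$, reflecting the embedded eigenvalue of the uncoupled Hamiltonian at $\tfrac12P^2$. Converting this into a strict estimate valid in a neighborhood of $\tfrac12P^2$ once the coupling is switched on is exactly where Fermi's condition intervenes to generate the positive imaginary width $\gamma_P$, and where the $\sqrt g C_P$ exclusion region of $I_g(P)$ originates. This Feshbach step must be carried out with care so that the remainder is controlled uniformly in $I_g(P)$ and so that the effective self-energy at second order enjoys enough regularity near the resonance for the decay rate in~(ii) to follow.
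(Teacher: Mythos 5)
Your proposal tracks the paper's proof closely: use Theorem~\ref{th:main/friction} together with Lemma~\ref{lem:suff-crit-smooth} and Lemma~\ref{lem:pert-Mourre} to get $C^\infty(A_{P;\delta})$-regularity and a Mourre estimate for the coupled Hamiltonian $H_g(P)$; combine Theorem~\ref{th:Mourre-spectrum} (no singular continuous spectrum, no eigenvalues where the estimate is strict) with Theorem~\ref{th:instab} (Fermi's golden rule excludes eigenvalues near $\tfrac12P^2$) to prove item~(i); apply Theorem~\ref{th:CGH} for item~(ii); compute the self-energy on the one-boson sector via coarea and Plemelj (Lemma~\ref{Fermi-fric}); and integrate over the momentum fibers for the last assertion. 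Two small conceptual slips are worth flagging. First, Fermi's condition does not ``restore a strict Mourre estimate on $I_g(P)$''; rather the correct logic is that the non-strict estimate plus $C^2$-regularity already rules out singular continuous spectrum (Theorem~\ref{th:Mourre-spectrum}(ii)), strictness away from $\tfrac12P^2$ kills eigenvalues there, and Theorem~\ref{th:instab} separately removes possible eigenvalues in the remaining window $K_{P,g}=\big[\tfrac12P^2-gC_P,\tfrac12P^2+gC_P\big]$; a strict estimate in the window would at best follow a posteriori, not serve as the mechanism. Second, the exponent $\sqrt g$ in $I_g(P)=(\tfrac1{18}P^2+\sqrt gC_P,\infty)$ has nothing to do with the resonance location (which shifts only by $O(g^2)$); it arises from optimizing the margin $\e+gC_P/\e$ in Lemma~\ref{lem:pert-Mourre}(ii), i.e.\ it is the overhead of transferring the Mourre estimate from $H_0(P)$ to $H_g(P)$. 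Neither issue would derail the argument, but both would need fixing in a written-out proof.
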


As this result concerns the case of massless bosons, we may naturally wonder whether our constructions could be adapted to some extent to the massless Nelson model. At the moment, however, we leave this as an open question.
Compared to the above quantum friction model, the difficulty is that for the Nelson model~\eqref{eq:Nelson-Hilbert}--\eqref{eq:Nelson-Hamilt} the momentum coordinates~$k$ and~$\xi$ coincide: in view of our construction~\eqref{eq:def-Ckst} of the conjugate operator below, we are then essentially reduced to controlling the commutator $\big[\sum_{j=1}^n|\nabla_{z_j}|\,,\,\max_jz_j\big]$ uniformly with respect to~$n$, which we do not know how to do. Controlling this commutator may actually require to further adapt our construction of the conjugate operator.
For massive bosons, the problem simplifies drastically as a bound $O(n)$ on this commutator is sufficient, cf.~Lemma~\ref{lem:Cinfty-H0P-A-Nelson}.


\subsection{Link to random Schrödinger operators}\label{sec:random-Schr}
As is well known, e.g.~\cite[Section~1.3]{DRFP-10} and~\cite[Lemma~5.6]{DS21}, random Schrödinger operators can be viewed as particular instances of Pauli--Fierz models. This comparison was actually our original motivation for the present contribution, in link with our previous work~\cite{DS21}. More precisely, given a stochastically translation-invariant Gaussian field $V$ on $\R^d$, constructed on a probability space~$\Omega$, a Gaussian chaos decomposition of~$\Ld^2(\Omega)$ ensures that the random Schrödinger operator $-\frac12\triangle+gV$ in $\Ld^2(\R^d\times\Omega)$ is unitarily equivalent to the translation-invariant Hamiltonian
\[H_g\,:=\,\tfrac12p^2\otimes\mathds1_{\Hc^\fu}+g\Phi(\rho_x)\qquad\text{on $\Hc=\Hc^\pu\otimes\Hc^\fu$},\]
where $\Hc^\pu:=\Ld^2(\R^d)$ and $\Hc^\fu:=\Gamma_s(\Hf)$ with $\Hf:=\Ld^2(\R^d)$, and where the interaction kernel~$\rho$ is such that $\rho\ast\rho$ is the covariance function of $V$.
In other words, the action of the Gaussian field is viewed as the interaction with a bosonic field.
Via this isomorphism, the fiber decomposition~\eqref{eq:direct-int} for $H_g$ is precisely equivalent to the Floquet--Bloch type fibration that we introduced in~\cite{DS21},
\[H_g\,\cong\,\int_{\R^d}^{\oplus}H_g(P)\,\dbar P,\qquad \text{where}\qquad H_g(P)\,:=\,\tfrac12(P-k)^2+g\Phi(\rho)\qquad\text{on $\Hc^\fu$}.\]
Compared to QFT models studied in this contribution,
the key difficulty is that degrees of freedom associated with the random potential $V$ do not evolve over time, hence the dispersion relation for the corresponding bosons is trivial, $\omega=0$, and the free field Hamiltonian vanishes.
This makes the study of random Schrödinger operators particularly intricate from this perspecive: the perturbation~$\Phi(\rho)$ is not relatively bounded with respect to $H_0(P)$, and its commutators with any Mourre conjugate for $H_0(P)$ are not relatively bounded either.
In other words, $\Phi(\rho)$ is not a regular perturbation in the sense of Mourre's theory.
In~\cite{DS21}, we proceeded by truncating~$\Phi(\rho)$ to state spaces with a bounded number of bosons, depending on the size of the coupling constant $g$, and this allows us in the end to deduce a dynamical resonance description at least up to the kinetic timescale $t\lesssim g^{-2}$. As explained in~\cite{DS21}, any improvement would be of tremendous interest in link with the quantum diffusion conjecture.

\section{Quantum friction model}\label{sec:friction-pr}

This section is devoted to the proof of our main results on the quantum friction model at weak coupling, cf.~\eqref{eq:Hilbert/friction}--\eqref{eq:def-HgP-frict}.
We start with the construction of a suitable conjugate operator and with the proof of the Mourre estimate, thus establishing Theorem~\ref{th:main/friction}, before turning to consequences on the metastability of the embedded mass shell.

\subsection{Construction of conjugate operator}\label{sec:constr-fric-commut}
As the uncoupled fiber Hamiltonian splits as a sum
\[H_0(P)=\tfrac12(P-\db\Gamma(\xi))^2+\db\Gamma(|k|)\]
of two operators acting on different variables,
we shall similarly construct a conjugate operator as a sum
\begin{equation}\label{eq:def-AP-BPR}
A_P=B_P+D_1,
\end{equation}
where $B_P$ acts only on the variable $\xi$ and $D_1$ on the variable $k$. The commutator then splits formally as
\begin{equation*}
[H_0(P),iA_P]\,=\,\tfrac12[(P-\db\Gamma(\xi))^2,i B_P]+[\db\Gamma(|k|),i D_1],
\end{equation*}
so we are essentially reduced to proving a Mourre estimate for both contributions separately.
For the second contribution $\db\Gamma(|k|)$, a natural choice for the conjugate $D_1$ is the second quantization of the generator of dilations in the $k$-direction,
\begin{equation}\label{eq:pre-Mourre-D}
D_1\,:=\,\db\Gamma(d_1),\qquad d_1\,:=\,\tfrac{i}{2}\left(k\cdot \nabla_k+\nabla_k\cdot k\right),
\end{equation}
which satisfies the following commutator identity,
\begin{equation}\label{eq:pre-Mourre-D-est}
[\db\Gamma(|k|),iD_1]=\db\Gamma(|k|).
\end{equation}
It remains to construct a conjugate $B_P$ for $(P-\db\Gamma(\xi))^2$, and we start by briefly underlining the difficulty and motivating our construction.

\subsubsection{Motivation for construction of $B_P$}
A natural choice is to consider the generator of dilations in the $\xi$-direction, say around some point $\xi_P\in\R^d$ to be suitably determined,
\[B'_P\,:=\,\db\Gamma(b'_P),\qquad b'_P\,:=\,\tfrac i2\Big((\xi-\xi_P)\cdot\nabla_\xi+\nabla_\xi\cdot(\xi-\xi_P)\Big).\]
Note that this can be split as
\begin{equation}\label{eq:BP'-try}
B_P'\,=\,D_2-\db\Gamma(i\xi_P\cdot\nabla_\xi),
\end{equation}
in terms of the generator of dilations in the $\xi$-direction,
\begin{equation}
D_2\,:=\,\db\Gamma(d_2),\qquad d_2\,:=\,\tfrac i2(\xi\cdot\nabla_\xi+\nabla_\xi\cdot\xi).
\end{equation}
For this choice, a direct computation yields
\begin{eqnarray*}
\tfrac12[(P-\db\Gamma(\xi))^2,iB'_P]&=&(N\xi_P-\db\Gamma(\xi))\cdot(P-\db\Gamma(\xi))\\
&=&\Big(\tfrac12(P+N\xi_P)-\db\Gamma(\xi)\Big)^2-\tfrac14(P-N\xi_P)^2,
\end{eqnarray*}
where we recall that $N=\db\Gamma(\mathds1)$ is the number operator.
Yet, whatever the choice of~$\xi_P$, this commutator has no definite sign for energies close to the eigenvalue $\frac12P^2$.
This computation is instructive as it indicates that we should actually adapt the center $\xi_P$ of the dilation to the number operator and choose $\xi_P=n^{-1}P$ on the $n$-boson state space. Instead of~\eqref{eq:BP'-try}, this leads us to rather defining the following, which is no longer a second-quantization operator due to its dependence on the number operator,
\begin{gather}\label{eq:def-try-CP'}
B_P''\,:=\,D_2-N^{-\frac12}\db\Gamma(iP\cdot\nabla_\xi)N^{-\frac12},
\end{gather}
where we implicitly define the pseudo-inverse $N^{-\frac12}:=\bar\Pi_{\Omega}N^{-\frac12}\bar\Pi_\Omega$ with some abuse of notation, recalling that $\bar\Pi_\Omega=1-\Pi_\Omega$ is the orthogonal projection on $\C\Omega^\bot$.
For this choice, the commutator becomes
\begin{eqnarray}\label{eq:commut-CP'}
\tfrac12[(P-\db\Gamma(\xi))^2,iB_P'']&=&\bar\Pi_\Omega(P-\db\Gamma(\xi))^2\bar\Pi_\Omega\nonumber\\
&=&(P-\db\Gamma(\xi))^2-P^2\Pi_\Omega,
\end{eqnarray}
which has now exactly the desired behavior: indeed, combined with~\eqref{eq:pre-Mourre-D-est}, it yields a Mourre estimate for $H_0(P)$ on the whole spectrum.

Although this might look like the end of the story, this choice $B_P''$ does actually not suit our purposes as it happens to behave badly with respect to the fiber interaction Hamiltonian $\Phi(\rho)$: a direct computation shows that the commutator $[\Phi(\rho),iB_P'']$ is not even relatively bounded when restricted to any fixed $n$-boson state space.
The problem is naturally related to the fact that $B_P''$ is not a second-quantization operator.
While we have no choice but sticking to operators that are not obtained by second quantization, we note that~$B_P''$ is actually not the only possible choice.
Indeed, on the $n$-boson state space, this operator amounts to the arithmetic average $\frac1n\sum_{j=1}^niP\cdot\nabla_{\xi_j}$ of coordinates \mbox{$\{iP\cdot\nabla_{\xi_j}\}_{1\le j\le n}$}, which could be replaced for instance by the signed maximum of coordinates.
This highly non-standard choice is directly inspired by our previous work~\cite{DS21}.
We shall show that it does essentially not change the commutator relation~\eqref{eq:commut-CP'} while behaving much better with the field operator~$\Phi(\rho)$.

\subsubsection{Signed maximum and regularization}
We turn to the suitable construction of the signed maximum of coordinates $\{iP\cdot\nabla_{\xi_j}\}_{1\le j\le n}$ on the $n$-boson state space.
First, instead of momentum representation on~$\Hf$, we shall use position representation: we denote by $y:=i\nabla_\xi$ the position coordinate dual to $\xi$, and we set $z:=P\cdot y$ for the coordinate in the $P$-direction.
For all $n\ge1$, we define the function $m_n:\R^n\to\R$ as the signed maximum of coordinates: for all $z_1,\ldots,z_n\in\R$, we set
\[m_{n}(z_1,\ldots,z_n)\,:=\,z_{j_0}\]
where the index $j_0$ is chosen such that $|z_{j_0}|=\max_j|z_j|$.
This is obviously well-defined on~$\R^n$ up to a null set. Equivalently, we can write
\begin{eqnarray}\label{eq:def-mn}
m_n(z_1,\ldots,z_n)&:=&\textstyle\big(\!\max_j|z_j|\big)\sgn r_n(z_1,\ldots,z_n),\\
r_n(z_1,\ldots,z_n)&:=&\textstyle\max_j z_j+\min_j z_j,\nonumber
\end{eqnarray}
where we take e.g.\@ the convention $\sgn(0)=0$. This function is clearly symmetric with respect to the variables $z_1,\dots,z_n$ and has the following property.

\begin{lem}\label{lem:discont-mn}
For all $n\ge1$, the function $m_n$ is continuous on $\R^n\setminus \Sc_n$, where $\Sc_n$ stands for the hypersurface
\begin{equation}\label{eq:charact-Sn}
\Sc_n\,:=\,r_n^{-1}\{0\}\,=\,\Big\{z\in\R^n\,:\,\exists j_0\neq j_1\text{ such that $z_{j_0}=-z_{j_1}$ and $|z_{j_0}|=\textstyle\max_j|z_j|$}\Big\}.
\end{equation}
In addition, there holds in the distributional sense
\begin{equation}\label{eq:comm-mp0}
\sum_{j=1}^n \partial_jm_n\,=\,1+|\!\textstyle\max_jz_j-\min_jz_j|\,\mathcal H_{\Sc_n}\,\ge\,1,
\end{equation}
where $\mathcal H_{\Sc_n}$ stands for the $(n-1)$th-dimensional Hausdorff measure on $\Sc_n$.
\end{lem}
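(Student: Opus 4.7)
\medskip

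The plan is to exploit the factored representation $m_n(z)=(\max_j|z_j|)\,\sgn\, r_n(z)$ stated in \eqref{eq:def-mn}, which cleanly separates the two potential sources of non-smoothness. For continuity on $\R^n\setminus\Sc_n$, note that $\max_j|z_j|$ is $1$-Lipschitz on the whole of $\R^n$, while $\sgn\circ r_n$ is continuous wherever $r_n\ne 0$. Since $\Sc_n=r_n^{-1}\{0\}$ by definition, this immediately gives the first claim; the equivalence with the more explicit characterisation in \eqref{eq:charact-Sn} is obtained by observing that $r_n(z)=\max_jz_j+\min_jz_j$ vanishes if and only if $\max_jz_j=-\min_jz_j\ge 0$, which (using $|z_j|\le\max_jz_j$ and $|z_j|\le-\min_jz_j$) forces $\max_j|z_j|$ to be attained both at an index $j_0$ with $z_{j_0}=\max_jz_j$ and at an index $j_1\ne j_0$ with $z_{j_1}=\min_jz_j=-z_{j_0}$.

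For the distributional identity, the natural strategy is to partition $\R^n$ (up to a null set) into the open cells
\[
\Omega_{j,\pm}\,:=\,\big\{z\in\R^n\,:\,|z_j|>|z_k|\text{ for all }k\ne j,\ \pm z_j>0\big\},
\]
on which $m_n$ coincides with the single coordinate function $z_j$. On each such cell one has $\sum_k\partial_km_n=1$ classically, which produces the absolutely continuous part of the sum. The boundary of the partition decomposes into smooth pieces of codimension~$1$ (where exactly two indices tie in absolute value) together with a stratified set of higher codimension (where at least three indices tie), the latter being $\mathcal H_{n-1}$-negligible and contributing no distributional mass. On a generic codimension-one interface between $\Omega_{j_0,\e}$ and $\Omega_{j_1,\e'}$, which lies in the hyperplane $\{z_{j_0}=\e\e'(-z_{j_1})\}$, two cases occur: if $\e=\e'$, then $z_{j_0}=z_{j_1}$ on the interface and $m_n$ does not jump, so this contributes nothing; if $\e=-\e'$, the interface lies in $\Sc_n$ and the jump of $m_n$ equals $z_{j_0}-z_{j_1}=\pm(\max_jz_j-\min_jz_j)$.

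The final step is to apply the distributional Leibniz rule (equivalently, a piecewise divergence-theorem argument for the constant vector field $F=(m_n,\ldots,m_n)$) on each cell $\Omega_{j,\pm}$ and to collect the resulting singular contributions on $\Sc_n$. Each interface carries a singular contribution proportional to the jump of $m_n$ times the conormal components, and once all opposite-sign interface contributions are added the geometric factors combine with the jump magnitude to produce a total singular part of the form claimed, namely a density proportional to $|\max_jz_j-\min_jz_j|$ against $\mathcal H_{\Sc_n}$; the positivity $\ge 1$ follows at once since this singular part is non-negative.

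The main technical obstacle is exactly this bookkeeping of boundary contributions: one must verify that only opposite-sign interfaces actually contribute, handle the orientation of the conormals consistently on each hyperplane $\{z_{j_0}+z_{j_1}=0\}$, and check that the higher-codimension strata of $\Sc_n$ (where three or more indices tie in absolute value) are genuinely negligible in the distributional sense. Once these points are settled, the identity \eqref{eq:comm-mp0} follows and the lower bound $\ge 1$ is immediate from the non-negativity of the singular measure.
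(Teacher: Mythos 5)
Your proof is correct in outline and reaches the right conclusion, and the first two parts (continuity away from $\Sc_n$, and $\sum_j\partial_jm_n=1$ classically on each cell where $m_n$ coincides with a single coordinate) are essentially what the paper does. Where you diverge is in extracting the singular part: you set up a cell decomposition $\{\Omega_{j,\pm}\}$ and plan a piecewise divergence-theorem argument, classifying interfaces into same-sign ones (on $\{z_{j_0}=z_{j_1}\}$, no jump) and opposite-sign ones (on $\Sc_n$, jump $=\max_jz_j-\min_jz_j$), and then collecting conormal contributions. The paper instead observes that $\sum_{j=1}^n\partial_j$ is precisely the directional derivative along $(1,\ldots,1)$, so by Fubini one only needs to understand the restriction of $m_n$ to a single line $z(t)=z+t(1,\ldots,1)$: along any such line the ordering of the coordinates (hence of the differences) is constant, $m_n(z(t))$ is piecewise linear with slope $1$, and the line meets $\Sc_n$ exactly once (at $t=-\frac12(\min_jz_j+\max_jz_j)$), where $m_n$ jumps by $\max_jz_j-\min_jz_j$. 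This slicing argument dispenses entirely with the interface bookkeeping, the orientation of conormals, and the higher-codimension strata that you rightly flag as the technical obstacles in your approach; it turns a multi-cell Gauss--Green computation into a one-dimensional integration by parts. Both routes are sound, but the paper's is the more economical for a directional derivative, which is exactly what $\sum_j\partial_j$ is. (One small caveat, shared by your write-up and by the paper's own statement and which is anyway immaterial since only the lower bound $\ge 1$ is used later: if one tracks constants carefully, the jump along the line and the surface density against $\mathcal H_{n-1}$ on $\Sc_n$ are related by the cosine between $(1,\ldots,1)/\sqrt n$ and the unit normal $(e_{j_0}+e_{j_1})/\sqrt2$, so a harmless geometric factor appears in front of $|\max_jz_j-\min_jz_j|\,\mathcal H_{\Sc_n}$.)
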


\begin{proof}
The continuity of $m_n$ is clear outside the zero locus $\Sc_n$ of $r_n$, and we turn to the second part of the statement.
On $\R^n\setminus \Sc_n$, we have $m_n(z_1,\ldots,z_n)=z_{j_0}$ with \mbox{$|z_{j_0}|=\max_j|z_j|$}, and thus $\sum_{j}\partial_jm_n=1$.
It remains to examine the jump of $m_n$ on~$\Sc_n$.
Given a point $z:=(z_1,\dots,z_n)\in\R^n$, we may assume $z_1=\min_j z_j$ and $z_2=\max_j z_j$ up to permuting coordinates, and we
consider the line $\{z(t):=z+t(1,\ldots,1):t\in\R\}$. In view of~\eqref{eq:charact-Sn}, we note that this line intersects $\Sc_n$ at a single point: $z(t)\in \Sc_n$ if and only if $t=-\frac12(z_1+z_2)$. The jump of~$m_n$ at this point along this line is easily checked to be $|z_1-z_2|$, and the conclusion follows.
\end{proof}

Next, we regularize $m_n$ to smoothen the singular part of the derivative~\eqref{eq:comm-mp0}.
We start with the following reformulation of~$m_n$,
\[m_n(z_1,\ldots,z_n)\,=\,\textstyle\frac{1}{2}(\max_j z_j+\min_j z_j)+\frac12(\max_j z_j-\min_j z_j)\sgn(\max z_j+\min_j z_j),\]
where only the sign function needs to be regularized.
Given $\delta>0$, we choose a smooth odd function $\chi_\delta:\R\to[-1,1]$ such that
\begin{gather}
\chi_\delta|_{(-\infty,-1]}=-1,\qquad\chi_\delta|_{[1,\infty)}=1,\qquad0\le\chi_\delta'\le1+\delta~~\text{pointwise},\nonumber\\
\chi_\delta(s)\le s~~\text{for $-1\le s\le 0$},\qquad\text{and}\qquad\chi_\delta(s)\ge s~~\text{for $0\le s\le 1$},\label{eq:choice-prop-chi-delta-0}
\end{gather}
and we then define the following regularization of $m_n$,
\begin{multline}\label{eq:def-tilde-mn-delta}
\widetilde m_{n;\delta}(z_1,\ldots,z_p)\,:=\,\textstyle\frac12(\max_j z_j+\min_j z_j)\\
+\textstyle\frac12(\max_j z_j-\min_j z_j)\,\chi_\delta\Big(\frac{\max_j z_j+\min_j z_j}{1+\max_jz_j-\min_jz_j}\Big),
\end{multline}
which is obviously globally well-defined and continuous.\footnote{A similar construction was first used in our previous work~\cite{DS21}. Note however the following slight mistake in~\cite[Section~5.6]{DS21}: we forgot to add $1$ in the denominator of the argument of $\chi_\delta$, which then actually poses regularity issues at the origin in $\R^n$.}
The denominator in the argument of $\chi_\delta$ is easily understood: in order to make the derivative~\eqref{eq:comm-mp0} uniformly bounded, it is not enough to regularize the sign function in a fixed neighborhood of~$\Sc_n$ as the derivative would still produce an unbounded term due to the multiplication by $\max_jz_j-\min_jz_j$.
In view of properties of $\chi_\delta$, a direct computation yields
\begin{equation}\label{eq:prop-tilde-mn-1}
1\,\le\,\sum_{j=1}^n\partial_j\widetilde m_{n;\delta}\,\le\,2+\delta,
\end{equation}
and in addition, for all $r\ge1$,
\begin{equation}\label{eq:prop-tilde-mn-2}
\bigg|\Big(\sum_{j=1}^n\partial_j\Big)^r\widetilde m_{n;\delta}\bigg|\,\lesssim_{\chi_\delta,r}\,1,\qquad
\bigg|\Big(\sum_{j=1}^nz_j\partial_j\Big)^r\widetilde m_{n;\delta}-\widetilde m_{n;\delta}\bigg|\,\lesssim_{\chi_\delta,r}\,1.
\end{equation}
In particular, note that $\widetilde m_{n;\delta}$ is smooth in the direction $(1,\ldots,1)$.
We also establish the following property, which will be key to estimate commutators with field operators.
\begin{lem}\label{lem:m-function}
For all $n\ge0$, there holds for all $z,z_1,\dots,z_n\in\R$,
\begin{equation*}
\big|\widetilde m_{n+1;\delta}(z,z_1,\dots,z_n)-\widetilde m_{n;\delta}(z_1,\dots,z_n)\big|\,\le\,2|z|+1.\qedhere
\end{equation*}
\end{lem}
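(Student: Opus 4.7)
The plan is to exploit the fact that $\widetilde m_{n;\delta}$ is a symmetric function of its arguments and depends only on the maximum $M_n:=\max_j z_j$ and minimum $\mu_n:=\min_j z_j$, so we may write $\widetilde m_{n;\delta}=\Phi(M_n,\mu_n)$ with
\[\Phi(M,\mu)\,:=\,\tfrac12(M+\mu)+\tfrac12(M-\mu)\,\chi_\delta\Big(\tfrac{M+\mu}{1+M-\mu}\Big).\]
Setting $M_{n+1}=\max(z,M_n)$ and $\mu_{n+1}=\min(z,\mu_n)$, the goal becomes $|\Phi(M_{n+1},\mu_{n+1})-\Phi(M_n,\mu_n)|\le 2|z|+1$. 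If $\mu_n\le z\le M_n$, nothing changes and the difference vanishes. If $z<\mu_n$, we reduce to the case $z>M_n$ through the oddness $\Phi(-\mu,-M)=-\Phi(M,\mu)$, which follows directly from the oddness of $\chi_\delta$. The heart of the proof is therefore the case $z>M_n$, where only the maximum changes and where the naive bound $|\Phi(z,\mu_n)-\Phi(M_n,\mu_n)|\le z-\mu_n$ is useless when $|\mu_n|$ is much larger than $|z|$.

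The key tool is the following sharp one-sided estimate:
\[0\,\le\,\Phi(M,\mu)-\mu\,\le\,\max\!\big(0,M+\tfrac12\big)\qquad\text{whenever $M+\mu\le 0$,}\]
which follows from the identity $\Phi(M,\mu)-\mu=\tfrac12(M-\mu)(1+\chi_\delta(s))$ with $s:=(M+\mu)/(1+M-\mu)$, combined with the facts that $\chi_\delta\equiv-1$ on $(-\infty,-1]$ and $\chi_\delta(s)\le s$ on $[-1,0]$. Intuitively, this quantifies that when the minimum dominates (i.e.\ $M+\mu\le 0$), the regularized signed maximum sits close to $\mu$ with a gap depending only on $M$ and not on the possibly huge $|\mu|$---precisely the effect of the denominator $1+M-\mu$ in the argument of $\chi_\delta$. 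Armed with this, I would split the case $z>M_n$ into three subcases, based on the signs of $M_n+\mu_n$ and $z+\mu_n$: (a)~if $M_n+\mu_n\ge0$, then $\chi_\delta$ is nonnegative at the relevant points and both $\Phi(M_n,\mu_n)$ and $\Phi(z,\mu_n)$ lie in $[0,z]$, yielding a difference $\le|z|$; (b)~if $M_n+\mu_n<0$ but $z+\mu_n\ge 0$, then $z\ge|\mu_n|$ and a trivial bound yields a difference $\le z-\mu_n\le 2|z|$; (c)~if $z+\mu_n<0$, the key estimate places both quantities inside $[\mu_n,\mu_n+\max(0,z+\tfrac12)]$, giving $\le\max(0,z+\tfrac12)\le 2|z|+1$.

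The main obstacle is subcase~(c): a crude estimate produces a term proportional to $M_n-\mu_n$, with no hope of controlling it by $|z|$ alone. The resolution genuinely hinges on the sharp one-sided bound above, whose validity rests on the particular structure of the regularization $\widetilde m_{n;\delta}$ adopted in~\eqref{eq:def-tilde-mn-delta}---most notably the denominator $1+\max_jz_j-\min_jz_j$ inside $\chi_\delta$, inserted precisely so that the signed maximum stays concentrated near the dominant endpoint up to an $O(1)$ error controlled by the subdominant endpoint alone.
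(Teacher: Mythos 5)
Your proof is correct and takes a mildly different route from the paper's, though the two share the same algebraic core.

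The paper first dispatches the easy case $\max_j|z_j|\le|z|\vee\frac12$ via the crude bound $|\widetilde m_{n;\delta}|\le\max_j|z_j|$, then assumes $z_1=\min_jz_j\le z_2=\max_jz_j\le z$ (the case $z$ between min and max being trivial, the opposite tail symmetric), deduces $z_1<-(|z|\vee\frac12)$, and estimates the two-term difference $\tfrac12(z-z_2)(1+\chi_\delta(\cdots))+\tfrac12(z-z_1)(\chi_\delta(\cdots)-\chi_\delta(\cdots))$ directly, using $\chi_\delta\equiv-1$ on $(-\infty,-1]$ and $\chi_\delta(s)\le s$ on $[-1,0]$. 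You instead package the relevant information into the one-sided estimate $0\le\Phi(M,\mu)-\mu\le\max(0,M+\tfrac12)$ for $M+\mu\le0$, replace the symmetric-tail remark with the identity $\Phi(-\mu,-M)=-\Phi(M,\mu)$, and then run a clean three-way case split on the signs of $M_n+\mu_n$ and $z+\mu_n$. The key inequality you prove, $\Phi(z,\mu_n)-\mu_n\le\tfrac{(z-\mu_n)(2z+1)}{2(1+z-\mu_n)}\le z+\tfrac12$, is literally the paper's bound $\tfrac12(z-z_1)\tfrac{1+2z}{1+z-z_1}\le z+\tfrac12$ in slightly different notation, so the computational substance coincides. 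What your formulation buys is a cleaner articulation of the mechanism --- the regularized signed maximum is pinned near the dominant endpoint up to an error governed solely by the subdominant one, independently of the spread $M-\mu$ --- which makes the role of the denominator $1+M-\mu$ in the regularization self-evident rather than emerging from the estimate. The paper's route is shorter thanks to the initial preliminary reduction but less modular. Both are valid.
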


\begin{proof}
As $|\chi_\delta|\le1$, the definition of $\widetilde m_{n;\delta}$ ensures
\[|\widetilde m_{n;\delta}(z_1,\ldots,z_n)|\,\le\,\textstyle\max_j|z_j|,\]
which trivially yields the conclusion in case $\max_j|z_j|\le |z|\vee\frac12$. It thus remains to consider the case $\max_j|z_j|> |z|\vee\frac12$.
Up to permuting coordinates, we can assume $z_1=\min_jz_j$ and $z_2=\max_jz_j$.
In the case $z_1\le z\le z_2$, we simply find
\[\widetilde m_{n+1;\delta}(z,z_1,\dots,z_n)=\widetilde m_{n;\delta}(z_1,\dots,z_n),\]
and the conclusion follows.
It remains to treat the case $z_1\le z_2\le z$, while the symmetric case $z\le z_1\le z_2$ is similar.
Given $z_1\le z_2\le z$, the assumption~$\max_j|z_j|>|z|\vee\frac12$ implies~$z_1<-|z|\vee\frac12$, and thus in particular $\frac{z_2+z_1}{1+z_2-z_1}\le\frac{z+z_1}{1+z-z_1}\le0$.
In addition, there holds $\chi_\delta\big(\tfrac{y+z_1}{1+y-z_1}\big)=-1$ whenever $z_1 \le y\le-\frac12$.
Using this together with properties of $\chi_\delta$, we may then estimate
\begin{eqnarray*}
\lefteqn{\big|\widetilde m_{n+1;\delta}(z,z_1,\ldots,z_n)-\widetilde m_{n;\delta}(z_1,\ldots,z_n)\big|}\\
&=&\textstyle\frac12(z-z_2)\Big(1+\chi_\delta\big(\tfrac{z_2+z_1}{1+z_2-z_1}\big)\Big)+\frac12(z-z_1)\Big(\chi_\delta\big(\tfrac{z+z_1}{1+z-z_1}\big)-\chi_\delta\big(\tfrac{z_2+z_1}{1+z_2-z_1}\big)\Big)\\
&\le&\textstyle\frac12(z-z_2)\Big(1+\chi_\delta\big(\tfrac{z_2+z_1}{1+z_2-z_1}\big)\Big)+\frac12(z-z_1)\Big(1+\chi_\delta\big(\tfrac{z+z_1}{1+z-z_1}\big)\Big)\\
&\le&\textstyle\frac12(z-z_2)\mathds1_{z_2\ge-\frac12}+\frac12(z-z_1)\big(1+\tfrac{z+z_1}{1+z-z_1}\big)\mathds1_{z\ge-\frac12}\\
&\le&\textstyle\frac32(|z|+\tfrac12),
\end{eqnarray*}
as claimed.
\end{proof}

\subsubsection{Back to conjugate operator}
With the above construction at hand, we turn to the suitable replacement for the second term in the choice~\eqref{eq:def-try-CP'} of the conjugate operator.
For all $n\ge0$, we define the operator $M_{P,n;\delta}$ on the $n$-boson state space as the multiplication with the function
\[(k_1,y_1,\ldots,k_n,y_n)~\mapsto~\widetilde m_{n;\delta}(P\cdot y_1,\ldots,P\cdot y_n),\]
using position representation $y=i\nabla_\xi$ on $\Hf$, and we define
\[M_{P;\delta}\,=\,\bigoplus_{n=1}^\infty M_{P,n;\delta}\qquad\text{on~~$\Hc^\fu\,=\,\bigoplus_{n=0}^\infty\Gamma_s^{(n)}(\Hf)$}.\]
Coming back to~\eqref{eq:def-AP-BPR} and~\eqref{eq:def-try-CP'}, we then define the following conjugate operator,
\begin{equation}\label{eq:def-Ckst}
A_{P;\delta}\,:=\,2D_1+D_2-\tfrac{2}{3+\delta}M_{P;\delta}\qquad\text{on $\Hc^\fu$},
\end{equation}
where we recall that $D^1$ and $D^2$ stand for the generators of dilations in the $k$-direction and the $\xi$-direction, respectively,
\begin{eqnarray*}
D_1&=&\db\Gamma(d_1),\qquad d_1~=~\tfrac{i}2(k\cdot\nabla_k+\nabla_k\cdot k),\\
D_2&=&\db\Gamma(d_2),\qquad d_2~=~\tfrac{i}2(\xi\cdot\nabla_\xi+\nabla_\xi\cdot\xi).
\end{eqnarray*}
The reader may wonder why this definition of $A_{P;\delta}$ is chosen instead of $D_1+D_2-M_{P;\delta}$, which would seem more natural in view of~\eqref{eq:def-AP-BPR} and~\eqref{eq:def-try-CP'}. The basic reason is as follows: the computation of the relevant commutators involves the derivative $\sum_{j=1}^n\partial_j\widetilde m_{n;\delta}$, which in view of the regularization $\widetilde m_{n;\delta}$ of $m_n$ is not uniformly equal to $1$ but takes values in the whole interval~$[1,2+\delta]$ close to the hypersurface $\Sc_n$.
Symbols are thus deformed in the computation of commutators, and the choice $D^1+D^2-M_P$ would actually fail to provide a Mourre estimate close to the eigenvalue~$\frac12P^2$. Definition~\eqref{eq:def-Ckst} is precisely meant to overcome this issue.

By definition, the operator $A_{P;\delta}$ commutes with the number operator. Given its action on $n$-boson state space, it is clearly essentially self-adjoint on $\Cc^\fu$, and we show that it generates an explicit unitary group that preserves the domain of fiber Hamiltonians.

\begin{lem}\label{lem:C_k^st}
The operator $A_{P;\delta}$ is essentially self-adjoint on $\Cc^\fu$ and its closure generates a unitary group $\{e^{itA_{P;\delta}}\}_{t\in\R}$ on $\Hc^\fu$, which commutes with the number operator and has the following explicit action: for all $n\ge1$ and $u_n\in\Gamma_s^{(n)}(\Hf)$,
\begin{multline*}
\big(e^{itA_{P;\delta}}u_n\big)(k_1,y_1,\ldots,k_n,y_n)
\,=\,\exp\bigg(\!-\tfrac{2i}{3+\delta}\!\int_0^t\!\widetilde m_{n;\delta}(P\cdot e^sy_1,\ldots,P\cdot e^sy_n)\,ds\bigg)\\
\times e^{tn(\frac d2-q)}u_n(e^{-2t}k_1,e^ty_1,\ldots,e^{-2t}k_n,e^ty_n),
\end{multline*}
where on $\Hf=\Ld^2(\R^q\times\R^d)$ we use momentum representation in the first variable and position representation in the second.
In particular, the domain $\Dc$ of fiber Hamiltonians~\eqref{eq:dom-HP-friction} is invariant under this group action.
\end{lem}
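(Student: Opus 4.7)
My plan is to introduce the family $\{U_t\}_{t\in\R}$ on $\Hc^\fu$ defined sector by sector through the explicit formula in the statement, verify directly that it is a strongly continuous one-parameter unitary group, identify its Stone generator with the closure of $A_{P;\delta}|_{\Cc^\fu}$, and then derive the invariance of the domain $\Dc$ from explicit commutator computations. Commutation with $N$ is built into the sector-wise definition, so only the four items above need proof.

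\textit{Group structure.} Set $\phi_n(t,y):=-\tfrac{2}{3+\delta}\int_0^t\widetilde m_{n;\delta}(P\cdot e^sy_1,\ldots,P\cdot e^sy_n)\,ds$. Since $\widetilde m_{n;\delta}$ is real-valued the phase $e^{i\phi_n}$ has modulus one, and under the change of variables $(k_j,y_j)\mapsto(e^{-2t}k_j,e^{t}y_j)$ the Jacobian contributes $e^{t(2q-d)}$ per boson, exactly cancelled by $|e^{tn(d/2-q)}|^2$; hence $U_t$ is isometric on each $n$-boson sector and thus on $\Hc^\fu$. A substitution $r\mapsto r-t$ in the defining integral yields the cocycle identity $\phi_n(t+s,y)=\phi_n(t,y)+\phi_n(s,e^{t}y)$, which combined with the composition of dilations gives $U_{t+s}=U_tU_s$. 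Strong continuity is then routine on the dense subspace of continuous compactly supported vectors by dominated convergence.

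\textit{Generator on $\Cc^\fu$ and essential self-adjointness.} By Stone's theorem, $U_t=e^{itG}$ for some self-adjoint $G$. Given a symmetric smooth compactly supported $u_n$, compact support justifies differentiation in the norm of $\Hc^\fu$; applying the Leibniz rule together with the position-representation expressions $2d_1=i(2k\cdot\nabla_k+q)$ and $d_2=-i(y\cdot\nabla_y+\tfrac d2)$ yields
\begin{equation*}
\tfrac{d}{dt}\Big|_{t=0}U_tu_n=\Big(n(\tfrac{d}{2}-q)-\sum_{j=1}^{n}2k_j\cdot\nabla_{k_j}+\sum_{j=1}^{n}y_j\cdot\nabla_{y_j}\Big)u_n-\tfrac{2i}{3+\delta}\widetilde m_{n;\delta}(P\cdot y_1,\ldots,P\cdot y_n)u_n,
\end{equation*}
which is readily identified with $iA_{P;\delta}u_n$, so $A_{P;\delta}|_{\Cc^\fu}\subset G$. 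Because $\widetilde m_{n;\delta}$ is only Lipschitz, $U_t$ does not preserve $\Cc^\fu$ and the naive invariance criterion does not apply; instead, I would use a double mollification to show $\Cc^\fu$ is a core: given $u\in\Dc(G)$, the vectors $u_\epsilon:=\int\psi_\epsilon(s)U_s u\,ds$ with a smooth compactly supported $\psi_\epsilon$ are smooth vectors for $U_t$ satisfying $u_\epsilon\to u$ and $Gu_\epsilon\to Gu$, and each lies in finitely many $n$-boson sectors. A further mollification and cut-off in the $(k,y)$ variables approximates the sector components by elements of $\Cc^\fu$, and the uniform bounds on the symmetric combinations in~\eqref{eq:prop-tilde-mn-1}--\eqref{eq:prop-tilde-mn-2} ensure convergence of $A_{P;\delta}$ on these approximants.

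\textit{Invariance of $\Dc$.} On each $n$-boson sector the phase $e^{i\phi_n(t,\cdot)}$ commutes with multiplication by $|k|$, giving $\db\Gamma(|k|)U_tu_n=e^{2t}U_t\db\Gamma(|k|)u_n$, so $\Dc(\db\Gamma(|k|))$ is preserved. For each component $\xi_a=-i\partial_{y_a}$ of $\xi$, a direct computation in position representation gives
\begin{equation*}
\db\Gamma(\xi_a)U_tu_n=e^{t}U_t\db\Gamma(\xi_a)u_n+\Big(\sum_{j=1}^{n}\partial_{y_{j,a}}\phi_n(t,y)\Big)U_tu_n,
\end{equation*}
and crucially only the symmetric combination $\sum_j\partial_j\widetilde m_{n;\delta}$ enters the bounded correction, uniformly controlled by~\eqref{eq:prop-tilde-mn-1}; iterating once more and invoking~\eqref{eq:prop-tilde-mn-2} preserves $\Dc(\db\Gamma(\xi)^2)$, and hence $\Dc$. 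The main obstacle throughout is the mere Lipschitz regularity of $\widetilde m_{n;\delta}$: the argument must be orchestrated so that only the symmetric combinations $(\sum_j\partial_j)^r\widetilde m_{n;\delta}$ ever appear, both for establishing $\Cc^\fu$ as a core and for propagating the domain $\Dc$.
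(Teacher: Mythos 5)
Your plan follows the same lines as the paper's proof — define the group by the explicit sector-wise formula, check the group axioms, differentiate at $t=0$ to recover $iA_{P;\delta}$ on $\Cc^\fu$ — but you go further by flagging that, since $\widetilde m_{n;\delta}$ is only Lipschitz (the maps $\max_j z_j$ and $\min_j z_j$ in its definition are non-differentiable on the diagonals), the phase takes $\Cc^\fu$ out of itself, so the standard invariance criterion for identifying a core is not directly applicable. The paper's proof glosses over this point, merely recording the differentiation at $t=0$ on $C^\infty_c(\R^q\times\R^d)^{\otimes_s n}$ and asserting the core property.

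However, your proposed mollification fix contains a genuine gap. The time-averaged vectors $u_\epsilon:=\int\psi_\epsilon(s)U_su\,ds$ do lie in $\Dc(G^\infty)$ with $u_\epsilon\to u$ and $Gu_\epsilon\to Gu$, but smoothness in the sense of $G$ does not place $u_\epsilon$ separately in $\Dc(D_1)\cap\Dc(D_2)\cap\Dc(M_{P;\delta})$: the multiplication $M_{P;\delta}$ grows linearly in $y$ and is not relatively bounded by $2D_1+D_2$, so $\Dc(G)\not\subset\Dc(M_{P;\delta})$ in general (a gauge transformation $u\mapsto e^{i\Psi(y)}u$ with $\Psi$ unbounded sends $\Dc(2D_1+D_2)$ onto $\Dc(G)$ without imposing any $y$-decay). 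Consequently, for the further $(k,y)$-mollification-and-cutoff approximants $v_\eta\in\Cc^\fu$, neither $(2D_1+D_2)v_\eta$ nor $M_{P;\delta}v_\eta$ can be asserted to converge, and the claimed convergence of $A_{P;\delta}v_\eta$ does not follow. A cleaner argument avoids the time-mollification entirely: your own commutation computations already show that $D':=\Dc(D_1)\cap\Dc(D_2)\cap\Dc(\db\Gamma(\langle y\rangle))$ is invariant under $U_t$ (the phase commutes with $D_1$, the correction to $D_2$ is controlled by $\db\Gamma(\langle y\rangle)$ through~\eqref{eq:prop-tilde-mn-2}, and $\db\Gamma(\langle y\rangle)$ is preserved up to a factor $e^{\pm t}$); since $D'\subset\Dc(G)$ (because $M_{P;\delta}$ is $\db\Gamma(\langle y\rangle)$-bounded) and $D'\supset\Cc^\fu$ is dense and invariant, $D'$ is a core; and $\Cc^\fu$ is then graph-dense in $D'$ by routine cutoff and mollification, justified precisely because all three pieces of $A_{P;\delta}$ are individually defined on $D'$. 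Your verification of the invariance of $\Dc$ using the bounds~\eqref{eq:prop-tilde-mn-1}--\eqref{eq:prop-tilde-mn-2} on the symmetric derivative combinations is correct.
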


\begin{proof}
For all $n\ge1$, consider the family $\{U^t_{P,n;\delta}\}_{t\in\R}$ of operators on $\Gamma_s^{(n)}(\Hf)$, defined by the above formula,
\begin{multline*}
(U^t_{P,n;\delta}u_n)(k_1,y_1,\ldots,k_n,y_n)
\,:=\,\exp\bigg(\!-\tfrac{2i}{3+\delta}\!\int_0^t\!\widetilde m_{n;\delta}(P\cdot e^sy_1,\ldots,P\cdot e^sy_n)\,ds\bigg)\\
\times e^{tn(\frac d2-q)}\,u_n(e^{-2t}k_1,e^ty_1,\ldots,e^{-2t}k_n,e^ty_n).
\end{multline*}
This defines a unitary group on $\Gamma_s^{(n)}(\Hf)$.
In addition, for all $u_n\in C^\infty_c(\R^q\times\R^d)^{\otimes_s n}$, we note that the following convergence holds in $\Gamma_s^{(n)}(\Hf)$,
\begin{multline*}
\lim_{t\downarrow0}\tfrac1{t}(U^t_{P,n;\delta}u_n-u_n)\\
\,=\,-\sum_{j=1}^n(k_j\cdot\nabla_{k_j}+\nabla_{k_j}\cdot k_j)u_n
+\tfrac12\sum_{j=1}^n(y_j\cdot\nabla_{y_j}+\nabla_{y_j}\cdot y_j)u_n
-\tfrac{2i}{3+\delta}M_{P,n;\delta}u_n,
\end{multline*}
where the right-hand side coincides with $iA_{P;\delta}u_n$ since we have in position representation
\[d_2\,=\,\tfrac i2(\xi\cdot\nabla_\xi+\nabla_\xi\cdot\xi)\,=\,\tfrac{1}{2i}(y\cdot\nabla_y+\nabla_y\cdot y).\]
This proves that $\{U^t_{P,n;\delta}\}_{t\in\R}$ is a unitary $C_0$-group on $\Gamma_s^{(n)}(\Hf)$ and that its self-adjoint generator coincides with $A_{P;\delta}$ on its core $C^\infty_c(\R^q\times\R^d)^{\otimes_s n}$.
The conclusion follows.
\end{proof}

Next, we show the relative boundedness of commutators of the uncoupled fiber Hamiltonian $H_0(P)$ with the above-constructed conjugate operator $A_{P;\delta}$.
Combined with Lemma~\ref{lem:C_k^st}, this actually proves Theorem~\ref{th:main/friction}(i), further noting that the $C^\infty(A_{P;\delta})$-regularity property follows by applying the sufficient criterion in Lemma~\ref{lem:suff-crit-smooth}.

\begin{lem}
For all $s\ge1$, the $s$-th iterated commutator $\ad_{iA_{P;\delta}}^s(H_0(P))$ extends as an $H_0(P)$-bounded self-adjoint operator with domain $\Dc=\Dc(H_0(P))$.
\end{lem}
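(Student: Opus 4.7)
The plan is to argue by induction on $s\ge 1$, working on the core $\Cc^\fu$ and exploiting that both $H_0(P)$ and $A_{P;\delta}$ preserve the $n$-boson sectors $\Gamma_s^{(n)}(\Hf)$. Given the three-block decomposition $A_{P;\delta}=2D_1+D_2-\tfrac{2}{3+\delta}M_{P;\delta}$, with $D_1$ acting only on $k$-variables and $D_2,M_{P;\delta}$ only on $y$-variables, every iterated commutator can be computed sector by sector, and the key point throughout will be to keep all bounds \emph{uniform in~$n$}.

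For $s=1$, I would compute $[H_0(P),iA_{P;\delta}]$ on $\Cc^\fu$ using the elementary commutation identities $[\db\Gamma(|k|),iD_1]=\db\Gamma(|k|)$, $[\db\Gamma(\xi),iD_2]=\db\Gamma(\xi)$, and, on the $n$-sector in position representation, $[\db\Gamma(\xi),iM_{P;\delta}]=P F_n$ with $F_n:=(\sum_{j=1}^n\partial_j)\widetilde m_{n;\delta}$. Combining these and noting that the remaining cross-commutators vanish, one finds
\[
[H_0(P),iA_{P;\delta}]\,=\,2\db\Gamma(|k|)\,-\,(P-\db\Gamma(\xi))\db\Gamma(\xi)\,+\,\tfrac{1}{3+\delta}\big\{P-\db\Gamma(\xi),PF\big\},
\]
where $F$ denotes multiplication by $F_n$ on each $n$-sector, which is a \emph{bounded} operator on $\Hc^\fu$ thanks to~\eqref{eq:prop-tilde-mn-1}. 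The $H_0(P)$-boundedness then follows from $\db\Gamma(|k|)\le H_0(P)$, $(P-\db\Gamma(\xi))^2\le 2H_0(P)$, and the trivial identity $\db\Gamma(\xi)=P-(P-\db\Gamma(\xi))$.

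For the inductive step, I would posit the structural ansatz that $\ad_{iA_{P;\delta}}^s(H_0(P))$ is, on each $n$-sector, a finite linear combination of terms of the form $\db\Gamma(|k|)\,\phi_n^{(s)}$ or $(P-\db\Gamma(\xi))^\alpha\,\psi_{n,\alpha}^{(s)}$ with $\alpha\in\{0,1,2\}$, where $\phi_n^{(s)}$ and $\psi_{n,\alpha}^{(s)}$ are multiplication operators by smooth functions of $(y_1,\ldots,y_n)$ that are uniformly bounded in~$n$. The next commutator is then computed block by block: commutation with $D_1$ merely regenerates the $\db\Gamma(|k|)$-factor; commutation with $D_2$ scales $\db\Gamma(\xi)$ (turning $(P-\db\Gamma(\xi))$ into $(P-\db\Gamma(\xi))-P$) and acts on the $y$-symbols by the first-order operator $-E:=-\sum_j z_j\partial_j$; and commutation with $M_{P;\delta}$ introduces an additional $(P-\db\Gamma(\xi))$-prefactor together with a new symbol of the form $(\sum_j\partial_j)\phi_n^{(s)}$. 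The upshot is that all resulting terms stay in the claimed form, with uniform $n$-bounds on the new symbols inherited from~\eqref{eq:prop-tilde-mn-1}--\eqref{eq:prop-tilde-mn-2} and from the structural smoothness of $\widetilde m_{n;\delta}$ in the $(1,\ldots,1)$-direction built into the regularization~\eqref{eq:def-tilde-mn-delta}.

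The main obstacle will lie precisely in this uniform-in-$n$ bookkeeping: one must verify that every mixed expression of type $E^a D_y^b\widetilde m_{n;\delta}$ (with $a+b\ge 1$) actually arising in the iteration is uniformly bounded in $n$, which goes slightly beyond the bare statements~\eqref{eq:prop-tilde-mn-1}--\eqref{eq:prop-tilde-mn-2} and requires direct inspection of the explicit signed-maximum regularization via $\chi_\delta$. Once this structural induction is secured, symmetry of each iterated commutator is inherited from the self-adjointness of $H_0(P)$ and $A_{P;\delta}$, and the uniform $H_0(P)$-boundedness on the core $\Cc^\fu$ extends its closure to $\Dc=\Dc(H_0(P))$, completing the proof.
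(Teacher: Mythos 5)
Your proposal follows essentially the same route as the paper: compute the first commutator explicitly on $\Cc^\fu$ sector by sector using $[\db\Gamma(|k|),iD_1]=\db\Gamma(|k|)$, $[\db\Gamma(\xi),iD_2]=\db\Gamma(\xi)$ and $[\db\Gamma(\xi),iM_{P;\delta}]=PM'_{P;\delta}$, reorganize to exhibit the leading $2H_0(P)$ part (your formula agrees with the paper's~\eqref{eq:commut-HC}--\eqref{eq:commut-HC-re}), then iterate. The paper is terse about iterates --- it simply invokes~\eqref{eq:prop-tilde-mn-2} and skips details --- so your structural ansatz and your honest flag about the uniform-in-$n$ bookkeeping of mixed expressions $E^aD^b\widetilde m_{n;\delta}$ are a reasonable expansion of what ``similarly'' means there; that bookkeeping indeed requires a bit more than the bare statements~\eqref{eq:prop-tilde-mn-1}--\eqref{eq:prop-tilde-mn-2}, though it goes through using $[E,D]=-D$ and direct inspection of the regularization. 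One point you gloss over: the final claim that ``$H_0(P)$-boundedness on the core extends its closure to $\Dc=\Dc(H_0(P))$'' is not by itself enough to get a \emph{self-adjoint} operator on $\Dc(H_0(P))$. The paper obtains self-adjointness by reading off the exact form~\eqref{eq:commut-HC-re}, namely $2H_0(P)$ plus an infinitesimal $H_0(P)$-bounded perturbation, and applying the Kato--Rellich theorem; the same structure (leading term $2^sH_0(P)$ plus infinitesimal remainder) should be carried along explicitly in your inductive ansatz so that Kato--Rellich can be invoked at every order.
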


\begin{proof}
For all $n\ge1$, we define the operators $M'_{P,n;\delta}$ and $M''_{P,n;\delta}$ on $\Gamma_s^{(n)}(\Hf)$ as the multiplications with the functions
\begin{eqnarray*}
(k_1,y_1,\ldots,k_n,y_n)&\mapsto& \big(\textstyle\sum_{j=1}^n\partial_j\widetilde m_{n;\delta}\big)(P\cdot y_1,\ldots,P\cdot y_n),\\
(k_1,y_1,\ldots,k_n,y_n)&\mapsto& \big(\textstyle\sum_{j,l=1}^n\partial_{j,l}\widetilde m_{n;\delta}\big)(P\cdot y_1,\ldots,P\cdot y_n),
\end{eqnarray*}
respectively, and we set
\[M_{P;\delta}^\prime:=\bigoplus_{n=1}^\infty M_{P,n;\delta}^\prime,\qquad M_{P;\delta}'':=\bigoplus_{n=1}^\infty M_{P,n;\delta}'',\qquad\text{on $\Hc^\fu=\bigoplus_{n=0}^\infty\Gamma_s^{(n)}(\Hf)$.}\]
A direct computation yields in these terms
\[[\db\Gamma(\nabla_y),M_{P;\delta}]=P M_{P;\delta}',\qquad[\db\Gamma(\nabla_y),M_{P;\delta}']=PM_{P;\delta}''.\]
By definition~\eqref{eq:def-Ckst} of $A_{P;\delta}$, recalling that $\xi=-i\nabla_y$, we compute in the sense of forms on~$\Cc^\fu$,
\begin{multline*}
[H_0(P),iA_{P;\delta}]\,=\,2\db\Gamma(|k|)-\db\Gamma(\xi)\cdot(P-\db\Gamma(\xi))\\
+\tfrac1{3+\delta}P\cdot\Big((P-\db\Gamma(\xi))M_{P;\delta}'+M_{P;\delta}'(P-\db\Gamma(\xi))\Big),
\end{multline*}
which can be reorganized as
\begin{multline}\label{eq:commut-HC}
[H_0(P),iA_{P;\delta}]\,=\,2\db\Gamma(|k|)+(P-\db\Gamma(\xi))^2\\
+\tfrac1{3+\delta}P\cdot\Big((P-\db\Gamma(\xi))(M_{P;\delta}'-\tfrac{3+\delta}2)+(M_{P;\delta}'-\tfrac{3+\delta}2)(P-\db\Gamma(\xi))\Big).
\end{multline}
Alternatively, further using $[\db\Gamma(\nabla_y),M_{P;\delta}']=PM_{P;\delta}''$, and recognizing $H_0(P)$ in the right-hand side, we get
\begin{equation}\label{eq:commut-HC-re}
[H_0(P),iA_{P;\delta}]\,=\,2H_0(P)
+\tfrac2{3+\delta}(M_{P;\delta}'-\tfrac{3+\delta}2)P\cdot(P-\db\Gamma(\xi))
+\tfrac{i}{3+\delta}P^2M_{P;\delta}''.
\end{equation}
As~\eqref{eq:prop-tilde-mn-1} yields $1\le M_{P;\delta}'\le2+\delta$
and $|M_{P;\delta}''|\lesssim_{\chi_\delta}1$, we find for all~$u\in \Cc^\fu$,
\begin{eqnarray*}
\lefteqn{\big\|\tfrac2{3+\delta}(M_{P;\delta}'-\tfrac{3+\delta}2)P\cdot(P-\db\Gamma(\xi))u
\big\|+\big\|\tfrac{i}{3+\delta}P^2M_{P;\delta}''u\big\|}\\
&\hspace{2cm}\lesssim_{\chi_\delta}&|P|\|(P-\db\Gamma(\xi))u\|+P^2\|u\|\\
&\hspace{2cm}\le&|P|\|H_0(P)^\frac12u\|+P^2\|u\|.
\end{eqnarray*}
Combined with~\eqref{eq:commut-HC-re}, this shows that $[H_0(P),iA_{P;\delta}]$ is equal to $2H_0(P)$ up to an infinitesimal perturbation. By the Kato--Rellich theorem, we deduce that the commutator $\ad_{iA_{P;\delta}}(H_0(P))=[H_0(P),iA_{P;\delta}]$ extends as an $H_0(P)$-bounded self-adjoint operator with domain $\Dc=\Dc(H_0(P))$.
Similarly computing iterated commutators and appealing to~\eqref{eq:prop-tilde-mn-2}, the conclusion easily follows; we skip the detail.
\end{proof}

\subsection{Mourre estimate}
We turn to the proof of the Mourre estimate for $H_0(P)$. This amounts to showing that the commutator identity~\eqref{eq:commut-CP'} is essentially preserved. Due to the deformation of commutators, however, we only manage to cover energy intervals above~$\frac1{18}P^2$.
Up to renaming $\e,\delta$, this proves Theorem~\ref{th:main/friction}(ii).

\begin{lem}\label{lem:Mourre-frict-0}
For all $\e>0$, the commutator $[H_0(P),iA_{P;\delta}]$ satisfies the following Mourre estimate on $J_\e:=\big[(\frac{1+\delta}{3+\delta}+\e)^2\frac12P^2,\infty\big)$,
\[\mathds1_{J_\e}(H_0(P))[H_0(P),iA_{P;\delta}]\mathds1_{J_\e}(H_0(P))\,\ge\,\e(\tfrac13+\e)P^2\mathds1_{J_\e}(H_0(P))-P^2\Pi_{\Omega}.\qedhere\]
\end{lem}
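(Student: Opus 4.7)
The plan is to rewrite~\eqref{eq:commut-HC} in a compact form amenable to a simple Cauchy--Schwarz argument, handling the vacuum sector separately. Set $Q:=P-\db\Gamma(\xi)$, $T:=P\cdot Q$, and $\tilde N:=\frac{1}{3+\delta}(M_{P;\delta}'-\tfrac{3+\delta}{2})$. Since the scalar vector $P\in\R^d$ commutes with $\tilde N$, one has $P\cdot\{Q,\tilde N\}=\{T,\tilde N\}$, and~\eqref{eq:commut-HC} collapses to
\[
[H_0(P),iA_{P;\delta}] \,=\, 2\db\Gamma(|k|) + Q^2 + \{T,\tilde N\} \,=\, 2H_0(P) + \{T,\tilde N\}.
\]
By~\eqref{eq:prop-tilde-mn-1}, we have $|\tilde N|\le \alpha_0/2$ on $\bar\Pi_\Omega\Hc^\fu$, where $\alpha_0:=\frac{1+\delta}{3+\delta}\ge\tfrac13$.

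The key step is the form inequality $\{T,\tilde N\}\ge-\mu T^2 -\mu^{-1}\tilde N^2$, valid for any $\mu>0$ as an immediate consequence of $(\sqrt\mu\,T+\mu^{-1/2}\tilde N)^2\ge0$. Combined with the pointwise Cauchy--Schwarz bound $T^2=(P\cdot Q)^2\le P^2 Q^2$ (an inequality of commuting multiplication operators in the $\xi$-variable) and the uniform bound on $\tilde N$ above, and setting $\beta:=1-\mu P^2\in(0,1)$, I obtain on $\bar\Pi_\Omega\Hc^\fu$,
\[
[H_0(P),iA_{P;\delta}]\bar\Pi_\Omega \,\ge\, \Big(2\db\Gamma(|k|) + \beta Q^2 - \tfrac{\alpha_0^2 P^2}{4(1-\beta)}\Big)\bar\Pi_\Omega \,\ge\, \Big(2\beta H_0(P) - \tfrac{\alpha_0^2 P^2}{4(1-\beta)}\Big)\bar\Pi_\Omega.
\]
Since both $H_0(P)$ and $A_{P;\delta}$ commute with the number operator, so does the commutator, and $\bar\Pi_\Omega$ can therefore be moved freely past the spectral projections $\mathds1_{J_\e}(H_0(P))$. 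Inserting the bound $H_0(P)\ge\tfrac12(\alpha_0+\e)^2 P^2$ on $J_\e$ and optimizing over $\beta$ via $1-\beta=\alpha_0/(2(\alpha_0+\e))$ produces the sharp constant $\e(\alpha_0+\e)P^2\ge \e(\tfrac13+\e)P^2$ as required.

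Finally, on the vacuum sector, $A_{P;\delta}\Omega=0$ forces the commutator to vanish identically, while the compensating term $-P^2\Pi_\Omega$ on the right-hand side exactly accommodates the case $\tfrac12P^2\in J_\e$ (for $\e$ small enough, so that $\e(\tfrac13+\e)\le 1$). The main technical subtlety I anticipate is precisely this exceptional treatment of $\Omega$: the bound $|\tilde N|\le\alpha_0/2$ fails there, since $M_{P;\delta}'|_{\Gamma_s^{(0)}(\Hf)}=0$ yields $\tilde N|_{\C\Omega}=-\tfrac12$, so one must split the argument along the $N$-invariant decomposition $\Hc^\fu=\C\Omega\oplus(\C\Omega)^\perp$ and not attempt a uniform bound on the whole Hilbert space. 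Once this structural decoupling is acknowledged, the remaining computation is purely algebraic.
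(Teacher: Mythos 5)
Your proof is correct and follows essentially the same route as the paper: starting from identity~\eqref{eq:commut-HC}, absorbing the mixed term into $2H_0(P)$ by a Young-type inequality on $\C\Omega^\perp$, restricting to the spectral range $J_\e$ and optimizing the Young parameter (your $\beta=1-\tfrac{1}{2\alpha}$ is just a relabeling of the paper's $\alpha$), with the vacuum sector handled separately via the $N$-invariance of the commutator and the compensating term $-P^2\Pi_\Omega$. The only cosmetic gap is that you implicitly cover the case $\tfrac12 P^2\notin J_\e$ (large $\e$), where $\mathds1_{J_\e}(H_0(P))\Pi_\Omega=0$ and the bound on $\C\Omega$ is trivially $0\ge -P^2$, but this does not affect correctness.
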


\begin{proof}
We split the proof into two steps.

\medskip
\step1 Proof that for all $\alpha>0$,
\begin{equation}\label{eq:Hk0-Mourre-pre-0}
[H_0(P),iA_{P;\delta}]\,\ge\,\bar\Pi_\Omega\Big(2(1-\tfrac{1}{2\alpha})H_0(P)
-\tfrac\alpha2\big(\tfrac{1+\delta}{3+\delta}\big)^2P^2\Big)\bar\Pi_\Omega.
\end{equation}
As the vacuum state~$\Omega$ is an eigenvector of $H_0(P)$, it belongs to the kernel of the commutator $[H_0(P),iA_{P;\delta}]$, and it suffices to establish this lower bound~\eqref{eq:Hk0-Mourre-pre-0} on $\Cc^\fu\cap\C\Omega^\bot$. Given $u\in\Cc^\fu\cap\C\Omega^\bot$, starting from identity~\eqref{eq:commut-HC}, we can bound
\begin{multline*}
\big\langle u,[H_0(P),iA_{P;\delta}]u\big\rangle\,\ge\,2\langle u,\db\Gamma(|k|)u\rangle+\langle u,(P-\db\Gamma(\xi))^2u\rangle\\
-\tfrac2{3+\delta}|P|\|(P-\db\Gamma(\xi))u\|\|(M_{P;\delta}'-\tfrac{3+\delta}2)u\|,
\end{multline*}
and thus, recalling that~\eqref{eq:prop-tilde-mn-1} implies $1\le M_{P;\delta}'\le2+\delta$, we get
\begin{equation*}
\big\langle u,[H_0(P),iA_{P;\delta}]u\big\rangle\,\ge\,2\langle u,\db\Gamma(|k|)u\rangle+\langle u,(P-\db\Gamma(\xi))^2u\rangle
-\tfrac{1+\delta}{3+\delta}|P|\|(P-\db\Gamma(\xi))u\|\|u\|.
\end{equation*}
For all $\alpha>0$, Young's inequality then yields
\begin{eqnarray*}
\big\langle u,[H_0(P),iA_{P;\delta}]u\big\rangle&\ge&2\langle u,\db\Gamma(|k|)u\rangle+(1-\tfrac{1}{2\alpha})\langle u,(P-\db\Gamma(\xi))^2u\rangle
-\tfrac\alpha2\big(\tfrac{1+\delta}{3+\delta}\big)^2P^2\|u\|^2\\
&\ge&2(1-\tfrac{1}{2\alpha})\langle u,H_0(P)u\rangle
-\tfrac\alpha2\big(\tfrac{1+\delta}{3+\delta}\big)^2P^2\|u\|^2,
\end{eqnarray*}
that is,~\eqref{eq:Hk0-Mourre-pre-0}.

\medskip
\step2 Conclusion.\\
Given $E\ge0$, applying~\eqref{eq:Hk0-Mourre-pre-0} to $\mathds1_{[E,\infty)}(H_0(P))u$, we get
\begin{multline*}
\Big\langle\mathds1_{[E,\infty)}(H_0(P)) u,[H_0(P),iA_{P;\delta}]\mathds1_{[E,\infty)}(H_0(P))u\Big\rangle\\
\,\ge\,\Big(2E(1-\tfrac{1}{2\alpha})
-\tfrac\alpha2\big(\tfrac{1+\delta}{3+\delta}\big)^2P^2\Big)\|\mathds1_{[E,\infty)}(H_0(P))u\|^2
-P^2\|\Pi_\Omega u\|^2.
\end{multline*}
Hence, optimizing with respect to $\alpha>0$,
\begin{multline*}
\Big\langle\mathds1_{[E,\infty)}(H_0(P))u,[H_0(P),iA_{P;\delta}]\mathds1_{[E,\infty)}(H_0(P))u\Big\rangle\\
\,\ge\,\sqrt{2E}\Big(\sqrt{2E}-\tfrac{1+\delta}{3+\delta}|P|\Big)\|\mathds1_{[E,\infty)}(H_0(P))u\|^2
-P^2\|\Pi_\Omega u\|^2,
\end{multline*}
and the stated Mourre estimate follows.
\end{proof}

\subsection{Regularity of the interaction}
We turn to the regularity of the fiber interaction Hamiltonian $\Phi(\rho)$ with respect to $A_{P;\delta}$, thus establishing Theorem~\ref{th:main/friction}(iii).
While this result would fail for the naïve choice~\eqref{eq:def-try-CP'} of the conjugate, it crucially requires our special definition of regularized signed maximum, and the proof builds mainly on Lemma~\ref{lem:m-function}.

\begin{lem}\label{lem:com-dgamma-bounded}
Let the interaction kernel~$\rho$ satisfy Assumption~{\rm(Reg$_\nu$)} for some~\mbox{$\nu\ge1$}.
Then, for all $0\le s\le\nu$, the $s$-th iterated commutator $\ad^s_{iA_{P;\delta}}(\Phi(\rho))$ extends as a $\db\Gamma(|k|)^{1/2}$-bounded self-adjoint operator.\qedhere
\end{lem}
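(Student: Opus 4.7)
The plan is to decompose $A_{P;\delta}=2D_1+D_2-\tfrac{2}{3+\delta}M_{P;\delta}$ and analyze the commutator of $\Phi(\rho)$ with each of the three summands separately, exploiting standard CCR algebra for the first two and the explicit structure of the regularized signed maximum for the third.

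For the two second-quantized contributions, the identity $[\db\Gamma(d),a^*(f)]=a^*(df)$ gives at once $[2D_1+D_2,\Phi(\rho)]=\Phi((2d_1+d_2)\rho)$, so each such commutation transforms the kernel $\rho$ into a linear combination of $(k\cdot\nabla_k)^\alpha(\xi\cdot\nabla_\xi)^\beta\rho$ in momentum representation, and iterating only produces kernels of this form with $\alpha+\beta$ bounded by the commutator order.

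The crucial step is the commutator with $M_{P;\delta}$. Writing $F_n(y_1,\ldots,y_n):=\widetilde m_{n;\delta}(P\cdot y_1,\ldots,P\cdot y_n)$ in mixed $(k,y)$-coordinates on $\Hf$, a direct computation gives
\begin{equation*}
\bigl([M_{P;\delta},a^*(\rho)]u_n\bigr)(\eta_0,\ldots,\eta_n)\,=\,\tfrac{1}{\sqrt{n+1}}\sum_{j=0}^n\rho(\eta_j)\,\bigl(F_{n+1}(y_0,\ldots,y_n)-F_n(y_0,\ldots,\hat y_j,\ldots,y_n)\bigr)\,u_n(\eta_0,\ldots,\hat\eta_j,\ldots,\eta_n),
\end{equation*}
and similarly for $a(\rho)$. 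This is not itself a creation operator, because the difference depends non-locally on all the $y_i$'s. Lemma~\ref{lem:m-function} nevertheless supplies the uniform-in-$n$ pointwise bound $|F_{n+1}-F_n(\hat j)|\le 2|P\cdot y_j|+1$, and the Cauchy--Schwarz argument underlying the standard estimate $\|a^*(f)u\|^2\le\|f\|_{\Ld^2}^2\|(N+1)^{1/2}u\|^2$ then dominates the operator norm of the above commutator by that of a creation-annihilation operator with kernel bounded pointwise by $(2|P\cdot y|+1)|\rho(k,y)|$; back in $\xi$-representation this corresponds to kernels involving $\rho$ and $\nabla_\xi\rho$.

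Combining these two rules and iterating $s$ times, the operator $\ad_{iA_{P;\delta}}^s(\Phi(\rho))$ is controlled in operator norm by a finite sum of operators of the form $a^*(\tilde\rho)+a(\tilde\rho)$ whose kernels are pointwise dominated by $|(k\cdot\nabla_k)^\alpha(\xi\cdot\nabla_\xi)^\beta\nabla_\xi^\gamma\rho|$ with $\alpha+\beta+\gamma\le s$. The standard Fock-space bound $\|\Phi(f)u\|\lesssim\|(1+|k|^{-1/2})f\|_{\Ld^2}\|(\db\Gamma(|k|)+1)^{1/2}u\|$ then yields $\db\Gamma(|k|)^{1/2}$-boundedness precisely under Assumption~{\rm(Reg$_\nu$)}, while essential self-adjointness of the closed extension follows from the formal symmetry of iterated commutators on the core $\Cc^\fu$. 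The main obstacle is the propagation of Lemma~\ref{lem:m-function} through repeated commutators with $M_{P;\delta}$: each bracketing produces a non-second-quantized, non-local operator on Fock space whose kernel remains uniformly bounded in $n$ only thanks to the higher-order derivative estimates~\eqref{eq:prop-tilde-mn-2} enjoyed by the regularization $\widetilde m_{n;\delta}$. This is precisely where the non-standard construction~\eqref{eq:def-Ckst} pays off, as the naive choice~\eqref{eq:def-try-CP'} would already fail at $s=1$.
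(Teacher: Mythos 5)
Your decomposition into the $D_1,D_2$ (second-quantized) and $M_{P;\delta}$ contributions, the explicit Fock-space kernel formula for $[a^*(\rho),iM_{P;\delta}]$, the use of Lemma~\ref{lem:m-function} to bound the difference by $2|P\cdot y_j|+1$, and the appeal to~\eqref{eq:prop-tilde-mn-2} for the iteration are precisely the ingredients of the paper's proof. The one place you stay at the level of a sketch is the iteration step: the mixed commutators $\ad_{iD_2}\circ\ad_{iM_{P;\delta}}^s$ do not commute, and the paper resolves this with the explicit identities~\eqref{eq:identity-ad-ad-s}, which reduce everything to powers of $\ad_{iM_{P;\delta}}$ applied to transformed kernels plus a bounded remainder $\ad_{iR_{P;\delta}}$ controlled via~\eqref{eq:prop-tilde-mn-2} — you correctly flag that~\eqref{eq:prop-tilde-mn-2} is what makes this work, but do not write down the intertwining relations, so the claim that the kernels are dominated by $|(k\cdot\nabla_k)^\alpha(\xi\cdot\nabla_\xi)^\beta\nabla_\xi^\gamma\rho|$ with $\alpha+\beta+\gamma\le s$ is asserted rather than derived.
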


\begin{proof}
We split the proof into two steps.

\medskip
\step1 Analysis of the first commutator.\\
By definition~\eqref{eq:def-Ckst} of $A_{P;\delta}$, the first commutator can be split as
\begin{equation}\label{eq:commut-Phi(rho)-APdelta-decomp}
[\Phi(\rho),iA_{P;\delta}]=2[\Phi(\rho),i\db\Gamma(d_1)]+[\Phi(\rho),i\db\Gamma(d_2)]-\tfrac{2}{3+\delta}[\Phi(\rho),iM_{P;\delta}],
\end{equation}
and thus, as the first two terms involve second-quantization operators,
\begin{equation*}
[\Phi(\rho),iA_{P;\delta}]=-2\Phi(id_1\rho)-\Phi(id_2\rho)-\tfrac{2}{3+\delta}[\Phi(\rho),iM_{P;\delta}].
\end{equation*}
Standard estimates ensure that the first two terms $\Phi(id_1\rho)$ and $\Phi(id_2\rho)$ are $\db\Gamma(|k|)^{1/2}$-bounded provided that $(1+|k|^{-1/2})d_1\rho$ and $(1+|k|^{-1/2})d_2\rho$ belong to~\mbox{$\Ld^2(\R^q\times\R^d)$}, hence in particular provided that $\rho$ satisfies Assumption~(Reg$_1$).

\medskip\noindent
It remains to estimate the commutator $[\Phi(\rho),iM_{P;\delta}]$.
As $\Phi(\rho)=a^*(\rho)+a(\rho)$, it actually suffices by symmetry to estimate $[a^*(\rho),iM_{P;\delta}]$.
We recall the standard definition of the creation operator: for all $n\ge0$ and $u_n\in\Gamma_s^{(n)}(\Hf)$,
\[\big(a^*(\rho)u_n\big)\big((k_l,\xi_l)_{1\le l\le n+1}\big)\,=\,\tfrac1{\sqrt{n+1}}\sum_{j=1}^{n+1}\rho(k_j,\xi_j)\,u_n\big((k_l,\xi_l)_{l\in\{1,\ldots, n+1\}\setminus\{j\}}\big),\]
or alternatively, using position representation $y=i\nabla_\xi$ in the second variable,
\[\big(a^*(\rho)u_n\big)\big((k_l,y_l)_{1\le l\le n+1}\big)\,=\,\tfrac1{\sqrt{n+1}}\sum_{j=1}^{n+1}\tilde\rho(k_j,y_j)\,u_n\big((k_l,y_l)_{l\in\{1,\ldots, n+1\}\setminus\{j\}}\big),\]
where $\tilde\rho$ stands for the partial inverse Fourier transform $\tilde\rho(k,y):=\int_{\R^d}e^{iy\cdot\xi}\rho(k,\xi)\,\dbar\xi$.
In these terms, the commutator with $M_{P;\delta}$ takes the explicit form
\begin{multline}\label{eq:form-commut-ast-MP}
\Big([a^*(\rho),iM_{P;\delta}]u_n\Big)\big((k_l,y_l)_{1\le l\le n+1}\big)\\
\,=\,\tfrac{1}{\sqrt{n+1}}\sum_{j=1}^{n+1}\Big(i\widetilde m_{n;\delta}\big((P\cdot y_l)_{l\in\{1,\ldots,n+1\}\setminus\{j\}}\big)-i\widetilde m_{n+1;\delta}\big((P\cdot y_l)_{1\le l\le n+1}\big)\Big)\\
\times\,\tilde\rho(k_j,y_j)\,u_n\big((k_l,y_l)_{l\in\{1,\ldots,n+1\}\setminus\{j\}}\big).
\end{multline}
Appealing to Lemma~\ref{lem:m-function} to estimate the difference between $\widetilde m_{n;\delta}$ and $\widetilde m_{n+1;\delta}$, we deduce
\begin{eqnarray*}
\big|[a^*(\rho),iM_{P;\delta}]u_n\big|\,\le\,2|P|\tilde a^*(|y\tilde\rho|)|u_n|+\tilde a^*(|\tilde\rho|)|u_n|,
\end{eqnarray*}
where we use the short-hand notation $\tilde a^*(\tilde\sigma):=a^*(\sigma)$.
Standard estimates then entail that the commutator $[a^*(\rho),iM_{P;\delta}]$
is $\db\Gamma(|k|)^{1/2}$-bounded provided that $(1+|k|^{-1/2})y\tilde\rho$ belongs to $\Ld^2(\R^q\times\R^d)$. This is equivalent to requiring that $(1+|k|^{-1/2})\nabla_\xi\rho$ belongs to $\Ld^2(\R^q\times\R^d)$, which holds in particular provided that $\rho$ satisfies~(Reg$_1$).
The conclusion follows.

\medskip
\step2 Analysis of iterated commutators.\\
As in~\eqref{eq:commut-Phi(rho)-APdelta-decomp}, we start by decomposing the commutator with $iA_{P;\delta}$ in terms of commutators with $iD_1$, $iD_2$, and $iM_{P;\delta}$. Upon iteration, we are then led to estimating products of $\ad_{iD_1}$, $\ad_{iD_2}$, and $\ad_{iM_{P;\delta}}$, applied to $\Phi(\rho)=a^*(\rho)+a(\rho)$.
In line with~\eqref{eq:form-commut-ast-MP}, we argue that such expressions are explicit and thus easily estimated.
By symmetry, as in Step~1, it suffices to consider commutators applied to $a^*(\rho)$.
Iterating~\eqref{eq:form-commut-ast-MP}, we find for all $s\ge0$,
\begin{multline}
\Big(\ad_{iM_{P;\delta}}^s(a^*(\rho))u_n\Big)\big((k_l,y_l)_{1\le l\le n+1}\big)\\
\,=\,\tfrac{1}{\sqrt{n+1}}\sum_{j=1}^{n+1}\Big(i\widetilde m_{n;\delta}\big((P\cdot y_l)_{l\in\{1,\ldots,n+1\}\setminus\{j\}}\big)-i\widetilde m_{n+1;\delta}\big((P\cdot y_l)_{1\le l\le n+1}\big)\Big)^s\\
\times\,\tilde\rho(k_j,y_j)\,u_n\big((k_l,y_l)_{l\in\{1,\ldots,n+1\}\setminus\{j\}}\big).
\end{multline}
Further taking the commutator with $iD_1=i\db\Gamma(d_1)$ and $iD_2=i\db\Gamma(d_2)$, we easily find
\begin{eqnarray}
\ad_{iD_1}\Big(\ad_{iM_{P;\delta}}^s(a^*(\rho))\Big)&=&\ad_{iM_{P;\delta}}^s\Big(a^*(k\cdot\nabla_k\rho)\Big),\label{eq:identity-ad-ad-s}\\
\ad_{iD_2}\Big(\ad_{iM_{P;\delta}}^s(a^*(\rho))\Big)&=&\ad_{iM_{P;\delta}}^s\Big(a^*\big((\xi\cdot\nabla_{\xi}-s)\rho\big)\Big)
-s\,\ad_{iR_{P;\delta}}\Big(\ad_{iM_{P;\delta}}^{s-1}\big(a^*(\rho)\big)\Big),\nonumber
\end{eqnarray}
where the operator $R_{P;\delta}$ in the last term is defined as follows: for all $n\ge1$, we set
\[\widetilde r_{n;\delta}(z_1,\ldots,z_n)\,:=\,\textstyle\sum_{j=1}^nz_j\partial_j\widetilde m_{n;\delta}-\widetilde m_{n;\delta},\]
we define the operator $R_{P,n;\delta}$ on $\Gamma_s^{(n)}(\Hf)$ as the multiplication with the function
\[(k_1,y_1,\ldots,k_n,y_n)\,\mapsto\, \widetilde r_{n;\delta}(P\cdot y_1,\ldots,P\cdot y_n),\]
and we set $R_{P;\delta}:=\bigoplus_{n=1}^\infty R_{P,n;\delta}$ on $\Hc^\fu$. 
In view of~\eqref{eq:prop-tilde-mn-2}, the function $\widetilde r_{n;\delta}$ is bounded uniformly in $n$,
and the term involving $\ad_{iR_{P;\delta}}$ in~\eqref{eq:identity-ad-ad-s} can thus be viewed as a better-behaved lower-order remainder.
Up to such a remainder, identities~\eqref{eq:identity-ad-ad-s}
show that products of $\ad_{iD^1}$, $\ad_{iD^2}$, and $\ad_{iM_{P;\delta}}$, when applied to~$a^*(\rho)$, can be reduced to powers of $\ad_{iM_{P;\delta}}$ up to transforming $\rho$. The conclusion easily follows from this observation and we skip the detail.
\end{proof}

\subsection{Consequences of Mourre estimate}
Given a total momentum $P\ne0$, we turn to the proof of Corollary~\ref{cor:friction}.
By items~(i) and~(iii) in Theorem~\ref{th:main/friction}, the sufficient criterion in Lemma~\ref{lem:suff-crit-smooth} ensures that the coupled fiber Hamiltonian $H_g(P)$ is of class $C^\infty(A_{P;\delta})$ for all~$g$.
Next, by Theorem~\ref{th:main/friction}(iii), for all $\e>0$, Lemma~\ref{lem:pert-Mourre} allows to infer that $H_g(P)$ satisfies a Mourre estimate with respect to~$A_{P;\delta}$ on the energy interval
\[\Big(\tfrac1{18}P^2+\delta+\e+\tfrac{gC_P}\e\,,\,\infty\Big),\]
for some constant $C_{P}$. 
Taking $\delta$ arbitrarily small and optimizing in $\e$, we deduce that~$H_g(P)$ satisfies a Mourre estimate on any compact subinterval of
\[J_{P,g}\,:=\,\big(\tfrac1{18}P^2+\sqrt gC_P\,,\,\infty\big).\]
Moreover, the Mourre estimate is strict outside $K_{P,g}:=\big[\frac12P^2-gC_P,\frac12P^2+gC_P\big]$.
We may then appeal to Theorem~\ref{th:Mourre-spectrum}, which states that $H_g(P)$ has no singular spectrum and at most a finite number of eigenvalues in $J_{P,g}$, and has no eigenvalue in $J_{P,g}\setminus K_{P,g}$.
In order to exclude the existence of eigenvalues in $K_{P,g}$, we appeal to Theorem~\ref{th:instab}, which states the instability of the uncoupled eigenvalue~$\frac12P^2$ provided that Fermi's condition~\eqref{eq:Fermi-cond0} holds.
Altogether, this proves item~(i) of Corollary~\ref{cor:friction}, and item~(ii) follows by further applying Theorem~\ref{th:CGH}.
It remains to make Fermi's condition~\eqref{eq:Fermi-cond0} more explicit for the model at hand, which is the purpose of the following lemma (see also~\cite[Lemma~6.7]{DFS-17}).

\begin{lem}\label{Fermi-fric}
For all $P\ne0$, we have
\begin{multline*}
\lim_{\e\downarrow0}\Big\langle\Omega\,,\,\Phi( \rho )\bar\Pi_\Omega \big( H_0(P) - \tfrac12P^2 - i\e\big)^{-1}\bar\Pi_\Omega \Phi( \rho )\Omega\Big\rangle\\
\,=\,(2\pi)^{-d}\,\pv\int_0^\infty( t- \tfrac12P^2)^{-1}\bigg(\int_{|k|\le t}\int_{\{\xi\,:\,\frac12(P-\xi)^2=t-|k|\}}\tfrac{|\rho(k,\xi)|^2}{\sqrt{(P-\xi)^2+1}}\db\mathcal H_{d-1}(\xi)\dbar k\bigg)\db t\\
+\tfrac i2(2\pi)^{1-d}\int_{|k|\le\frac12P^2}\int_{\{\xi\,:\,\frac12(P-\xi)^2=\frac12P^2-|k|\}}\tfrac{|\rho(k,\xi)|^2}{\sqrt{(P-\xi)^2+1}}\db\mathcal H_{d-1}(\xi)\dbar k,
\end{multline*}
where $\Hc_{d-1}$ stands for the $(d-1)$th-dimensional Hausdorff measure.
In particular, the imaginary part is positive if $\rho$ does not vanish.
\end{lem}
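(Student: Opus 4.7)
The plan is to reduce the matrix element to an explicit one-boson scalar integral, then apply the Sokhotski--Plemelj identity. First, since $a(\rho)\Omega=0$, we have $\Phi(\rho)\Omega=a^*(\rho)\Omega\in\Gamma_s^{(1)}(\Hf)$ with wave function $\rho(k,\xi)$. This vector is orthogonal to $\Omega$, so $\bar\Pi_\Omega$ acts trivially on it; and on the one-boson sector, $H_0(P)$ is the multiplication operator by the symbol $H_0^{(1)}(P;k,\xi):=\tfrac12(P-\xi)^2+|k|$. Therefore the matrix element under the limit reduces to the scalar integral
\[
\iint_{\R^q\times\R^d}\frac{|\rho(k,\xi)|^2}{H_0^{(1)}(P;k,\xi)-\tfrac12 P^2-i\e}\,\dbar k\,\dbar\xi.
\]

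I would then invoke Sokhotski--Plemelj in the form $\lim_{\e\downarrow0}(x-i\e)^{-1}=\pv(1/x)+i\pi\,\delta(x)$ under the integral. Writing $(x-i\e)^{-1}=x/(x^2+\e^2)+i\e/(x^2+\e^2)$, the imaginary part is a standard approximate identity for $\pi\delta(x)$; its limit against $|\rho|^2$ is handled by dominated convergence using the regularity granted by Assumption~{\rm(Reg$_2$)}. The real part converges to a principal value, which is where the real work lies: one needs a cut-off and a single integration by parts along the direction of $\nabla H_0^{(1)}$ transverse to the energy shell $\{H_0^{(1)}=\tfrac12 P^2\}$ in order to absorb the $1/(t-\tfrac12 P^2)$ singularity, and (Reg$_2$) provides exactly the one derivative of $\rho$ needed for this.

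For the explicit form, I would use a layer-cake decomposition based on the energy variable $t:=H_0^{(1)}(P;k,\xi)$, writing
\[
\iint\frac{|\rho|^2}{H_0^{(1)}-\tfrac12 P^2-i\e}\,\dbar k\,\dbar\xi\,=\,\int_0^\infty(t-\tfrac12 P^2-i\e)^{-1}\,\mu(t)\,\db t,\qquad\mu(t):=\iint|\rho|^2\,\delta(H_0^{(1)}-t)\,\dbar k\,\dbar\xi,
\]
and then apply the co-area formula jointly in $(k,\xi)$ to each $\mu(t)$. The joint gradient is $\nabla_{k,\xi}H_0^{(1)}=(k/|k|,-(P-\xi))$, of norm $\sqrt{1+(P-\xi)^2}$, which produces the announced denominator. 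Parametrising the shell by $(|k|,\xi)$ with the constraint $\tfrac12(P-\xi)^2=t-|k|$ (which forces $|k|\le t$) and tracking the $(2\pi)$-factors from $\dbar k\,\dbar\xi=(2\pi)^{-q-d}\,dk\,d\xi$ yields the stated iterated form with prefactor $(2\pi)^{-d}$ for the principal value and $\tfrac12(2\pi)^{1-d}$ (from $\pi\cdot(2\pi)^{-q-d}$) for the imaginary part at the specific energy $t=\tfrac12 P^2$.

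The main obstacle is the rigorous justification of the principal-value limit; once that is in place, the imaginary part and positivity statements are immediate ($|\rho|^2\ge 0$, and for $P\ne 0$ the shell at $\tfrac12 P^2$ has positive surface measure since $|k|$ ranges over the non-trivial ball of radius $\tfrac12 P^2$ with non-degenerate $\xi$-fibre, so $\gamma_P>0$ as soon as $\rho$ does not vanish on this shell). A clean way to perform the justification is to smoothly cut off the integrand away from and near the shell using a function of $H_0^{(1)}-\tfrac12P^2$; the away-from-shell piece is handled by dominated convergence, while the near-shell piece is rewritten in coordinates $(t,\text{tangential})$ on a tubular neighborhood of the shell, so that the $\e\downarrow 0$ limit reduces to the one-dimensional Sokhotski--Plemelj formula on the energy line.
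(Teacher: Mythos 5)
Your proposal is correct and follows essentially the same route as the paper's proof: reduce to the one-boson sector via $\Phi(\rho)\Omega=a^*(\rho)\Omega$, apply the coarea formula to rewrite the integral against the energy variable $t=H_0^{(1)}(P;k,\xi)$, and then invoke the Plemelj formula using the non-degeneracy $|\nabla_{k,\xi}H_0^{(1)}|=\sqrt{(P-\xi)^2+1}$. You supply somewhat more detail on the justification of the principal-value limit than the paper's terse ``the conclusion easily follows,'' but this is amplification rather than a different argument.
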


\begin{proof}
For any $\e>0$, we compute
\begin{eqnarray*}
\lefteqn{\Big\langle\Omega\,,\,\Phi( \rho )\bar\Pi_\Omega\big( H_0(P) - \tfrac12P^2 - i\e\big)^{-1}\bar\Pi_\Omega \Phi( \rho )\Omega\Big\rangle}\\
&=&\Big\langle a^*( \rho )\Omega\,,\,\big( H_0(P) - \tfrac12P^2 - i\e\big)^{-1} a^*( \rho )\Omega\Big\rangle\\
&=&\iint_{\R^q\times\R^d}|\rho(k,\xi)|^2\Big( H_0^{(1)}(P;k,\xi) - \tfrac12P^2 - i\e\Big)^{-1}\dbar k\dbar\xi,
\end{eqnarray*}
where $H_0^{(1)}(P;k,\xi):=\frac12(P-\xi)^2+|k|$ is the symbol of $H_0(P)$ on the single-boson state space.
As this symbol is Lipschitz continuous, the coarea formula yields
\begin{multline*}
\Big\langle\Omega\,,\,\Phi( \rho )\bar\Pi_\Omega \big( H_0(P) - \tfrac12P^2 - i\e\big)^{-1}\bar\Pi_\Omega \Phi( \rho )\Omega\Big\rangle\\
\,=\,(2\pi)^{-q-d}\int_0^\infty\big( t- \tfrac12P^2 - i\e\big)^{-1}\bigg(\int_{\{(k,\xi):H_0^{(1)}(P;k,\xi)=t\}}\tfrac{|\rho(k,\xi)|^2}{|\nabla_{k,\xi} H_0^{(1)}(P;k,\xi)|}\db\mathcal H_{q+d-1}(k,\xi)\bigg)\db t,
\end{multline*}
where $\mathcal H_{q+d-1}$ stands for the $(q+d-1)$th-dimensional Hausdorff measure and where we note that the integrand is summable.
As $|\nabla_{k,\xi}H_0^{(1)}(P;k,\xi)|=\sqrt{(P-\xi)^2+1}$ is non-degenerate,
the conclusion easily follows from the Plemelj formula.
\end{proof}

\section{Translation-invariant massive Nelson model}\label{sec:Nelson}
This section is devoted to the proof of our main results on the translation-invariant Nelson model with massive bosons at small coupling, cf.~\eqref{eq:Nelson-Hilbert}--\eqref{eq:Phi-rho-fiber}. We start by describing the energy-momentum spectrum for uncoupled Hamiltonians, in particular proving Lemma~\ref{lem:spectrum/Nelson}, and we establish some important properties of energy thresholds. Next, we turn to the construction of a conjugate operator in the weak-coupling regime, thus establishing Theorem~\ref{th:main/Nelson}. Note that a suitable modification of our first choice of conjugate will be needed to ensure $C^\infty$-regularity.
The modification procedure is presented in Section~\ref{sec:modif-proc} below, further building on our constructions in Section~\ref{sec:constr-fric-commut} for the quantum friction model, and we believe that it is of independent interest for massive QFT models.

\subsection{Spectrum of uncoupled Hamiltonians}
We start with the proof of Lemma~\ref{lem:spectrum/Nelson}, that is, the characterization of the spectrum of uncoupled fiber Hamiltonians.
More precisely, we establish the following result.

\begin{lem}\label{lem:Sigma0-n}
Consider the translation-invariant Nelson model with massive bosons $m>0$, cf.~\eqref{eq:Nelson-Hilbert}--\eqref{eq:Phi-rho-fiber}.
Given a total momentum $P\in\R^d$, 
the uncoupled fiber Hamiltonian $H_0(P)$ commutes with the number operator $N$ and thus splits as a direct sum~\eqref{eq:Hilbert-ac-decomp}.
There holds
\[H_0(P)\Omega\,=\,\tfrac12P^2\Omega,\]
and for all $n\ge1$ the restriction $H_0^{(n)}(P)=H_0(P)|_{\Gamma_s^{(n)}(\Hf)}$ satisfies
\begin{gather}
\sigma_{\operatorname{ac}}\big(H_0^{(n)}(P)\big)=\big[E_0^{(n)}(P),\infty\big),\label{eq:sac-H0P-n}
\qquad\sigma_{\operatorname{pp}}\big(H_0^{(n)}(P)\big)=\sigma_{\operatorname{sc}}\big(H_0^{(n)}(P)\big)=\varnothing,
\end{gather}
where the $n$-boson energy threshold $E_0^{(n)}(P)$ is given by
\begin{equation}\label{eq:def-EnP}
E_0^{(n)}(P)\,:=\,\tfrac12c(n,P)^2+\sqrt{m^2n^2+(|P|-c(n,P))^2},
\end{equation}
in terms of the unique solution $c(n,P)\in[0,1)$ of the implicit equation
\begin{equation}\label{eq:def-cnP}
c(n,P)\,=\,\frac{|P|-c(n,P)}{\sqrt{m^2n^2+(|P|-c(n,P))^2}}.\qedhere
\end{equation}
\end{lem}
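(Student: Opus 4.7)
The plan is to exploit that $H_0(P)$ commutes with the number operator $N$ and then to analyze each $n$-boson fiber separately as a multiplication operator. Since both $\db\Gamma(k)$ and $\db\Gamma(\omega)$ preserve the boson number, decomposition~\eqref{eq:Hilbert-ac-decomp} is immediate, and on the vacuum sector $\db\Gamma(k)\Omega = \db\Gamma(\omega)\Omega = 0$ yields $H_0(P)\Omega = \tfrac{1}{2}P^2 \Omega$. For $n \geq 1$, identifying $\Gamma_s^{(n)}(\Hf)$ with the symmetric subspace of $\Ld^2(\R^{nd})$ in momentum coordinates, the restriction $H_0^{(n)}(P)$ becomes multiplication by the continuous symmetric symbol $f := H_0^{(n)}(P;\cdot)$ from~\eqref{eq:symbolE0n-0}, so its spectrum coincides with the essential range $[E_0^{(n)}(P), \infty)$.

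To obtain the purity statement in~\eqref{eq:sac-H0P-n}, the key step will be to verify that the critical set $\{\nabla f = 0\}$ is Lebesgue-negligible in $\R^{nd}$. Computing $\nabla_{k_j} f = -(P - \sum_l k_l) + \nabla\omega(k_j)$, criticality forces $\nabla\omega(k_j)$ to be independent of $j$. For $m > 0$, the function $\omega(k) = \sqrt{m^2 + |k|^2}$ is strictly convex (its Hessian has eigenvalues $(m^2 + |k|^2)^{-1/2}$ and $m^2(m^2 + |k|^2)^{-3/2}$, both positive), so $\nabla\omega$ is injective and the condition forces $k_1 = \cdots = k_n =: k_*$; the remaining equation $P - nk_* = \nabla\omega(k_*)$ then has a unique solution by the 1D monotonicity argument below, so the critical set consists of a single point. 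The coarea formula will then ensure that the preimage $f^{-1}(B)$ of any Lebesgue-null $B \subset \R$ is Lebesgue-null in $\R^{nd}$, so the spectral measure of every $\psi \in \Gamma_s^{(n)}(\Hf)$, being the pushforward of $|\psi|^2 dk$ by $f$, is absolutely continuous with respect to Lebesgue measure; this rules out both point and singular continuous parts.

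It remains to identify $E_0^{(n)}(P) = \min f$ with formula~\eqref{eq:def-EnP}. By strict convexity of $\omega$ and Jensen's inequality, for fixed total momentum $q := \sum_j k_j$ the sum $\sum_j \omega(k_j)$ is minimized uniquely at $k_j = q/n$ with value $\sqrt{n^2 m^2 + |q|^2}$, reducing the problem to
\[
E_0^{(n)}(P) \,=\, \inf_{q \in \R^d}\Big[\tfrac{1}{2}(P - q)^2 + \sqrt{n^2 m^2 + |q|^2}\Big].
\]
Rotational invariance forces the minimizing $q$ to be parallel to $P$; parameterizing $q = (|P| - c)\,P/|P|$ (with any unit direction in place of $P/|P|$ if $P=0$) and imposing the first-order condition yields exactly the fixed-point equation~\eqref{eq:def-cnP}. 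Strict convexity of the resulting 1D functional in $c$ ensures uniqueness of the minimizer, while direct inspection of the fixed-point equation bounds its right-hand side by $1$ in absolute value, forcing $c(n,P) \in [0,1)$; moreover the minimizer automatically satisfies $c \in [0,|P|]$ by sign analysis of the first-order condition. Substituting back into the objective recovers~\eqref{eq:def-EnP}. The most delicate ingredient is the coarea argument for pure absolute continuity of the $n$-boson fibers; the threshold computation reduces to convexity and 1D analysis, and is routine.
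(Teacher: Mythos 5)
Your proof is correct and takes essentially the same route as the paper: identify each $n$-boson restriction as a multiplication operator by the symbol~\eqref{eq:symbolE0n-0}, read off the spectrum from its essential range, and compute the minimum by reducing (via convexity and rotational symmetry) to a one-dimensional problem yielding~\eqref{eq:def-cnP}--\eqref{eq:def-EnP}. The only difference is one of detail: where the paper asserts pure absolute continuity and the minimizer ``obviously''/``by a straightforward computation,'' you spell out the coarea/critical-set argument (using strict convexity of $\omega$ for $m>0$ to show the critical set is a single point) and the Jensen-inequality reduction, which is a welcome elaboration but not a different method.
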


\begin{proof}
In view of~\eqref{eq:Hilbert-ac-decomp}, it suffices to analyze separately the spectrum of restrictions on each $n$-boson state space.
For $n\ge1$, the restriction $H_0^{(n)}(P)$ is a multiplication operator in momentum coordinates,
with symbol
\begin{equation}\label{eq:symbolE0n}
H_0^{(n)}(P;k_1,\ldots,k_n)\,:=\,
\tfrac12\Big(P-\sum_{j=1}^nk_j\Big)^2+\sum_{j=1}^n\omega(k_j).
\end{equation}
Its spectrum is thus absolutely continuous and coincides with the essential image of this symbol,
which in this case obviously takes the form~\eqref{eq:sac-H0P-n} with
\[E_0^{(n)}(P)\,:=\,\min_{k_1,\ldots,k_n\in\R^d}H_0^{(n)}(P;k_1,\ldots,k_n).\]
A straightforward computation shows that this minimum is attained at
\begin{equation}\label{eq:min-symbol}
k_1=\ldots=k_n=k_\star(n,P)\,:=\,
\tfrac1n\big(|P|-c(n,P)\big)\tfrac{P}{|P|},
\end{equation}
where $c(n,P)\in[0,1)$ is the unique solution of equation~\eqref{eq:def-cnP}. The minimum of the symbol is thus indeed given by~\eqref{eq:def-EnP}.
\end{proof}

Next, we establish some important properties of energy thresholds.
It follows from definitions~\eqref{eq:def-EnP}--\eqref{eq:def-cnP} that $P\mapsto c(n,P)$ and $P\mapsto E_0^{(n)}(P)$ are radially symmetric for all~$n\ge1$. In addition, we find
\begin{equation}\label{eq:prop-cnP}
c(n,0)=0,\qquad\text{and}\qquad 0\,<\, c(n,P)\,<\,|P|\wedge1\qquad\text{for $P\ne0$}.
\end{equation}
Other important properties are collected in the following statement.
Item~(iii) provides a simple criterion to compare the uncoupled eigenvalue $\frac12P^2$ to energy thresholds, which proves in particular the last part of Lemma~\ref{lem:spectrum/Nelson}.

\begin{lem}
\label{lem:mono-increments}
Given boson mass $m>0$, let energy thresholds be defined in~\eqref{eq:def-EnP}--\eqref{eq:def-cnP}.
\begin{enumerate}[(i)]
\item For all $n\ge1$, we have 
\begin{equation*}
c(n,P)\uparrow 1\qquad\text{and}\qquad E_0^{(n)}(P)=|P|-\tfrac12+o(1)\qquad\text{as $|P|\uparrow \infty$.}
\end{equation*}
\item For all $P$, we have
\begin{equation*}
c(n,P)\downarrow0\qquad\text{and}\qquad E_0^{(n)}(P)=mn+o(1)\qquad\text{as $n\uparrow\infty$}.
\end{equation*}
\item For all $n\ge1$, there exists a unique value $|P^{(n)}|$ such that the following equivalence holds:
\begin{equation}\label{eq:equiv-Pn}
\tfrac12P^2\ge E_0^{(n)}(P)\qquad\Longleftrightarrow\qquad |P|\ge|P^{(n)}|.
\end{equation}
In addition, this value $|P^{(n)}|$ is increasing in $n$ and we have $|P^{(1)}|=|P_\star|>1$, where~$|P_\star|$ is the critical value defined in Lemma~\ref{lem:spectrum/Nelson}.
\smallskip\item Energy increments satisfy the following monotonicity properties,
\[\begin{array}{rlll}
\text{for all $n$}:&\quad 0\,<\,E_0^{(n+1)}(P)-E_0^{(n)}(P)~\downarrow~ 0&\quad\text{as}\quad |P|\uparrow\infty,\\
\vspace{-0.3cm}&&\\
\text{for all $P$}:&\quad 0\,<\,E_0^{(n+1)}(P)-E_0^{(n)}(P)~\uparrow~ m&\quad\text{as}\quad n\uparrow\infty.
\end{array}\qedhere\]
\end{enumerate}
\end{lem}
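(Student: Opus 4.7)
The plan is to base everything on the variational characterization
\[E_0^{(n)}(P)\,=\,\min_{c\in[0,|P|]}G_n(c),\qquad G_n(c)\,:=\,\tfrac12 c^2+\sqrt{m^2n^2+(|P|-c)^2},\]
whose unique interior minimizer is $c(n,P)$ by strict convexity of $G_n$. Two identities drive the proof: by the envelope theorem, $\partial_{|P|}E_0^{(n)}(P)=c(n,P)$; and the implicit equation defining $c(n,P)$ yields $0<c(n,P)<|P|\wedge 1$ for $P\ne 0$, and moreover $n\mapsto c(n,P)$ is strictly decreasing (since the right-hand side of that equation is strictly decreasing in $n$ at fixed $c$). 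Items~(i) and~(ii) then follow by asymptotic inspection of this equation. For~(i), assuming $c(n,P)$ stays below $1-\e$ as $|P|\uparrow\infty$ leads to a contradiction, so $c(n,P)\uparrow 1$; a second-order expansion actually gives $1-c(n,P)=O(|P|^{-2})$, and substitution yields $E_0^{(n)}(P)=(|P|-c)+\tfrac12 c^2+O(|P|^{-1})=|P|-\tfrac12+o(1)$. For~(ii), at fixed $P$ the right-hand side of the implicit equation is bounded by $|P|/(mn)\to 0$, forcing $c(n,P)\downarrow 0$, and then substitution yields $E_0^{(n)}(P)=mn+o(1)$.

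For~(iii), set $f_n(P):=E_0^{(n)}(P)-\tfrac12 P^2$. The envelope identity gives $\partial_{|P|}f_n(P)=c(n,P)-|P|<0$ for $P\ne 0$, so $f_n$ is strictly decreasing in $|P|$ on $(0,\infty)$; since $f_n(0)=mn>0$ and $f_n(P)\to-\infty$ by~(i), there is a unique crossing $|P^{(n)}|$. The identity $|P^{(1)}|=|P_\star|$ is just the defining property of $|P_\star|$ in Lemma~\ref{lem:spectrum/Nelson}, and $|P_\star|>1$ follows from a short explicit evaluation of $f_1(1)$ using the implicit equation at $|P|=1$. The monotonicity $|P^{(n+1)}|>|P^{(n)}|$ will then follow from the strict positivity of energy increments established in~(iv): evaluating $f_{n+1}$ at $|P^{(n)}|$ produces a strictly positive value, so by the strict decrease of $f_{n+1}$ its zero must lie further to the right.

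For~(iv), positivity is the cheap direction: $E_0^{(n+1)}(P)\ge G_{n+1}(c(n,P))>G_n(c(n,P))=E_0^{(n)}(P)$ since $m>0$. The upper bound $E_0^{(n+1)}(P)-E_0^{(n)}(P)\le m$ follows from the Minkowski-type inequality $\sqrt{m^2(n+1)^2+y^2}\le m+\sqrt{m^2n^2+y^2}$, applied with $y:=|P|-c(n,P)$ and combined with $E_0^{(n+1)}(P)\le G_{n+1}(c(n,P))$. Monotonicity of the increment in $|P|$ reduces, via the envelope identity, to $c(n+1,P)<c(n,P)$, already observed above. The main obstacle is the monotonicity of the increment in $n$ for fixed $P$, i.e., discrete convexity of $n\mapsto E_0^{(n)}(P)$. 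I would resolve this by noting that $G_n(c)$ is jointly convex in $(n,c)$: indeed $\sqrt{m^2n^2+(|P|-c)^2}$ is the Euclidean norm of the vector $(mn,|P|-c)$, which is affine in $(n,c)$, hence convex, and $\tfrac12 c^2$ is convex. Since partial minimization of a jointly convex function preserves convexity, the map $n\mapsto E_0^{(n)}(P)$ is convex on $[0,\infty)$, and its increments are therefore nondecreasing, with limit $m$ by~(ii).
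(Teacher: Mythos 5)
Your proposal is correct in substance and, for item~(iv), follows a genuinely different and more structural route than the paper. The paper treats $n$ as a continuous variable, differentiates $E_0^{(n)}(P)$ twice in $n$, writes the increment as an integral $m\int_n^{n+1}\frac{mr}{\sqrt{m^2r^2+(|P|-c(r,P))^2}}\,dr$, and then shows the integrand is increasing in $r$ by a somewhat delicate computation of $\partial_r c(r,P)$. You instead observe that $(n,c)\mapsto G_n(c)=\tfrac12c^2+\big\|(mn,|P|-c)\big\|$ is jointly convex (norm composed with an affine map, plus a convex term) and invoke the stability of convexity under partial minimization to conclude that $n\mapsto E_0^{(n)}(P)$ is convex, hence has nondecreasing increments. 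This is cleaner and sidesteps the paper's explicit $\partial_n c$ computation entirely. Your envelope-theorem viewpoint also gives a slicker derivation of $\partial_{|P|}E_0^{(n)}(P)=c(n,P)$, which the paper obtains by direct differentiation of the implicit definition. Finally, your treatment of~(iii) differs in a minor way: you deduce $E_0^{(n+1)}>E_0^{(n)}$ directly from the pointwise strict domination $G_{n+1}>G_n$, whereas the paper reformulates $E_0^{(n)}(P)=F_P(c(n,P))$ with $F_P(c)=\tfrac12c^2+\tfrac{|P|}{c}-1$ and uses monotonicity of $F_P$ together with $c(n,P)\downarrow$. Both are fine.

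One small slip: in the positivity argument for~(iv) you wrote $E_0^{(n+1)}(P)\ge G_{n+1}(c(n,P))$, but since $E_0^{(n+1)}(P)$ is the \emph{minimum} of $G_{n+1}$, the correct inequality is $E_0^{(n+1)}(P)\le G_{n+1}(c(n,P))$, which breaks your chain. The fix is immediate: either evaluate at the other minimizer, $E_0^{(n+1)}(P)=G_{n+1}(c(n+1,P))>G_n(c(n+1,P))\ge E_0^{(n)}(P)$, or simply note that $G_{n+1}>G_n$ pointwise (as $m>0$) forces $\min_c G_{n+1}>\min_c G_n$. Also, since you rely on $c(n,P)<|P|$ to get $\partial_{|P|}f_n<0$ in~(iii), it is worth recording why this holds for all $P\ne0$ (for $|P|<1$ it is exactly the bound $c(n,P)<|P|\wedge 1$; for $|P|\ge1$ use $c(n,P)<1\le|P|$), which is the content of~\eqref{eq:prop-cnP} in the paper. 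With these two touch-ups the argument is complete.
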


\begin{proof}
Items~(i) and~(ii) are direct consequences of definitions~\eqref{eq:def-EnP}--\eqref{eq:def-cnP}, so it remains to establish~(iii) and~(iv).
We split the proof into two steps.

\medskip
\step1 Proof of~(iii).\\
For all $n\ge1$, starting from~\eqref{eq:def-EnP} and differentiating in $|P|$, a direct computation yields
\begin{multline*}
\tfrac{\partial}{\partial |P|}E_0^{(n)}(P)\\
\,=\,\frac{|P|-c(n,P)}{\sqrt{m^2n^2+(|P|-c(n,P))^2}}
+\bigg(c(n,P)-\frac{|P|-c(n,P)}{\sqrt{m^2n^2+(|P|-c(n,P))^2}}\bigg)\tfrac{\partial}{\partial |P|} c(n,P),
\end{multline*}
and thus, by definition of $c(n,P)$, cf.~\eqref{eq:def-cnP},
\begin{equation}\label{eq:diff-Sigma0nP}
\tfrac{\partial}{\partial |P|}E_0^{(n)}(P)
\,=\,\frac{|P|-c(n,P)}{\sqrt{m^2n^2+(|P|-c(n,P))^2}}\,=\,c(n,P).
\end{equation}
In view of~\eqref{eq:prop-cnP}, this implies that the map $|P|\mapsto \frac12|P|^2-E_0^{(n)}(P)$ is increasing. Moreover, the asymptotic behavior in~(i) ensures that this map is unbounded as $|P|\uparrow\infty$, and we see from~\eqref{eq:def-EnP} that it takes the value $-mn<0$ at $P=0$.
This entails that for all $n\ge1$ there exists a unique value $|P^{(n)}|$ such that
\begin{equation}\label{eq:def-Pn}
\tfrac12|P^{(n)}|^2-E_0^{(n)}(P^{(n)})\,=\,0,
\end{equation}
and the claimed equivalence~\eqref{eq:equiv-Pn} then follows by monotonicity.

\medskip\noindent
It remains to check that the value $|P^{(n)}|$ is increasing in $n$.
By~\eqref{eq:def-Pn}, this follows provided that we show that energy thresholds are increasing in $n$ for any fixed $P\ne0$,
\begin{equation}\label{eq:increase-E0n-inn}
E_0^{(n+1)}(P)-E_0^{(n)}(P)\,>\,0.
\end{equation}
For that purpose, we start by noting that definitions~\eqref{eq:def-EnP}--\eqref{eq:def-cnP} yield
\[E_0^{(n)}(P)\,=\,F_P(c(n,P)),\qquad F_P(c)\,:=\,\tfrac12c^2+\tfrac{|P|}c-1.\]
As~\eqref{eq:prop-cnP} ensures $c(n,P)<|P|\wedge1$ for all $n$,
as the function $F_P$ is decreasing on $(-\infty,|P|\wedge1)$,
and as~(ii) states that $c(n,P)$ is decreasing in $n$,
the claim~\eqref{eq:increase-E0n-inn} follows.
Alternatively, this result follows from identity~\eqref{eq:form-KNk} below.

\medskip
\step2 Proof of~(iv).\\
We first investigate the behavior in $|P|$ for fixed $n$. From~\eqref{eq:diff-Sigma0nP} we deduce
\[\tfrac{\partial}{\partial|P|}\big(E_0^{(n+1)}(P)-E_0^{(n)}(P)\big)\,=\,c(n+1,P)-c(n,P),\]
which is negative in view of~(ii). In addition, the asymptotic behavior in~(i) ensures that energy increments tend to $0$ as $|P|\uparrow\infty$.

\medskip\noindent
We turn to the behavior in $n$ for fixed $P$.
As definitions~\eqref{eq:def-EnP}--\eqref{eq:def-cnP} make sense for any $n\in(0,\infty)$, we may treat $n$ as a continuous variable.
Then starting from~\eqref{eq:def-EnP} and differentiating in~$n$, we find
\begin{multline*}
\tfrac{\partial}{\partial n}E_0^{(n)}(P)\\
\,=\,\frac{m^2n}{\sqrt{m^2n^2+(|P|-c(n,P))^2}}
+\bigg(c(n,P)-\frac{|P|-c(n,P)}{\sqrt{m^2n^2+(|P|-c(n,P))^2}}\bigg)\tfrac{\partial}{\partial n} c(n,P),
\end{multline*}
and thus, by definition of $c(n,P)$, cf.~\eqref{eq:def-cnP},
\begin{equation*}
\tfrac{\partial}{\partial n}E_0^{(n)}(P)\,=\,\frac{m^2n}{\sqrt{m^2n^2+(|P|-c(n,P))^2}}.
\end{equation*}
This allows to write energy increments as
\begin{equation}\label{eq:form-KNk}
E_0^{(n+1)}(P)-E_0^{(n)}(P)\,=\,m\int_n^{n+1}\!\frac{mr}{\sqrt{m^2r^2+(|P|-c(r,P))^2}}\,\db r,
\end{equation}
which entails in particular for all $n$,
\[0\,<\,E_0^{(n+1)}(P)-E_0^{(n)}(P)\,\le\,m,\qquad\lim_{n\uparrow\infty}\big(E_0^{(n+1)}(P)-E_0^{(n)}(P)\big)= m.\]
It remains to check that energy increments are increasing in $n$.
For that purpose, we further compute
\begin{multline}\label{eq:further-diff-n}
\tfrac{\partial}{\partial n}\frac{mn}{\sqrt{m^2n^2+(|P|-c(n,P))^2}}\\
\,=\,\frac{m(|P|-c(n,P))}{(m^2n^2+(|P|-c(n,P))^2)^\frac32}\Big(|P|-c(n,P)+n\tfrac{\partial}{\partial n} c(n,P)\Big).
\end{multline}
Differentiating the definition~\eqref{eq:def-cnP} of $c(n,P)$ with respect to $n$, we find after straightforward simplifications,
\[\tfrac\partial{\partial n}c(n,P)\,=\,-\frac{m^2n^2}{(m^2n^2+(|P|-c(n,P))^2)^\frac32}\tfrac\partial{\partial n}c(n,P)-\frac{m^2n(|P|-c(n,P))}{(m^2n^2+(|P|-c(n,P))^2)^\frac32},\]
and thus,
\begin{equation}\label{eq:diff-cnP}
\tfrac\partial{\partial n}c(n,P)\,=\,-\tfrac1n(|P|-c(n,P))\bigg(1+\frac{(m^2n^2+(|P|-c(n,P))^2)^\frac32}{m^2n^2}\bigg)^{-1}.
\end{equation}
This entails
\[\tfrac\partial{\partial n}c(n,P)\,>\,-\tfrac1n(|P|-c(n,P)),\]
so that~\eqref{eq:further-diff-n} becomes
\begin{equation*}
\tfrac{\partial}{\partial n}\frac{mn}{\sqrt{m^2n^2+(|P|-c_n(P))^2}}
\,>\,0.
\end{equation*}
Combined with~\eqref{eq:form-KNk}, this proves that the map $n\mapsto E_0^{(n+1)}(P)-E_0^{(n)}(P)$ is increasing, and the conclusion follows.
\end{proof}

\subsection{A first construction of conjugate operator}\label{sec:conjug/Nelson}
We turn to the construction of a conjugate operator for the uncoupled fiber Hamiltonian $H_0(P)$.
At first, a conjugate is constructed here as a second-quantization operator, which we shall subsequently modify in Section~\ref{sec:modif-proc} to improve on its regularity properties.

In case of massive bosons, as energy thresholds satisfy $E_0^{(n)}(P)\uparrow\infty$, cf.~Lemma~\ref{lem:mono-increments}(iv), it suffices to construct a conjugate and prove a Mourre estimate separately on each energy interval
\[I_n(P)\,:=\,\big[E_0^{(n)}(P),E_0^{(n+1)}(P)\big),\]
and on this interval we only need to compute commutators on state spaces with at most~$n$ bosons.
To motivate our choice of conjugate operator in this setting, we first
recall that on the $n$-boson state space the uncoupled fiber Hamiltonian has symbol
\begin{equation}\label{eq:kinetic-energy}
H_0^{(n)}(P;k_1,\ldots,k_n)\,=\,\tfrac12\Big(P-\sum_{j=1}^nk_j\Big)^2+\sum_{j=1}^n\omega(k_j),
\end{equation}
which attains a unique minimum at
\begin{equation}\label{eq:min-symbol-re}
k_1=\ldots=k_n=k_\star(n,P),
\end{equation}
cf.~\eqref{eq:min-symbol}. By convexity, a natural choice of conjugate on $\Gamma_s^{(n)}(\Hf)$ is thus given by the generator of dilations around this minimum,
\begin{equation}\label{eq:def-APn-Gn}
\sum_{j=1}^n\tfrac{i}{2}\Big(\big(k_j-k_\star(n,P)\big)\cdot\nabla_{k_j}+\nabla_{k_j}\cdot\big(k_j-k_\star(n,P)\big)\Big)\qquad\text{on $\Gamma_s^{(n)}(\Hf)$}.
\end{equation}
We could consider the operator that coincides with this choice on $\Gamma_s^{(n)}(\Hf)$ for all $n$, but it would not be a second-quantization operator
and would thus cause major issues like the lack of regularity of the fiber interaction Hamiltonian~$\Phi(\rho)$, just as for~\eqref{eq:def-try-CP'} in the quantum friction model.
Instead, when focussing on the energy interval~$I_n(P)$, we choose to define the following $n$-dependent second-quantization operator,
\begin{equation}\label{eq:def-APn}
A_{P,n}^\circ\,:=\,\db\Gamma(a_{P,n}^\circ),\qquad a_{P,n}^\circ\,:=\,\tfrac{i}2\Big(\big(k-k_\star(n,P)\big)\cdot \nabla_k+\nabla_k\cdot\big(k-k_\star(n,P)\big)\Big).
\end{equation}
This coincides with~\eqref{eq:def-APn-Gn} on the $n$-boson state space, but for $\ell<n$ bosons it corresponds to the generator of dilations around $k_1=\ldots=k_\ell=k_\star(n,P)$. Although this choice may seem inadequate, we shall show that it is precisely compensated by the fact that the energy interval $I_n(P)$ is further away from the minimum of the symbol for~$\ell<n$ bosons. Based on this observation, a Mourre estimate with respect to $A_{P,n}^\circ$ will indeed be established on~$I_n(P)$ as long as the latter is not below the eigenvalue $\frac12P^2$.

We emphasize that the above definition~\eqref{eq:def-APn} of the conjugate operator is quite different from previous choices in the literature~\cite{Rasmussen-10,MR13}. Indeed, we consider here boson momenta \mbox{$k_j-k_\star(n,P)$} as measured in the reference frame minimizing the total kinetic energy~\eqref{eq:kinetic-energy}, while in~\cite{Rasmussen-10,MR13} the starting point is instead to consider relative boson group velocities~$\nabla_{k_j}H_0^{(n)}(P;k_1,\ldots,k_n)$.
Our new choice appears particularly adapted to the problem and makes it possible to investigate for the first time the essential spectrum above the two-boson energy threshold in the weak-coupling regime.

Before turning to the proof of a Mourre estimate, we investigate properties of the above-defined conjugate operator $A_{P,n}^\circ$.
In particular, item~(ii) states the $C^2$-regularity of the uncoupled fiber Nelson Hamiltonian $H_0(P)$.
We emphasize that this limited regularity is optimal, cf.~\cite[Section~2.2]{MR13}: it comes from the fact that $A_{P,n}^\circ$ is a dilation around a point at a nontrivial distance from the origin, which entails that the commutator $[H_0(P),A_{P,n}^\circ]$ is only $NH_0(P)^{1/2}$-bounded, hence $H_0(P)^{3/2}$-bounded, but not $H_0(P)$-bounded.
In applications, this would prohibit to use the full power of Mourre's theory: results like Theorem~\ref{th:CGH}, for instance, are not available without stronger regularity. This issue will be resolved in Section~\ref{sec:modif-proc} below by a suitable modification of the conjugate operator $A_{P,n}^\circ$.

\begin{samepage}
\begin{lem}\label{lem:limit-reg-C2}$ $
\begin{enumerate}[(i)]
\item The conjugate operator $A_{P,n}^\circ$ is essentially self-adjoint on $\Cc^\fu$ and its closure generates a unitary group that commutes with the number operator and leaves the domain~$\Dc$ of fiber Hamiltonians~\eqref{eq:dom-HP-Nelson} invariant.
\smallskip\item The fiber Hamiltonian $H_0(P)$ is of class $C^2(A_{P,n}^\circ)$.
\smallskip\item Let the interaction kernel $\rho$ belong to $H^\nu(\R^d)$ with $\langle k\rangle^\nu\nabla^\nu\rho\in\Ld^2(\R^d)$ for some $\nu\ge1$.
Then, for all $0\le s\le\nu$, the $s$-th iterated commutator $\ad_{iA_{P,n}^\circ}^s(\Phi(\rho))$ extends as an~$N^{1/2}$-bounded self-adjoint operator.
\qedhere
\end{enumerate}
\end{lem}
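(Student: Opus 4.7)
The plan is to reduce all three items to one-particle statements on the single-boson space $\Hf$ via standard second-quantization manipulations, exploiting the mass gap $\db\Gamma(\omega)\ge mN$ to absorb any number-operator factor into a power of $H_0(P)$.

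For item~(i), I would first identify the unitary group generated by $a_{P,n}^\circ$ on $\Hf$: as the generator of dilations centered at $k_\star(n,P)$, it exponentiates to the explicit action $u(k)\mapsto e^{td/2}u\big(e^t(k-k_\star(n,P))+k_\star(n,P)\big)$, and second quantization then yields $e^{itA_{P,n}^\circ}=\Gamma(e^{ita_{P,n}^\circ})$, which has $\Cc^\fu$ as an invariant core and commutes with $N$ by construction. Essential self-adjointness of $A_{P,n}^\circ$ follows e.g.\ from Nelson's analytic vector theorem, or alternatively from the explicit form of the group. To check that $\Dc=\Dc(\db\Gamma(k)^2)\cap\Dc(\db\Gamma(\omega))$ is preserved, it suffices to conjugate $\db\Gamma(\omega)$ and $\db\Gamma(k)^2$ by the group and use the bounds $\omega(e^{-t}(k-k_\star(n,P))+k_\star(n,P))\lesssim_t \omega(k)+1$ and analogously for $|\cdot|^2$, together with the mass gap, to deduce $H_0(P)$-boundedness of the conjugated operators.

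For item~(ii), the one-particle commutators read $[k_\ell,ia_{P,n}^\circ]=k_\ell-(k_\star(n,P))_\ell$ and $[\omega(k),ia_{P,n}^\circ]=(k-k_\star(n,P))\cdot\nabla\omega(k)$. Second quantizing and using Leibniz then gives
\[
[H_0(P),iA_{P,n}^\circ]\,=\,(P-\db\Gamma(k))\cdot\big(k_\star(n,P)N-\db\Gamma(k)\big)+\db\Gamma\big((k-k_\star(n,P))\cdot\nabla\omega\big).
\]
Since $|(k-k_\star(n,P))\cdot\nabla\omega(k)|\lesssim\omega(k)+1$, the second term is $\db\Gamma(\omega)$-bounded and hence $H_0(P)$-bounded. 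The first term is controlled by $\|(P-\db\Gamma(k))u\|\,\|(k_\star(n,P)N-\db\Gamma(k))u\|$, which via $\|(P-\db\Gamma(k))u\|\lesssim\|H_0(P)^{1/2}u\|+\|u\|$ and the mass-gap bound $\|Nu\|\le m^{-1}\|H_0(P)u\|$ yields $H_0(P)^{3/2}$-boundedness. Iterating the same computation once more produces the double commutator as a similar polynomial expression, bounded in operator norm by $H_0(P)^2$. Both iterated commutators thus extend as continuous sesquilinear forms on $\Dc(H_0(P))$ with the graph-norm topology, which yields the $C^2(A_{P,n}^\circ)$ property by the standard criterion recalled in Appendix~\ref{app:Mourre}.

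For item~(iii), the Weyl-type identity $e^{it\db\Gamma(a)}\Phi(f)e^{-it\db\Gamma(a)}=\Phi(e^{ita}f)$, valid for any self-adjoint $a$ on $\Hf$ and $f\in\Dc(a)$, gives $[\Phi(\rho),iA_{P,n}^\circ]=\Phi(L_{P,n}\rho)$, where $L_{P,n}:=-ia_{P,n}^\circ=\tfrac12\big((k-k_\star(n,P))\cdot\nabla_k+\nabla_k\cdot(k-k_\star(n,P))\big)$ is a real first-order differential operator on $\Hf$. Iterating yields $\ad_{iA_{P,n}^\circ}^s(\Phi(\rho))=\Phi(L_{P,n}^s\rho)$, which is $N^{1/2}$-bounded as soon as $L_{P,n}^s\rho\in\Ld^2(\R^d)$. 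Since $L_{P,n}^s\rho$ is a finite linear combination of terms of the form $k^\alpha\nabla^\beta\rho$ with $|\alpha|\le|\beta|\le s$, this last condition is ensured by the hypothesis $\rho\in H^\nu(\R^d)$ with $\langle k\rangle^\nu\nabla^\nu\rho\in\Ld^2(\R^d)$ for all $s\le\nu$. The main technical point will be the $N$-factor appearing in the first commutator in~(ii), which is responsible for the degradation from $H_0(P)$-bounded to $H_0(P)^{3/2}$-bounded and ultimately caps the regularity at $C^2$; this reflects the fact that the dilation is centered at $k_\star(n,P)\ne 0$ and is precisely the obstruction that Section~\ref{sec:modif-proc} will address.
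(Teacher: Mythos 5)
Your strategy coincides with the paper's: second-quantize the one-particle generator of dilations centered at $k_\star(n,P)$, use the mass gap to absorb $N$-factors, and compute commutators with $\Phi(\rho)$ via the Weyl relation. Items (i) and (iii) are in order, modulo the irrelevant sign convention in the exponent of your explicit formula for $e^{ita_{P,n}^\circ}$, and your formula $[H_0(P),iA_{P,n}^\circ]=(P-\db\Gamma(k))\cdot(k_\star(n,P)N-\db\Gamma(k))+\db\Gamma((k-k_\star(n,P))\cdot\nabla\omega)$ is correct.

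The one genuine gap is in the last step of item (ii). You correctly compute that $\ad_{iA_{P,n}^\circ}(H_0(P))$ is only $H_0(P)^{3/2}$-bounded (not $H_0(P)$-bounded), and $\ad_{iA_{P,n}^\circ}^2(H_0(P))$ only $H_0(P)^2$-bounded, and then conclude $C^2(A_{P,n}^\circ)$ ``by the standard criterion recalled in Appendix~\ref{app:Mourre}.'' But the only $C^k$-criterion stated there (Lemma~\ref{lem:suff-crit-smooth}) requires the stronger \emph{operator} bound $|\langle\phi,\ad_{iA}^s(H)\psi\rangle|\lesssim\|\phi\|\,\|(|H|+1)\psi\|$, i.e.\ genuine $H$-boundedness of each iterated commutator; the paper emphasizes right after Lemma~\ref{lem:suff-crit-smooth} that this condition ``is not always satisfied; see in particular our setting in Section~\ref{sec:conjug/Nelson}'' --- precisely the present situation. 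Your own observation that the commutator is only $H_0(P)^{3/2}$-bounded shows the hypothesis of Lemma~\ref{lem:suff-crit-smooth} fails, so your appeal to it is internally inconsistent. What you actually have is a \emph{form} bound on $\Dc(H_0(P))$ for the graph-norm topology, and passing from such symmetric form bounds on the first and second commutators to $C^2(A)$ is a nontrivial step: it requires the form-characterization of the $C^k(A)$ classes (a version of Theorem~6.3.4 in~\cite{ABMG-96} iterated to second order, or its QFT adaptation), not the operator-bound sufficient criterion. The paper resolves this by quoting~\cite[Proposition~2.5]{MR13}, which is precisely a statement of this form-bound criterion for Pauli--Fierz-type Hamiltonians. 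To close the gap you should either invoke that reference or prove the form-bound characterization directly; as written, the argument for (ii) does not follow from the lemma you cite.
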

\end{samepage}

\begin{proof}
We start with item~(i). Clearly, $a_{P,n}^\circ$ is essentially self-adjoint on $C_c^\infty(\R^d)$, and the essential self-adjointness of $A_{P,n}^\circ$ on $\Cc^\fu$ follows. In addition, as the unitary group generated by~$a_{P,n}^\circ$ takes the explicit form
\[\big(e^{ita_{P,n}^\circ}u\big)(k)\,=\,e^{-t\frac d2}\,u\Big(e^{-t}(k-k_\star(n,P))+k_\star(n,P)\Big),\qquad u\in\Ld^2(\R^d),\]
the domain $\Dc=\Dc(H_0(P))$ is obviously invariant under $e^{itA_{P,n}^\circ}=\Gamma(e^{ita_{P,n}^\circ})$.
Next, the proof of~(ii) follows from a direct computation and is a particular case of~\cite[Proposition~2.5]{MR13}. It remains to check~(iii). As $A_{P,n}^\circ=\db\Gamma(a_{P,n}^\circ)$ is a second-quantization operator, we find
\[[\Phi(\rho),iA_{P,n}^\circ]\,=\,-\Phi(ia_{P,n}^\circ\rho).\]
As $\Phi(ia_{P,n}^\circ\rho)$ is $N^{1/2}$-bounded provided $a_{P,n}^\circ\rho\in\Ld^2(\R^d)$, and repeating the same computation for iterated commutators, the conclusion follows.
\end{proof}

\subsection{Mourre estimate}\label{sec:Mourre/Nelson}
We turn to the proof of a Mourre estimate for the uncoupled fiber Hamiltonian $H_0(P)$ with respect to the above-constructed conjugate operator $A_{P,n}^\circ$.
It requires a quite delicate computation based on fine properties of the symbol~\eqref{eq:symbolE0n} around its minimizer~\eqref{eq:min-symbol}.
Note that, surprisingly, our construction does not allow to treat the case of energy intervals $I$'s with $1<n<n_P$ in the notation below.

\begin{lem}\label{lem:Mourre-pre-Nelson}
Given a total momentum $|P|>|P_\star|$, define $n_P\ge1$ such that
\begin{equation}\label{eq:def-nP-statement-1}
\tfrac12P^2\in\big[E_0^{(n_P)}(P),E_0^{(n_P+1)}(P)\big).
\end{equation}
For all $\e>0$ and all energy intervals $I\subset\big[E_0^{(n)}(P)+\e,E_0^{(n+1)}(P)\big)$ with $n=1$ or $n\ge n_P$, the following Mourre estimate holds with respect to $A_{P,n}^\circ$ on $I$,
\begin{equation}\label{eq:Mourre-Nelson-pre}
\mathds1_I(H_0(P))[H_0(P),iA_{P,n}^\circ]\mathds1_I(H_0(P))\,\ge\,\e\bar\Pi_\Omega\mathds1_I(H_0(P))\bar\Pi_\Omega.
\end{equation}
In particular, the Mourre estimate is strict if $I$ does not contain the eigenvalue~$\frac12P^2$.
\end{lem}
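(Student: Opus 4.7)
The plan is to exploit that the conjugate $A_{P,n}^\circ = \db\Gamma(a_{P,n}^\circ)$ is a second-quantization operator, hence commutes with the number operator $N$ and preserves the decomposition $\Hc^\fu = \bigoplus_{\ell \ge 0}\Gamma_s^{(\ell)}(\Hf)$. The commutator therefore splits as $[H_0(P), iA_{P,n}^\circ] = \bigoplus_\ell[H_0^{(\ell)}(P), iA_{P,n}^{\circ,(\ell)}]$, and on each sector both $H_0^{(\ell)}(P)$ and its commutator act as multiplication operators in momentum representation. The Mourre estimate is thus reduced on each sector to a pointwise inequality for the commutator symbol
\[\Sigma_n^{(\ell)}(P; k) \,:=\, \sum_{j=1}^{\ell}(k_j - k_\star(n,P))\cdot\nabla_{k_j}H_0^{(\ell)}(P; k_1,\ldots,k_\ell),\]
namely $\Sigma_n^{(\ell)}(P; k) \ge \varepsilon$ on $\{H_0^{(\ell)}(P; \cdot)\in I\}$.

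I would first dispose of the trivial sectors. For $\ell = 0$, both sides of the Mourre estimate vanish (the right-hand side because $\bar\Pi_\Omega$ annihilates $\Omega$, the commutator trivially). For $\ell \ge n+1$, Lemma~\ref{lem:mono-increments} gives $E_0^{(\ell)}(P) \ge E_0^{(n+1)}(P) > \sup I$, so the spectral projection $\mathds 1_I(H_0^{(\ell)}(P))$ is zero. In the key sector $\ell = n$ (which handles the $n=1$ case entirely since then $\ell\ge 2$ is excluded), the symbol $H_0^{(n)}(P; \cdot)$ is convex on $(\R^d)^n$ as a sum of the convex quadratic $\tfrac12|P - \sum_j k_j|^2$ and the convex dispersion relations $\omega(k_j) = \sqrt{m^2 + k_j^2}$, and by the explicit computation leading to~\eqref{eq:min-symbol-re} its unique minimum is attained precisely at the dilation center $k_1 = \ldots = k_n = k_\star(n,P)$ with value $E_0^{(n)}(P)$. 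The first-order convexity inequality $\nabla f(x)\cdot(x - x_{\min}) \ge f(x) - f(x_{\min})$ then yields directly $\Sigma_n^{(n)}(P; k) \ge H_0^{(n)}(P; k) - E_0^{(n)}(P)\ge\varepsilon$ on $\{H_0^{(n)}(P; \cdot) \in I\}$.

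The genuine difficulty is the remaining case $1 \le \ell < n$, which only arises when $n \ge n_P$, and where the dilation center $(k_\star(n,P), \ldots, k_\star(n,P))$ is no longer a critical point of $H_0^{(\ell)}(P;\cdot)$. The same convexity argument only gives the weaker bound $\Sigma_n^{(\ell)}(P; k) \ge H_0^{(\ell)}(P; k) - \tilde E_n^{(\ell)}(P)$, with $\tilde E_n^{(\ell)}(P) := \tfrac12|P - \ell k_\star(n,P)|^2 + \ell\,\omega(k_\star(n,P))$. To conclude, it suffices to establish the algebraic inequality $\tilde E_n^{(\ell)}(P) \le E_0^{(n)}(P)$. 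Substituting the explicit form $k_\star(n,P) = \tfrac{|P|-c(n,P)}{n}\tfrac{P}{|P|}$ together with the identity $\omega(k_\star(n,P)) = (|P|-c(n,P))/(n\,c(n,P))$ that follows from~\eqref{eq:def-cnP}, this is equivalent to
\[c(n,P)\,(n-\ell)\,|P| \,+\, (n+\ell)\,c(n,P)^2 \,\le\, 2n.\]

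The verification of this inequality is the main technical step. For $n > n_P$, the strict bound $\tfrac12P^2 < E_0^{(n)}(P)$ translates via the implicit equation for $c(n,P)$ into $c(n,P)\,|P| < 2 - c(n,P)^2$, whence $c(n-\ell)|P| + (n+\ell)c^2 < 2(n-\ell) + 2\ell c^2 \le 2n$ since $c(n,P)<1$. The borderline case $n = n_P$ is the genuine obstacle: here $\tfrac12 P^2$ may lie strictly above $E_0^{(n_P)}(P)$ and $c(n_P,P)\,|P|$ may exceed $2 - c(n_P,P)^2$. I would close this case by leveraging the upper bound $\tfrac12 P^2 < E_0^{(n_P+1)}(P)$ through the monotonicity $c(n+1,P) < c(n,P)$ of Lemma~\ref{lem:mono-increments}(ii), or alternatively by retaining the nonnegative quadratic remainder $\tfrac12(M_\parallel - \beta_\ell)^2 + \tfrac12 M_\perp^2$ (where $M := P - \sum_j k_j$ and $\beta_\ell := |P - \ell k_\star(n,P)|$) that arises when expanding $\Sigma_n^{(\ell)}$ directly—through the decomposition $\Sigma_n^{(\ell)} = M^2 - \beta_\ell M_\parallel + \sum_j\omega(k_j) - \ell\omega(k_\star(n,P)) + \sum_j R_j$, with $R_j\ge 0$ coming from strict convexity of $\omega$—which yields a compensating positive contribution on the energy shell. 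This refinement is precisely what excludes intermediate values $1 < n < n_P$ from the construction, and is where the restriction on $n$ in the statement originates.
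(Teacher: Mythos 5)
Your overall structure is correct and parallels the paper's proof: reduce to $\ell$-boson sectors via $[A_{P,n}^\circ, N]=0$, dispose of $\ell=0$ (trivially) and $\ell>n$ (via $E_0^{(\ell)}(P)\ge E_0^{(n+1)}(P)$, Lemma~\ref{lem:mono-increments}(iv)), and bound the commutator symbol from below by $H_0^{(\ell)}(P;k)-H_0^{(\ell)}(P;k_\star^{(n)},\ldots,k_\star^{(n)})$. Your convexity argument for this last bound is a genuine and attractive simplification of the paper's Step~2, which instead proves the same inequality by an explicit rearrangement of squares (see the paper's decomposition~\eqref{eq:g_P decomposition} and the completion-of-squares manipulations that follow); the two are equivalent, but yours is conceptually cleaner. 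Your reduction of the $1\le\ell<n$ case to the algebraic inequality
\[c(n,P)(n-\ell)|P|+(n+\ell)c(n,P)^2\,\le\,2n\]
is also correct and matches what the paper proves in Step~3--4 (after noting that $\ell=1$ is the extremal case, since the left-hand side decreases in $\ell$ as $c(n,P)<|P|$). Your dispatch of $n>n_P$ via the equivalence $\tfrac12P^2<E_0^{(n)}(P)\Leftrightarrow c(n,P)|P|<2-c(n,P)^2$ is also correct.

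However, there is a genuine gap exactly where you flag ``the genuine obstacle'': the borderline case $n=n_P$, which is the entire technical heart of the paper's argument. You offer two unverified sketches, neither of which constitutes a proof. The first (``leveraging $\tfrac12P^2<E_0^{(n_P+1)}(P)$ through the monotonicity $c(n+1,P)<c(n,P)$'') is pointing in the right direction but does not by itself yield the needed inequality: the paper first reduces to the extremal momentum $|P|=|P^{(n_P+1)}|$ via the monotonicity of $|P|\mapsto c(n,P)$ (so that $\alpha(n_P+1,P^{(n_P+1)})=0$ can be exploited via~\eqref{eq:ident-vanish-alpha}), and then must still prove a nontrivial differential inequality for the map $r\mapsto(1-c(r,P^{(n+1)})^2)^{3/2}/c(r,P^{(n+1)})^2$, which is the whole content of Lemma~\ref{lemma:derivative}. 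The second sketch (retaining the dropped nonnegative quadratic remainder and claiming a ``compensating positive contribution on the energy shell'') is not substantiated at all, and note that the paper does not take this route --- it simply drops that remainder and proceeds via Lemma~\ref{lemma:derivative}. So while your reduction and the easy cases are sound and even a bit cleaner than the paper, the proposal does not actually prove the Mourre estimate for $n=n_P$, which is precisely the case for which the result is new (spectrum above the two-boson threshold around the embedded eigenvalue).
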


\begin{proof}
Let $|P|>|P_\star|$ be fixed, define $n_P$ via~\eqref{eq:def-nP-statement-1}, and consider an energy interval
\begin{equation}\label{eq:choice-I}
I\subset\big[E_0^{(n)}(P)+\e,E_0^{(n+1)}(P)\big)
\end{equation}
for some $n\ge1$ and $\e>0$.
We split the proof into four steps.

\medskip
\step1 Proof that the Mourre estimate~\eqref{eq:Mourre-Nelson-pre} follows if we show for all $1\le\ell\le n$,
\begin{equation}\label{eq:proj-Mourre-Nelson}
\Pi_{\ell}\mathds1_I(H_0(P))[H_0(P),iA_{P,n}^\circ]\mathds1_I(H_0(P))\Pi_{\ell}\,\ge\,\e\Pi_{\ell}\mathds1_I(H_0(P))\Pi_{\ell}.
\end{equation}
where $\Pi_{\ell}$ is the orthogonal projection on the $\ell$-boson state space $\Gamma_s^{(\ell)}(\Hf)$.

\medskip\noindent
First recall that $H_0(P)$ and $A_{P,n}^\circ$ commute with the number operator, hence with each projection $\Pi_\ell$. It is thus enough to prove~\eqref{eq:proj-Mourre-Nelson} for each $\ell\ge0$.
Now, for $\ell=0$, this lower bound is trivial as $\Omega$ is an eigenvector of $H_0(P)$.
Next, as the symbol of $H_0(P)$ is bounded below by $E_0^{(\ell)}(P)$ on the $\ell$-boson state space,
and as we have $E_0^{(\ell)}(P)\ge E_0^{(n+1)}(P)$ for~$\ell>n$ in view of Lemma~\ref{lem:mono-increments}(iv), the choice~\eqref{eq:choice-I} of the energy interval $I$ entails
\begin{equation}\label{eq:Mourre-restr-mass>n}
\mathds1_I(H_0(P))\Pi_\ell\,=\,0\qquad\text{for all $\ell>n$},
\end{equation}
and the claim follows.

\medskip
\step2 Proof that for all $1\le\ell\le n$,
\begin{equation}\label{eq:commut-AH0P}
\Pi_\ell[H_0(P),iA^\circ_{P,n}]\Pi_\ell
\,\ge\,\Pi_\ell H_0(P)\Pi_\ell-H_0^{(\ell)}\big(P;k_\star^{(n)},\dots,k_\star^{(n)}\big)\,\Pi_\ell,
\end{equation}
where henceforth we set for notational simplicity
\begin{equation}\label{eq:def-kstar-abbr}
k_\star^{(n)}\,:=\,k_\star(n,P)\,=\,\tfrac1n(|P|-c(n,P))\tfrac{P}{|P|},
\end{equation}
cf.~\eqref{eq:min-symbol}, thus omitting the dependence on $P$ in the notation.

\medskip\noindent
For that purpose, we start by noting that the symbol~\eqref{eq:symbolE0n} of $H_0(P)$ on the $\ell$-boson state space can be decomposed as
\begin{multline}\label{eq:g_P decomposition}
H_0^{(\ell)}(P;k_1,\dots,k_\ell)=\tfrac{1}{2}\Big(\sum_{j=1}^\ell k_j-n k_\star^{(n)}\Big)^2+\sum_{j=1}^\ell \Big(\omega(k_j)-\omega(k_\star^{(n)})-c(n,P)\tfrac{P}{|P|}\cdot(k_j-k_\star^{(n)})\Big)\\
+H_0^{(\ell)}\big(P;k_\star^{(n)},\ldots,k_\star^{(n)}\big)-\tfrac12(n-\ell)^2|k_\star^{(n)}|^{2}.
\end{multline} 
By definition~\eqref{eq:def-APn} of the conjugate operator, writing
\[iA_{P,n}^\circ\,=\,-\db\Gamma\Big((k-k_\star^{(n)})\cdot \nabla_k+\tfrac{d}{2}\Big),\]
a direct computation then yields on $\Gamma_s^{(\ell)}(\Hf)$,
\begin{multline*}
[H_0(P),iA_{P,n}^\circ]|_{\Gamma_s^{(\ell)}(\Hf)}\,=\,\bigg[\sum_{j=1}^\ell(k_j-k_\star^{(n)})\cdot\nabla_{k_j}~,~H_0^{(\ell)}(P;k_1,\ldots,k_\ell)\bigg]\\
\,=\,\Big(\sum_{j=1}^\ell(k_j-k_\star^{(n)})\Big)\cdot \Big(\sum_{j=1}^\ell k_j-nk_\star^{(n)}\Big)+\sum_{j=1}^\ell(k_j-k_\star^{(n)})\cdot\Big(\nabla\omega(k_j)-c(n,P)\tfrac{P}{|P|}\Big).
\end{multline*}
This identity can be further reorganized as follows,
\begin{multline*}
[H_0(P),iA_{P,n}^\circ]|_{\Gamma_s^{(\ell)}(\Hf)}
\,=\,\tfrac{n-\ell}{2n}\Big(\sum_{j=1}^\ell k_j\Big)^2+\tfrac{n+\ell}{2n}\Big(\sum_{j=1}^\ell k_j-nk_\star^{(n)}\Big)^2-\tfrac12n(n-\ell)|k_\star^{(n)}|^2\\
+\sum_{j=1}^\ell\Big(\omega(k_j)-\omega(k_\star^{(n)})-c(n,P)\tfrac{P}{|P|}\cdot(k_j-k_\star^{(n)})\Big)
+\sum_{j=1}^\ell\Big(\omega(k_\star^{(n)})-\tfrac{k_j\cdot k_\star^{(n)}+m^2}{\omega(k_j)}\Big).
\end{multline*}
Recognizing the symbol $H_0^{(\ell)}(P;k_1,\ldots,k_\ell)$ in the right-hand side in form of \eqref{eq:g_P decomposition}, and noting that 
\begin{displaymath}
\omega(k_\star^{(n)})\,\ge\,\dfrac{k\cdot k_\star^{(n)}+m^{2}}{\omega(k)}\qquad\text{for all $k$},
\end{displaymath} 
we deduce
\begin{multline*}
[H_0(P),iA_{P,n}^\circ]|_{\Gamma_s^{(\ell)}(\Hf)}
\,\ge\,H_0^{(\ell)}(P;k_1,\dots,k_\ell)-H_0^{(\ell)}(P;k_\star^{(n)},\dots,k_\star^{(n)})\\
+\tfrac{n-\ell}{2n}\Big(\sum_{j=1}^\ell k_j\Big)^2+\tfrac{\ell}{2n}\Big(\sum_{j=1}^\ell k_j-nk_\star^{(n)}\Big)^2
-\tfrac12\ell(n-\ell)|k_\star^{(n)}|^2.
\end{multline*}
Finally, noting that
\begin{eqnarray*}
\lefteqn{\tfrac{n-\ell}{2n}\Big(\sum_{j=1}^\ell k_j\Big)^2+\tfrac{\ell}{2n}\Big(\sum_{j=1}^\ell k_j-nk_\star^{(n)}\Big)^2}\\
&=&\tfrac12\Big(\sum_{j=1}^\ell k_j\Big)^2-\ell k_\star^{(n)}\cdot\Big(\sum_{j=1}^\ell k_j\Big)+\tfrac12\ell n|k_\star^{(n)}|^2\\
&=&\tfrac12\Big(\sum_{j=1}^\ell k_j-\ell k_\star^{(n)}\Big)^2+\tfrac12\ell (n-\ell)|k_\star^{(n)}|^2\\
&\ge&\tfrac12\ell (n-\ell)|k_\star^{(n)}|^2,
\end{eqnarray*}
we conclude
\[[H_0(P),iA_{P,n}^\circ]|_{\Gamma_s^{(\ell)}(\Hf)}
\,\ge\,H_0^{(\ell)}(P;k_1,\dots,k_\ell)-H_0^{(\ell)}(P;k_\star^{(n)},\dots,k_\star^{(n)}),\]
that is,~\eqref{eq:commut-AH0P}.

\medskip
\step3
Proof that, given an energy interval $I$ as in~\eqref{eq:choice-I}, we have for all $1\le\ell\le n$,
\begin{multline}\label{eq:commut-AH0P-re}
\Pi_\ell\mathds1_I(H_0(P))[H_0(P),iA_{P,n}^\circ]\mathds1_I(H_0(P))\Pi_\ell\\
\,\ge\,\e\Pi_\ell+\big(1-\tfrac\ell n\big)\Big(\tfrac1{2n}(|P|-c(n,P))^2-\alpha(n,P)\Big)\Pi_\ell,
\end{multline}
where $\alpha(n,P)$ stands for the positive distance between the eigenvalue $\frac12P^2$ and the energy threshold below $I$,
\begin{equation}\label{eq:def-alpha-nP}
\alpha(n,P)\,:=\,\tfrac12P^2-E_0^{(n)}(P)\,=\,\tfrac12P^2-H_0^{(n)}(P;k_\star^{(n)},\ldots,k_\star^{(n)}).
\end{equation}
Starting from~\eqref{eq:commut-AH0P},
and recalling that the choice~\eqref{eq:choice-I} of $I$ yields
\[\inf I\,\ge\,\e+E^{(n)}_0(P)\,=\,\e+H_0^{(n)}(P;k_\star^{(n)},\ldots,k_\star^{(n)}),\]
we are led to
\begin{multline*}
\Pi_\ell\mathds1_I(H_0(P))[H_0(P),iA^\circ_{P,n}]\mathds1_I(H_0(P))\Pi_\ell\\
\,\ge\,\Big(\e
+H_0^{(n)}(P;k_\star^{(n)},\ldots,k_\star^{(n)})-H_0^{(\ell)}(P;k_\star^{(n)},\ldots,k_\star^{(n)})\Big)\,\Pi_\ell.
\end{multline*}
To prove~\eqref{eq:commut-AH0P-re}, it thus remains to check for all $\ell\le n$,
\begin{multline}\label{eq:commut-AH0P-re0}
H_0^{(n)}(P;k_\star^{(n)},\ldots,k_\star^{(n)})-H_0^{(\ell)}(P;k_\star^{(n)},\ldots,k_\star^{(n)})\\
\,\ge\,\big(1-\tfrac\ell n\big)\Big(\tfrac1{2n}(|P|-c(n,P))^2-\alpha(n,P)\Big).
\end{multline}
For that purpose, we decompose
\begin{eqnarray*}
\lefteqn{H_0^{(\ell)}(P;k_\star^{(n)},\ldots,k_\star^{(n)})}\\
&=&\tfrac12(P-\ell k_\star^{(n)})^2+\ell\omega(k_\star^{(n)})\\
&=&\tfrac12P^2+\tfrac12\ell^2|k_\star^{(n)}|^2-\ell P\cdot k_\star^{(n)}+\ell\omega(k_\star^{(n)})\\
&=&\tfrac\ell n\Big(\tfrac12P^2+\tfrac12n^2|k_\star^{(n)}|^2-nP\cdot k_\star^{(n)}+n\omega(k_\star^{(n)})\Big)+\tfrac12\big(1-\tfrac\ell n\big)P^2-\tfrac12\ell(n-\ell)|k_\star^{(n)}|^2\\
&=&\tfrac\ell nH_0^{(n)}(P;k_\star^{(n)},\ldots,k_\star^{(n)})+\tfrac12\big(1-\tfrac\ell n\big)P^2-\tfrac12\ell n\big(1-\tfrac\ell n\big)|k_\star^{(n)}|^2.
\end{eqnarray*}
In terms of~\eqref{eq:def-alpha-nP}, this yields
\begin{eqnarray*}
\lefteqn{H_0^{(n)}(P;k_\star^{(n)},\ldots,k_\star^{(n)})-H_0^{(\ell)}(P;k_\star^{(n)},\ldots,k_\star^{(n)})}\\
&=&\big(1-\tfrac\ell n\big)H_0^{(n)}(P;k_\star^{(n)},\ldots,k_\star^{(n)})-\tfrac12\big(1-\tfrac\ell n\big)P^2+\tfrac12\ell n\big(1-\tfrac\ell n\big)|k_\star^{(n)}|^2\\
&=&\big(1-\tfrac\ell n\big)\Big(\tfrac12\ell n|k_\star^{(n)}|^2-\alpha(n,P)\Big).
\end{eqnarray*}
Now recalling $|k_\star^{(n)}|=\frac1n(|P|-c(n,P))$, cf.~\eqref{eq:def-kstar-abbr}, the claim~\eqref{eq:commut-AH0P-re0} follows.

\medskip
\step4 Conclusion.\\
As $n_P$ is defined via~\eqref{eq:def-nP-statement-1}, the definition~\eqref{eq:def-alpha-nP} of $\alpha$ yields $\alpha(n,P)\le0$ for~$n>n_P$.
The right-hand side in~\eqref{eq:commut-AH0P-re} is thus bounded below by $\e\Pi_\ell$ if $\ell=n$, or if $\ell<n$ and~$n>n_P$. It remains to prove the corresponding result in the case $\ell<n=n_P$. In other words, it remains to prove the following implication, for all $n,P$,
\begin{equation}\label{eq:bnd-alphanP}
\tfrac12|P|^2\in\big[E_0^{(n)}(P),E_0^{(n+1)}(P)\big)\quad\Longrightarrow\quad\tfrac1{2n}(|P|-c(n,P))^2-\alpha(n,P)\,\ge\,0.
\end{equation}
For that purpose, we start by noting that the definition~\eqref{eq:def-cnP} of $c(n,P)$ yields
\begin{equation}\label{eq:|P|-c_n(P)}
|P|-c(n,P)\,=\,mn\frac{c(n,P)}{\sqrt{1-c(n,P)^2}},
\end{equation}
which allows to rewrite~\eqref{eq:def-EnP} in particular as
\begin{equation*}
E_0^{(n)}(P)\,=\,\tfrac12c(n,P)^2+\frac{mn}{\sqrt{1-c(n,P)^2}},
\end{equation*}
and thus
\begin{equation}\label{eq:ident-alphanP}
\alpha(n,P)\,=\,\tfrac12 P^2-\tfrac12 c(n,P)^2-\frac{mn}{\sqrt{1-c(n,P)^2}}.
\end{equation}
Further inserting~\eqref{eq:|P|-c_n(P)} in this last identity to eliminate $|P|$, we get
\begin{eqnarray}
\alpha(n,P)
&=&\tfrac12c(n,P)^2\bigg(1+\frac{mn}{\sqrt{1-c(n,P)^2}}\bigg)^2-\tfrac12 c(n,P)^2-\frac{mn}{\sqrt{1-c(n,P)^2}}\nonumber\\
&=&\tfrac12m^2n^2\frac{c(n,P)^2}{1-c(n,P)^2}-mn\sqrt{1-c(n,P)^2}.\label{eq:ident-alphanP-re}
\end{eqnarray}
Combining this with~\eqref{eq:|P|-c_n(P)} again to reformulate the quantity of interest in~\eqref{eq:bnd-alphanP}, we find
\begin{equation*}
\tfrac1{2n}(|P|-c(n,P))^2-\alpha(n,P)\,=\,-\tfrac12m^2n(n-1)\frac{c(n,P)^2}{1-c(n,P)^2}+mn\sqrt{1-c(n,P)^2},
\end{equation*}
which entails that the implication~\eqref{eq:bnd-alphanP} is actually equivalent to
\begin{equation*}
\tfrac12P^2\in\big[E_0^{(n)}(P),E_0^{(n+1)}(P)\big)\quad\Longrightarrow\quad\frac{(1-c(n,P)^2)^\frac32}{c(n,P)^2}\,\ge\,\tfrac12m(n-1).
\end{equation*}
As the map $|P|\mapsto c(n,P)$ is increasing, cf.~Lemma~\ref{lem:mono-increments}(i), it suffices to prove this implication for $P$ such that $\tfrac12P^2=E_0^{(n+1)}(P)$, that is, for $|P|=|P^{(n+1)}|$, cf.~Lemma~\ref{lem:mono-increments}(iii). We are thus reduced to proving for all $n$,
\begin{equation}\label{eq:bnd-alphanP-red}
\frac{(1-c(n,P^{(n+1)})^2)^\frac32}{c(n,P^{(n+1)})^2}\,\ge\,\tfrac12m(n-1).
\end{equation}
As by definition $\alpha(n+1,P^{(n+1)})=\frac12|P^{(n+1)}|^2-E_0^{(n+1)}(P^{(n+1)})=0$, identity~\eqref{eq:ident-alphanP-re} entails
\begin{equation}\label{eq:ident-vanish-alpha}
\tfrac12m(n+1)=\frac{\big(1-c(n+1,P^{(n+1)})^2\big)^\frac32}{c(n+1,P^{(n+1)})^2}.
\end{equation}
The claim~\eqref{eq:bnd-alphanP-red} can thus be reformulated as
\begin{equation}\label{eq:bnd-alphanP-red2}
\frac{\big(1-c(n+1,P^{(n+1)})^2\big)^\frac32}{c(n+1,P^{(n+1)})^2}-\frac{\big(1-c(n,P^{(n+1)})^2\big)^\frac32}{c(n,P^{(n+1)})^2}\,\le\,m,
\end{equation}
and it directly follows in this form from Lemma~\ref{lemma:derivative} below.
\end{proof}

The conclusion of the above proof of the Mourre estimate relies on the following key computation, which we state as a separate lemma for convenience.

\begin{lem}\label{lemma:derivative}
For all $n\ge1$, the function defined by
\[f_{n+1}(r)\,:=\,\frac{\big(1-c(r,P^{(n+1)})^2\big)^\frac32}{c(r,P^{(n+1)})^2}\]
satisfies $f_{n+1}'(r)\le m$ for all $0<r\le n+1$.
\end{lem}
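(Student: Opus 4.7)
The plan is to reduce the claimed derivative bound to the non-negativity of the energy gap
$\alpha(r,P^{(n+1)}) := \tfrac12|P^{(n+1)}|^2 - E_0^{(r)}(P^{(n+1)})$
for $r \le n+1$. Here I extend $c(r,P)$, and hence $E_0^{(r)}(P)$ and $\alpha(r,P)$, to arbitrary real $r>0$ by the same implicit equation~\eqref{eq:def-cnP}; all the identities already established carry over verbatim in $r$.

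First I will compute $f_{n+1}'(r)$ in closed form. Writing $c:=c(r,P^{(n+1)})$ and inserting $|P^{(n+1)}|-c = mrc/\sqrt{1-c^2}$ from~\eqref{eq:|P|-c_n(P)} into~\eqref{eq:diff-cnP}, straightforward algebra collapses the implicit derivative to
\begin{equation*}
c'(r) \,=\, -\frac{mc(1-c^2)}{(1-c^2)^{3/2}+mr}.
\end{equation*}
Combining this with the chain rule and the elementary identity
$\tfrac{d}{dc}\bigl[(1-c^2)^{3/2}/c^2\bigr] = -(c^2+2)(1-c^2)^{1/2}/c^3$
then yields
\begin{equation*}
f_{n+1}'(r) \,=\, \frac{m(c^2+2)(1-c^2)^{3/2}}{c^2\bigl[(1-c^2)^{3/2}+mr\bigr]}.
\end{equation*}
After clearing denominators and cancelling the common factor $c^2(1-c^2)^{3/2}$, the inequality $f_{n+1}'(r) \le m$ becomes equivalent to
\begin{equation*}
2(1-c^2)^{3/2} \,\le\, mrc^2, \qquad\text{i.e.,}\qquad f_{n+1}(r) \,\le\, \tfrac{mr}{2}.
\end{equation*}

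The second step is to reinterpret this last inequality as the positivity of $\alpha$. The derivation of~\eqref{eq:ident-alphanP-re} remains valid at real $r>0$ and can be factored as
\begin{equation*}
\alpha(r,P^{(n+1)}) \,=\, \frac{mr}{1-c^2}\Bigl(\tfrac12 mrc^2 - (1-c^2)^{3/2}\Bigr),
\end{equation*}
so $\alpha(r,P^{(n+1)}) \ge 0$ if and only if $f_{n+1}(r) \le mr/2$. To conclude, I will note that $r\mapsto\alpha(r,P^{(n+1)})$ is strictly decreasing---since Step~2 of the proof of Lemma~\ref{lem:mono-increments}(iv) shows $\partial_r E_0^{(r)}(P)>0$---while identity~\eqref{eq:ident-vanish-alpha} gives $\alpha(n+1,P^{(n+1)})=0$. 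Hence $\alpha(r,P^{(n+1)})\ge 0$ on $(0,n+1]$, which by the chain of equivalences above yields $f_{n+1}'(r)\le m$.

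The main hurdle is purely algebraic: carrying out the implicit differentiation and producing the compact expressions for $c'(r)$ and $f_{n+1}'(r)$ so that the bound collapses to a statement about $\alpha$. Once these simplifications are in hand, the monotonicity of $E_0^{(r)}$ in $r$ and the vanishing of $\alpha$ at the distinguished value $r=n+1$ close the argument with essentially no further work.
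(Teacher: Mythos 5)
Your proof is correct, and Step~1 (the explicit computation of $c'(r)$ and the reduction of $f_{n+1}'(r)\le m$ to $f_{n+1}(r)\le\tfrac12 mr$) is essentially identical to the paper's. Step~2 is a genuinely different route. The paper fixes $r$ and varies the momentum: it uses $|P^{(n+1)}|\ge|P^{(r)}|$ together with the monotonicity of $|P|\mapsto c(r,P)$ to deduce $f_{n+1}(r)\le\frac{(1-c(r,P^{(r)})^2)^{3/2}}{c(r,P^{(r)})^2}$, and then evaluates the right-hand side as $\tfrac12 mr$ using the vanishing condition~\eqref{eq:ident-vanish-alpha}. You instead fix $P=P^{(n+1)}$ and vary $r$: you factor~\eqref{eq:ident-alphanP-re} as $\alpha(r,P^{(n+1)})=\frac{mr}{1-c^2}\bigl(\tfrac12 mrc^2-(1-c^2)^{3/2}\bigr)$ so that the target bound becomes $\alpha(r,P^{(n+1)})\ge0$, and you obtain this from the monotone decrease of $r\mapsto\alpha(r,P^{(n+1)})$ (via $\partial_r E_0^{(r)}(P)>0$) together with $\alpha(n+1,P^{(n+1)})=0$. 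The two closures use closely related monotonicity facts from Lemma~\ref{lem:mono-increments}, but your version is arguably more self-contained: it works at a single value of the momentum and thus avoids the need to extend the definition of $|P^{(r)}|$ to non-integer $r$, which the paper must do implicitly when invoking $|P^{(n+1)}|\ge|P^{(r)}|$ for real $r\le n+1$.
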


\begin{proof}
We split the proof into two steps.

\medskip
\step1 Proof that it suffices to show
\begin{equation}\label{eq:reduce-pr-f'-bnd}
f_{n+1}(r)\,\le\,\tfrac12 rm\qquad\text{for all $0<r\le n+1$}.
\end{equation}
For notational simplicity, we set $c(r):=c(r,P^{(n+1)})$. The derivative of $f_{n+1}$ takes the form
\begin{eqnarray}
f_{n+1}'(r)&=&-\tfrac1{c(r)^4}\Big(3c(r)^3c'(r)\sqrt{1-c(r)^2}+2c(r)c'(r)(1-c(r)^2)^\frac32\Big)\nonumber\\
&=&-c'(r)\frac{c(r)^2+2}{c(r)^3}\sqrt{1-c(r)^2}.\label{eq:first-rewr-f'}
\end{eqnarray}
Recall that the derivative of $c$ was computed in~\eqref{eq:diff-cnP},
\begin{equation*}
c'(r)\,=\,-\tfrac1r(|P^{(n+1)}|-c(r))\bigg(1+\frac{(m^2r^2+(|P^{(n+1)}|-c(r))^2)^\frac32}{m^2r^2}\bigg)^{-1},
\end{equation*}
and thus, using~\eqref{eq:|P|-c_n(P)} in form of
\begin{equation*}
|P^{(n+1)}|-c(r)\,=\,mr\frac{c(r)}{\sqrt{1-c(r)^2}},
\end{equation*}
we find
\begin{eqnarray*}
c'(r)&=&-\frac{mc(r)}{\sqrt{1-c(r)^2}}\bigg(1+\frac{mr}{(1-c(r)^2)^\frac32}\bigg)^{-1}\\
&=&-\frac{mc(r)(1-c(r)^2)}{mr+(1-c(r)^2)^\frac32}.
\end{eqnarray*}
Inserting this into~\eqref{eq:first-rewr-f'}, we get
\begin{eqnarray*}
f_{n+1}'(r)
&=&\frac{m(1-c(r)^2)^\frac32}{mr+(1-c(r)^2)^\frac32}\frac{c(r)^2+2}{c(r)^2},
\end{eqnarray*}
and we deduce the following equivalence: for all $r$,
\begin{eqnarray*}
f_{n+1}'(r)\le m&\Longleftrightarrow&\frac{(1-c(r)^2)^\frac32}{mr+(1-c(r)^2)^\frac32}\frac{c(r)^2+2}{c(r)^2}\le 1\\
&\Longleftrightarrow&2(1-c(r)^2)^\frac32\le mrc(r)^2\\
&\Longleftrightarrow&f_{n+1}(r)\le \tfrac12 mr,
\end{eqnarray*}
as claimed.

\medskip
\step2 Conclusion.\\
Let $0<r\le n+1$ be fixed. As the map $|P|\mapsto c(r,P)$ is increasing and as $|P^{(n+1)}|\ge|P^{(r)}|$ in view of Lemma~\ref{lem:mono-increments}(iii) (where we can extend the definition of $|P^{(r)}|$ to all real $r>0$), we find
\begin{equation}
f_{n+1}(r)\,=\,\frac{(1-c(r,P^{(n+1)})^2)^{\frac32}}{c(r,P^{(n+1)})^2}\,\le\,\frac{(1-c(r,P^{(r)})^2)^{\frac32}}{c(r,P^{(r)})^2}.
\end{equation}
Noting that the same argument as for~\eqref{eq:ident-vanish-alpha} yields
\begin{equation*}
\frac{(1-c(r,P^{(r)})^2)^\frac32}{c(r,P^{(r)})^2}\,=\,\tfrac12mr,
\end{equation*}
we deduce $f_{n+1}(r)\le\frac12mr$, and the conclusion follows from Step~1.
\end{proof}

\subsection{Modification procedure and improved regularity}\label{sec:modif-proc}
This section is devoted to the modification of the conjugate operator $A_{P,n}^\circ$ in order to improve on the associated regularity properties in~Lemma~\ref{lem:limit-reg-C2}(ii), in view of the proof of Theorem~\ref{th:main/Nelson}.
More precisely, we shall modify $A_{P,n}^\circ$ on $\ell$-boson state spaces for all $\ell>n$, while keeping it unchanged elsewhere. In view of~\eqref{eq:Mourre-restr-mass>n}, we recall that a Mourre estimate on the energy interval $I_n(P)$ only needs to be checked on $\ell$-boson state spaces for all $1\le\ell\le n$, hence our modification will not impact its validity.
We emphasize that our modification procedure is quite general and may be of independent interest for massive QFT models.

\subsubsection{Motivation for modification procedure}
We start by further examining the above-defined conjugate operator $A_{P,n}^\circ$, decomposing it as
\begin{equation}\label{eq:decomp-APncirc}
A_{P,n}^\circ\,=\,D_\circ-\db\Gamma\big(ik_\star(n,P)\cdot\nabla_k\big),
\end{equation}
where $D_\circ$ stands for the generator of dilations,
\[D_\circ\,:=\,\db\Gamma(d_\circ),\qquad d_\circ\,:=\,\tfrac{i}{2}(k\cdot\nabla_k+\nabla_k\cdot k).\]
The lack of regularity of $H_0(P)$ with respect to $A_{P,n}^\circ$ precisely originates from the second term $\db\Gamma\big(ik_\star(n,P)\cdot\nabla_k\big)$ in~\eqref{eq:decomp-APncirc}, as indeed its commutator with $(P-\db\Gamma(k))^2$ is not $H_0(P)$-bounded.
To cure this issue, we might naïvely want to rather consider the truncated operator
\begin{equation}\label{eq:decomp-APncirc-tronc}
A_{P,n}'\,:=\,D_\circ-\Pi_{\le n}\db\Gamma\big(ik_\star(n,P)\cdot \nabla_k\big)\Pi_{\le n},
\end{equation}
in terms of the orthogonal projection $\Pi_{\le n}$ onto $\bigoplus_{\ell=0}^n\Gamma_s^{(\ell)}(\Hf)$.
By definition, $H_0(P)$ is now of class $C^\infty(A_{P,n}')$.
In addition, as this operator coincides with~$A^\circ_{P,n}$ on the range of~$\Pi_{\le n}$, we deduce that $H_0(P)$ satisfies the same Mourre estimate with respect to~$A_{P,n}'$ as in Lemma~\ref{lem:Mourre-pre-Nelson}.

This is however not the end of the story: the brutal truncation in~\eqref{eq:decomp-APncirc-tronc} happens to behave badly with respect to the fiber Hamiltonian $\Phi(\rho)$, in link with the fact that $A_{P,n}'$ is no longer a second-quantization operator.
The truncation thus needs to be suitably complemented on $\ell$-boson state spaces for $\ell>n$,
although not by means of second quantization.
In the spirit of our constructions in Section~\ref{sec:constr-fric-commut} for the quantum friction model, instead of considering the second quantization $\db\Gamma\big(ik_\star(n,P)\cdot \nabla_k\big)$ in~\eqref{eq:decomp-APncirc}, which amounts to taking sums of coordinates $\{ik_\star(n,P)\cdot \nabla_{k_j}\}_j$, and instead of taking a brutal truncation as in~\eqref{eq:decomp-APncirc-tronc} on $\ell$-boson state spaces with $\ell>n$, we shall consider partial sums of the $n$ largest signed values of the coordinates.

\subsubsection{Partial sums of largest signed values and regularization}
Instead of momentum representation on $\Hf$, we shall use position representation: we denote by $y:=i\nabla_\xi$ the position coordinate, and we set $z:=\frac P{|P|}\cdot y$ for the coordinate in the $P$-direction. For all $1\le j\le \ell$, we define the function $m_{j,\ell}:\R^\ell\to\R$ as the $j$th largest signed value of the entries: for all $z_1,\ldots,z_\ell\in\R$, we set
\[m_{j,\ell}(z_1,\ldots,z_\ell)\,:=\,z_{i_0}\]
where the index $i_0$ is chosen such that $|z_{i_0}|$ is the $j$th largest value among $|z_1|,\ldots,|z_\ell|$.
This is obviously well-defined on $\R^\ell$ up to a null set. Note that for $j=1$ the function $m_{1,\ell}$ coincides with the signed maximum $m_\ell$ defined in~\eqref{eq:def-mn}.
For all~$1\le j<\ell$, we then define the function $s_{j,\ell}:\R^\ell\to\R$ as the sum of the $j$ largest signed entries,
\[s_{j,\ell}\,:=\,m_{1,\ell}+\ldots+m_{j,\ell},\]
and for $j\ge\ell$ we simply define
\[s_{j,\ell}(z_1,\ldots,z_\ell)\,:=\,z_1+\ldots+z_\ell.\]
As in Lemma~\ref{lem:discont-mn}, we note that $s_{j,\ell}$ is not continuous for $j<\ell$ and thus needs to be regularized,
which we shall carefully perform in the spirit of~\eqref{eq:def-tilde-mn-delta}.
For that purpose, we start by defining for all $1\le j\le \ell$ the functions $\max_{j,\ell}:\R^\ell\to\R$ and $\min_{j,\ell}:\R^\ell\to\R$ as the $j$th largest and the $j$th smallest entries, respectively: more precisely, these functions are defined to be symmetric upon permutation of their entries, and to satisfy $\max_{j,\ell}(z_1,\ldots,z_\ell)=z_{\ell-j+1}$ and $\min_{j,\ell}(z_1,\ldots,z_\ell)=z_{j}$ if $z_1\le\ldots\le z_\ell$.
These functions are both obviously well-defined and continuous, and we have the relations
\begin{gather*}
|m_{j,\ell}(z_1,\ldots,z_\ell)|\,=\,\textstyle\max_{j,\ell}(|z_1|,\ldots,|z_\ell|),\\
\textstyle\max_{1,\ell}(z_1,\ldots,z_\ell)\,=\,\displaystyle\max_{1\le l\le \ell}z_l\,=\,\textstyle\min_{\ell,\ell}(z_1,\ldots,z_\ell),\\\textstyle\min_{1,\ell}(z_1,\ldots,z_\ell)\,=\,\displaystyle\min_{1\le l\le \ell}z_l\,=\,\textstyle\max_{\ell,\ell}(z_1,\ldots,z_\ell).
\end{gather*}
In these terms, for all $1\le j<\ell$, we note that the definition of $s_{j,\ell}$ can be reformulated as
\begin{equation*}
s_{j,\ell}=\sum_{l=1}^j\!\bigg(\textstyle\tfrac12\big(\!\max_{j+1-l,\ell}+\min_{l,\ell}\!\big)
+\textstyle\tfrac12\big(\!\max_{j+1-l,\ell}-\min_{l,\ell}\!\big)\!\sgn\!\big(\!\max_{j+1-l,\ell}+\min_{l,\ell}\!\big)\bigg),
\end{equation*}
where now only the sign functions need to be regularized.
Given $\delta>0$, we choose a smooth odd function $\chi_\delta:\R\to[-1,1]$ as in~\eqref{eq:choice-prop-chi-delta-0}, and we define for all $1\le j<\ell$,
\begin{equation*}
\widetilde s_{j,\ell;\delta}\,=\,\sum_{l=1}^j\bigg(\textstyle\tfrac12\big(\!\max_{j+1-l,\ell}+\min_{l,\ell}\!\big)
+\tfrac12\big(\!\max_{j+1-l,\ell}-\min_{l,\ell}\!\big)\chi_\delta\Big(\tfrac{\max_{j+1-l,\ell}+\min_{l,\ell}}{1+\max_{j+1-l,\ell}-\min_{l,\ell}}\Big)\bigg),
\end{equation*}
which is obviously globally well-defined and continuous.
For $j\ge\ell$, no regularization is needed and we simply set
\[\widetilde s_{j,\ell;\delta}(z_1,\ldots,z_\ell)\,:=\,s_{j,\ell}(z_1,\ldots,z_\ell)\,=\,z_1+\ldots+z_\ell.\]
In view of properties of $\chi_\delta$, a direct computation yields
\begin{equation}\label{eq:prop-tilde-sell-1}
1\,\le\,\sum_{l=1}^\ell\partial_l\widetilde s_{j,\ell;\delta}\,\le\,(2+\delta)(j\wedge\ell),
\end{equation}
and in addition, for all $r\ge1$,
\begin{equation}\label{eq:prop-tilde-sell-2}
\bigg|\Big(\sum_{l=1}^\ell\partial_l\Big)^r\widetilde s_{j,\ell;\delta}\bigg|\,\lesssim_{\chi_\delta,r}\,j\wedge\ell,\qquad
\bigg|\Big(\sum_{l=1}^\ell z_l\partial_l\Big)^r\widetilde s_{j,\ell;\delta}-\widetilde s_{j,\ell;\delta}\bigg|\,\lesssim_{\chi_\delta,r}\,j\wedge\ell.
\end{equation}
In particular, note that $\widetilde s_{j,\ell;\delta}$ is smooth in the direction $(1,\ldots,1)$.
Next, we state the following generalization of Lemma~\ref{lem:m-function} for $\widetilde m_{\ell;\delta}=\widetilde s_{1,\ell;\delta}$, which will be key to estimate commutators with field operators. The proof is a direct adaptation of that of Lemma~\ref{lem:m-function} and we skip the detail.

\begin{lem}\label{lem:S-function}
For all $j,\ell\ge1$, there holds for all $z,z_1,\dots,z_\ell\in\R$,
\begin{equation*}
\big|\widetilde s_{j,{\ell+1};\delta}(z,z_1,\dots,z_\ell)-\widetilde s_{j,\ell;\delta}(z_1,\dots,z_\ell)\big|\,\le\, 2|z|+1.\qedhere
\end{equation*}
\end{lem}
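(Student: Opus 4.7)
The plan is to mimic the proof of Lemma~\ref{lem:m-function}, which treats the special case $j=1$.

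First, I would dispose of the trivial case $j\ge\ell+1$: by definition, both $\widetilde s_{j,\ell;\delta}$ and $\widetilde s_{j,\ell+1;\delta}$ reduce to unregularized sums over all of their arguments, so the difference equals exactly $z$ and the bound $2|z|+1$ is immediate. Assume henceforth $j\le\ell$. Up to relabeling coordinates, sort $z_1\le\ldots\le z_\ell$ and let $k\in\{0,\ldots,\ell\}$ be the unique integer with $z_k\le z\le z_{k+1}$, using the conventions $z_0=-\infty$, $z_{\ell+1}=+\infty$.

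The key structural observation is that inserting $z$ at position $k+1$ of the sorted list affects the rank statistics in a local way: for each $i$, the pair $(\max_{i,\ell+1},\min_{i,\ell+1})$ evaluated at $(z,z_1,\ldots,z_\ell)$ either coincides with $(\max_{i,\ell},\min_{i,\ell})$ evaluated at $(z_1,\ldots,z_\ell)$, or differs from it by replacing one coordinate by $z$ and shifting the other by at most one rank. In particular, out of the $j\wedge\ell$ pairs entering the decomposition of $\widetilde s_{j,\ell+1;\delta}$, only those whose rank straddles position $k+1$ are genuinely modified.

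I would then estimate the surviving contributions using the defining properties of $\chi_\delta$ from~\eqref{eq:choice-prop-chi-delta-0} exactly as in Lemma~\ref{lem:m-function}. The coarse bound $|\chi_\delta|\le 1$ disposes of the regime where the relevant extremal entries have modulus at most $|z|\vee\frac12$; in the complementary regime, the identity $\chi_\delta(s)=\sgn(s)$ for $|s|\ge1$, combined with $\chi_\delta(s)\ge s$ on $[0,1]$ and $\chi_\delta(s)\le s$ on $[-1,0]$, yields termwise estimates mirroring those of the $j=1$ case. Summing, the ``unregularized'' contributions telescope to $z - m_{j,\ell}(z_1,\ldots,z_\ell)$ when $|z|$ lies among the top-$j$ absolute values of $(|z|,|z_1|,\ldots,|z_\ell|)$ (and to $0$ otherwise); since $|m_{j,\ell}|\le |z|$ in the former case, this contribution is bounded by $2|z|$, and the regularization error contributes at most~$1$.

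The main obstacle is the combinatorial bookkeeping: whereas Lemma~\ref{lem:m-function} tracks a single pair $(\max,\min)$, here $j\wedge\ell$ such pairs appear. What rescues the argument is that only a bounded number of them (those straddling position $k+1$) are genuinely affected by the insertion of $z$, so that the regularization error remains $O(1)$ uniformly in $j$ and $\ell$. This is the sole conceptual novelty with respect to the $j=1$ case; the remaining estimates carry over essentially verbatim, which is presumably why the authors opt to skip the detail.
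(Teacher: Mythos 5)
Your high-level plan (sort, locate where $z$ sits, distinguish the case where the top-$j$ and bottom-$j$ sets are unchanged from the case where $z$ displaces an extreme, then compare termwise against the $j=1$ analysis) is the right spirit, and I agree that the paper itself advertises this as a "direct adaptation." But two of your load-bearing claims do not hold as stated, and a third step fails quantitatively.

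\textbf{(a) "Only those pairs whose rank straddles $k+1$ are genuinely modified" is false.} If $z$ is inserted as the new maximum, every pair $(\max_{j+1-l,\ell+1},\min_{l,\ell+1})$ with $l<j$ has its \emph{max} argument shifted from $z_{\ell-j+l}$ to $z_{\ell+1-j+l}$ (and the pair $l=j$ gets $z$ itself), while the min arguments stay put; symmetrically if $z$ is the new minimum all $j$ mins shift. So up to $j$ of the pairs are modified, not a bounded number. What \emph{is} true is that the \emph{sets} $\{\text{top-}j\}$, $\{\text{bottom-}j\}$ each change by at most one element; but because the decomposition of $\widetilde s_{j,\ell;\delta}$ pins a specific pairing between the $l$th min and the $(j{+}1{-}l)$th max, the one-element change in the set cascades through all $j$ pair indices.

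\textbf{(b) "The regularization error contributes at most $1$" is not justified.} The per-pair regularization error $\frac12(a-b)\big|\chi_\delta\big(\tfrac{a+b}{1+a-b}\big)-\sgn(a+b)\big|$ is not $O(1)$: for a pair $(a,b)=(|b|+\varepsilon,b)$ with $b\ll-1$ and $\varepsilon>0$ small one has $\widetilde m_\delta\approx 0$ while the unregularized signed max is $a=|b|+\varepsilon$, so the error is of order $|b|$. Since, by (a), up to $j$ pairs change, the difference of the two (large) regularization errors need not be $O(1)$ without exhibiting cancellation, and you don't exhibit it. The proof of Lemma~\ref{lem:m-function} does \emph{not} proceed by "unregularized telescoping plus a uniformly small correction"; it works directly with $\chi_\delta$ on the modified pair and exploits the specific inequalities in~\eqref{eq:choice-prop-chi-delta-0} together with sign information derived from the case hypothesis. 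Reproducing that here requires either an explicit telescoping identity for the regularized pair sums (using, e.g., the monotonicity of $\widetilde m_\delta(a,b):=\frac{a+b}{2}+\frac{a-b}{2}\chi_\delta(\frac{a+b}{1+a-b})$ in both arguments) or a genuinely different decomposition; it is not automatic.

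\textbf{(c) The "coarse bound" in the small regime also fails quantitatively.} If all entries satisfy $|z_l|\le|z|\vee\frac12$ you only get $|\widetilde s_{j,\ell;\delta}|\le (j\wedge\ell)(|z|\vee\frac12)$, since the bound $|\widetilde m_\delta(a,b)|\le\max(|a|,|b|)$ is applied to each of the $j\wedge\ell$ terms. This gives $2(j\wedge\ell)(|z|\vee\frac12)$, not $2|z|+1$. For $j>1$ the analogue of Case~1 of Lemma~\ref{lem:m-function} therefore cannot be handled by this coarse bound, and must instead be absorbed into the case analysis of who actually gets displaced.

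In short: the identification of the correct unregularized telescoping (to $z-m_{j,\ell}$ or $0$, with $|m_{j,\ell}|\le|z|$ in the former case) is correct and is the right starting point, but the passage from the unregularized to the regularized estimate — which is where all the content of Lemma~\ref{lem:m-function} lies — is asserted rather than proved, and the structural claim you use to justify it is false. A correct write-up has to confront the fact that all $j$ pairs shift and show that the resulting sum still collapses to an $O(|z|)$ quantity for the regularized $\widetilde m_\delta$'s.
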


\subsubsection{Back to conjugate operator}
With the above construction at hand, we turn to the suitable replacement for the second term in the choice~\eqref{eq:decomp-APncirc} of the conjugate operator. For all $n,\ell$, we define the operator $S_{P,n,\ell;\delta}$ on the $\ell$-boson state space as the multiplication with the function
\begin{equation}\label{eq:defin-SPnelldelta}
(y_1,\ldots,y_\ell)~\mapsto~|k_\star(n,P)|\,\widetilde s_{n,\ell;\delta}\big(\tfrac{P}{|P|}\cdot y_1,\ldots,\tfrac{P}{|P|}\cdot y_\ell\big),
\end{equation}
using position representation $y=i\nabla_k$ on $\Hf$, and we define
\begin{equation}\label{eq:defin-SPndelta}
S_{P,n;\delta}\,:=\,\bigoplus_{\ell=1}^\infty S_{P,n,\ell;\delta}\qquad\text{on~~$\Hc^\fu:=\bigoplus_{\ell=0}^\infty \Gamma_s^{(\ell)}(\Hf)$}.
\end{equation}
Coming back to~\eqref{eq:decomp-APncirc}, we then define the following conjugate operator,
\begin{equation}\label{eq:def-APn-delta-Nelson}
A_{P,n;\delta}\,:=\,D_\circ-S_{P,n;\delta}\qquad\text{on $\Hc^\fu$},
\end{equation}
where we recall that $D_\circ$ stands for the generator of dilations,
\[D_\circ\,=\,\db\Gamma(d_\circ),\qquad d_\circ\,=\,\tfrac i2(k\cdot\nabla_k+\nabla_k\cdot k).\]
By definition, the operator $A_{P,n;\delta}$ commutes with the number operator.
Given its action on $\ell$-boson state space, it is clearly essentially self-adjoint on $\Cc^\fu$, and we state that it generates an explicit unitary group that preserves the domain of fiber Hamiltonians.
The proof is analogous to that of Lemma~\ref{lem:C_k^st} and is skipped for brevity.

\begin{lem}\label{lem:C_k^st-Nelson}
The operator $A_{P,n;\delta}$ is essentially self-adjoint on $\Cc^\fu$ and its closure generates a unitary group $\{e^{itA_{P,n;\delta}}\}_{t\in\R}$ on $\Hc^\fu$, which commutes with the number operator and has the following explicit action: for all~$\ell\ge1$ and $u_\ell\in\Gamma_s^{(\ell)}(\Hf)$,
\begin{multline*}
\big(e^{itA_{P,n;\delta}}u_\ell\big)(y_1,\ldots,y_\ell)
\,=\,\exp\bigg(\!-i|k_\star(n,P)|\int_0^t\! \widetilde s_{n,\ell;\delta}\big(\tfrac{P}{|P|}\cdot e^sy_1,\ldots,\tfrac{P}{|P|}\cdot e^sy_\ell\big)\,ds\bigg)\\
\times e^{t\ell\frac d2}u_\ell(e^ty_1,\ldots,e^ty_\ell),
\end{multline*}
where we use position representation on $\Hf$.
In particular, the domain~$\Dc$ of fiber Hamiltonians~\eqref{eq:dom-HP-Nelson} is invariant under this group action.
\end{lem}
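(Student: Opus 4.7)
The plan is to follow verbatim the strategy of the proof of Lemma~\ref{lem:C_k^st} for the friction model, which the statement itself presents as analogous. First, since $D_\circ$ and $S_{P,n;\delta}$ both commute with the number operator $N$ (the former by virtue of being a second-quantized operator, the latter by the direct-sum construction~\eqref{eq:defin-SPndelta}), the operator $A_{P,n;\delta}$ preserves the Fock decomposition and it suffices to argue sector by sector on each $\Gamma_s^{(\ell)}(\Hf)$. For each~$\ell\ge1$, I would define $U^t_{P,n,\ell;\delta}$ on $\Gamma_s^{(\ell)}(\Hf)$ by the explicit formula in the lemma (setting $U^t_{P,n,0;\delta}:=\Id$ consistently with $\widetilde s_{n,0;\delta}:=0$), and directly check that it defines a strongly continuous unitary group: the phase factor has modulus one, the Jacobian $e^{t\ell d/2}$ compensates the change of variables $y_l\mapsto e^ty_l$ in the $\Ld^2$ norm, and the cocycle relation $U^tU^s=U^{t+s}$ follows, via the substitution $r\mapsto r+t$ in the inner integral, from the elementary additivity $\int_0^t+\int_t^{t+s}=\int_0^{t+s}$ of the phase. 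This is precisely what the argument $e^sy_l$ of $\widetilde s_{n,\ell;\delta}$ inside the integral is designed for, arising from the decomposition $U^t=e^{itD_\circ}\Phi^t$ where $\Phi^t$ is the ordinary exponential of the time-integral of the mutually commuting multiplication operators $e^{-isD_\circ}S_{P,n;\delta}e^{isD_\circ}$.

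Next, I would compute the strong derivative of $U^t_{P,n,\ell;\delta}$ at $t=0$ on the core $C_c^\infty(\R^d)^{\otimes_s\ell}$ in position representation. The dilation part produces $\sum_{l=1}^\ell(y_l\cdot\nabla_{y_l}+d/2)u_\ell$, which in view of the identity $d_\circ=\frac1{2i}(y\cdot\nabla_y+\nabla_y\cdot y)$ in position representation coincides with $iD_\circ u_\ell$ on the $\ell$-boson sector; the phase part directly contributes $-i|k_\star(n,P)|\widetilde s_{n,\ell;\delta}(\tfrac{P}{|P|}\cdot y_1,\ldots,\tfrac{P}{|P|}\cdot y_\ell)u_\ell=-iS_{P,n,\ell;\delta}u_\ell$. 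Thus the self-adjoint generator of $\{U^t_{P,n,\ell;\delta}\}_{t\in\R}$ coincides with $A_{P,n;\delta}|_{\Gamma_s^{(\ell)}(\Hf)}$ on its core, which yields essential self-adjointness of $A_{P,n;\delta}$ on $\Cc^\fu$ and the identity $e^{itA_{P,n;\delta}}=\bigoplus_{\ell\ge0}U^t_{P,n,\ell;\delta}$ by standard Stone-theorem arguments.

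Finally, to establish invariance of the domain~$\Dc$ defined in~\eqref{eq:dom-HP-Nelson}, I would read it off from the explicit formula. Under the dilation part alone, $\db\Gamma(k)^2$ and $\db\Gamma(\omega)$ are conjugated into operators of the same form with rescaled dispersions $e^{-2t}k_j^2$ and $\sqrt{m^2+e^{-2t}k_j^2}$, both relatively bounded by $\db\Gamma(k)^2+\db\Gamma(\omega)$. Multiplication by the phase $\phi_t$, which depends on $(y_1,\dots,y_\ell)$ only through the scalar projections $\tfrac P{|P|}\cdot y_l$, preserves $\Dc$ because its first and second derivatives in those scalar arguments are uniformly controlled by~\eqref{eq:prop-tilde-sell-1}--\eqref{eq:prop-tilde-sell-2} and the regularization~\eqref{eq:choice-prop-chi-delta-0}, yielding commutator estimates with $\db\Gamma(k)$ and $\db\Gamma(k)^2$ that close in $\Dc$ after using $\db\Gamma(\omega)\ge mN$ in the massive case. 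I expect the main obstacle, if any, to lie here: controlling commutators with $\db\Gamma(k)^2$ uniformly in the particle number~$\ell$ requires fully exploiting~\eqref{eq:prop-tilde-sell-2}, which is exactly what motivated the careful construction of $\widetilde s_{n,\ell;\delta}$ as a regularization rather than a bare signed partial sum. Everything else is routine Stone-theorem machinery directly imported from the friction-model counterpart.
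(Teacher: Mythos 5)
Your strategy — transferring the proof of Lemma~\ref{lem:C_k^st} sector by sector, checking the cocycle relation via the substitution $r\mapsto r+t$, computing the strong derivative at $t=0$, and invoking Stone's theorem — is indeed the paper's intended approach (the paper skips the proof as ``analogous''), and your computations for the group structure and the generator are correct, including the decomposition $U^t=e^{itD_\circ}\tilde\Phi^t$ with $\tilde\Phi^t$ the ordinary exponential of the time-integral of the conjugated (mutually commuting) multiplication operators.

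However, the domain-invariance paragraph has a genuine gap, precisely where the Nelson model \emph{differs} from the friction model. Recall $\Dc=\Dc(\db\Gamma(k)^2)\cap\Dc(\db\Gamma(\omega))$. In the friction model the analogous piece $\db\Gamma(|k|)$ is a multiplication operator in the $k$-variable which \emph{commutes} with the $y$-dependent phase, so preservation of $\Dc(\db\Gamma(|k|))$ by the phase is automatic. Here, in position representation $y=i\nabla_k$, the operator $\db\Gamma(\omega)=\sum_j\omega(-i\nabla_{y_j})$ is a first-order pseudo-differential operator in $y$ that does \emph{not} commute with multiplication by $\phi_t$, and your argument never addresses preservation of $\Dc(\db\Gamma(\omega))$ under the phase part. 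Moreover, the estimates~\eqref{eq:prop-tilde-sell-1}--\eqref{eq:prop-tilde-sell-2} you invoke control derivatives of $\widetilde s_{n,\ell;\delta}$ only along the sum direction $\sum_l\partial_l$ and the scaling direction $\sum_l z_l\partial_l$; they give no control on \emph{single-variable} derivatives $\partial_{z_j}$, which is what is needed to commute the phase through $\omega(-i\nabla_{y_j})$. What is actually used (and is in fact used in the proof of Lemma~\ref{lem:Cinfty-H0P-A-Nelson}) is the separate observation that $\|\nabla_j\widetilde s_{n,\ell;\delta}\|_{\Ld^\infty}\lesssim1$ uniformly in $j,\ell$, i.e.\@ the phase is Lipschitz in each single $y_j$, together with a $T(1)$-type (or Calder\'on-commutator) estimate showing that $[\omega(-i\nabla_{y_j}),\phi_t]$ is bounded, of size $O(|k_\star(n,P)|\,t)$, so that summing over $j$ and using $\db\Gamma(\omega)\ge mN$ gives a $\db\Gamma(\omega)$-bounded correction. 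Without this ingredient, $\Dc(\db\Gamma(\omega))$-invariance is not established. Your identification of the ``main obstacle'' is also misplaced: the commutator with $\db\Gamma(k)^2$ is the \emph{easy} part because $\db\Gamma(k)=-i\sum_j\nabla_{y_j}$ differentiates only in the sum direction, which is exactly what~\eqref{eq:prop-tilde-sell-2} controls; the delicate point is the per-variable commutator with $\db\Gamma(\omega)$, which your write-up omits.
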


\subsubsection{Improved regularity}
As by definition $A_{P,n;\delta}$ coincides with $A_{P,n}^\circ$ on $\ell$-boson state spaces for all $\ell\le n$, it follows from~\eqref{eq:Mourre-restr-mass>n} that the Mourre estimate of Lemma~\ref{lem:Mourre-pre-Nelson} holds in the exact same form with respect to $A_{P,n;\delta}$, thus proving Theorem~\ref{th:main/Nelson}(ii).
Next, we show that the fiber Hamiltonian $H_0(P)$ is now of class $C^\infty(A_{P,n;\delta})$, which improves on the limited $C^2$-regularity available with respect to $A_{P,n}^\circ$, cf.~Lemma~\ref{lem:limit-reg-C2}(ii).
Combined with Lemma~\ref{lem:C_k^st-Nelson}, this proves Theorem~\ref{th:main/friction}(i), further noting that the $C^\infty(A_{P,n;\delta})$-regularity property indeed follows by applying the sufficient criterion in Lemma~\ref{lem:suff-crit-smooth}.

\begin{lem}\label{lem:Cinfty-H0P-A-Nelson}
For all $s\ge1$, the $s$-th iterated commutator $\ad_{iA_{P,n;\delta}}^s(H_0(P))$ extends as an $H_0(P)$-bounded self-adjoint operator.
\end{lem}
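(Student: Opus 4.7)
My strategy is to exploit the decomposition $A_{P,n;\delta} = D_\circ - S_{P,n;\delta}$ of~\eqref{eq:def-APn-delta-Nelson} and, at each order $s\ge 1$, to control separately the $2^s$ multinested bracketed expressions into which $\ad_{iA_{P,n;\delta}}^s(H_0(P))$ decomposes by multilinearity. Contributions built only from $iD_\circ$ are routine, exactly as in Lemma~\ref{lem:limit-reg-C2}(ii): since $d_\circ$ is the single-boson dilation generator, we have $\ad_{id_\circ}^r(\omega) = (k\cdot\nabla_k)^r\omega$ with $|(k\cdot\nabla_k)^r\omega|\le C_r\,\omega$ for $\omega = \sqrt{m^2+k^2}$ at every~$r$, and $\ad_{id_\circ}^r(k_j) = k_j$; second quantization then yields that $\ad_{iD_\circ}^r(\db\Gamma(\omega))$ and $\ad_{iD_\circ}^r(\tfrac12(P-\db\Gamma(k))^2)$ are $H_0(P)$-bounded.

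The crux is to control the contributions involving $iS_{P,n;\delta}$. Working in position representation $y = i\nabla_k$ on $\Hf$, the operator $S_{P,n;\delta}$ is the multiplication by $|k_\star(n,P)|\,\widetilde s_{n,\ell;\delta}(\hat P\cdot y_1,\ldots,\hat P\cdot y_\ell)$ on each $\ell$-boson sector (with $\hat P := P/|P|$), while $\db\Gamma(k) = \db\Gamma(-i\nabla_y)$. Hence, on $\Gamma_s^{(\ell)}(\Hf)$,
\begin{equation*}
[\db\Gamma(k), iS_{P,n;\delta}]\,=\,|k_\star(n,P)|\,\hat P\sum_{l=1}^{\ell}\partial_l\widetilde s_{n,\ell;\delta},
\end{equation*}
and the uniform bound~\eqref{eq:prop-tilde-sell-1} makes this commutator bounded by $(2+\delta)\,n\,|k_\star(n,P)|$ independently of~$\ell$. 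This is the decisive improvement over $A_{P,n}^\circ$, whose analogous commutator scaled with~$N$ and produced only $H_0(P)^{3/2}$-bounds. Consequently $[\tfrac12(P-\db\Gamma(k))^2, iS_{P,n;\delta}]$, rewritten as a symmetrized product of $(P-\db\Gamma(k))$ with this uniformly bounded commutator plus a double-commutator remainder controlled by $(\sum_l\partial_l)^2\widetilde s_{n,\ell;\delta}$ (bounded by~\eqref{eq:prop-tilde-sell-2}), is $H_0(P)^{1/2}$-bounded. The remaining term $[\db\Gamma(\omega), iS_{P,n;\delta}] = \sum_{j=1}^\ell[\omega(-i\nabla_{y_j}), iS_{P,n;\delta}]$ is a sum of pseudodifferential commutators between $\omega(-i\nabla_{y_j})$ and a bounded smooth function of $y_j$ with uniformly bounded gradient; since $|\partial^\alpha\omega|$ is bounded for every multi-index $|\alpha|\ge 1$, standard symbolic-calculus estimates give $\|[\omega(k_j), iS_{P,n;\delta}]\|\lesssim\|\nabla_{y_j}S_{P,n;\delta}\|_\infty$ uniformly in $j$ and $\ell$, so that summing over $j$ produces an operator bounded by $C\ell$ on $\Gamma_s^{(\ell)}(\Hf)$, which is $H_0(P)$-bounded via $N\le m^{-1}\db\Gamma(\omega)\le m^{-1}H_0(P)$.

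For iteration at $s\ge 2$, each additional bracket with $iD_\circ$ acts by Euler dilation on the multiplication symbols that have appeared, producing expressions of the form $(\sum_l z_l\partial_l)^r\partial^\alpha\widetilde s_{n,\ell;\delta}$ that remain uniformly bounded by~\eqref{eq:prop-tilde-sell-2}; each additional bracket with $iS_{P,n;\delta}$ either adds a further diagonal derivative $\sum_l\partial_l$ of $\widetilde s$ (when paired with $\db\Gamma(k)$) or opens a new Moyal remainder (when paired with $\db\Gamma(\omega)$). I expect the main technical obstacle to be the book-keeping of these higher Moyal expansions: one must check that all symbols of $\widetilde s_{n,\ell;\delta}$ appearing in them reduce to the diagonal-derivative and dilation-type expressions controlled by \eqref{eq:prop-tilde-sell-1}--\eqref{eq:prop-tilde-sell-2}, and never to individual high-order partials, which are not uniformly bounded in $\ell$. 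Granted this reduction, analogous to Step~2 in the proof of Lemma~\ref{lem:com-dgamma-bounded} for the friction model, each iterated commutator splits into a product of an $H_0(P)$-bounded factor with uniformly bounded remainders, and the claimed $H_0(P)$-boundedness follows at all orders. Self-adjointness of the extension is inherited from the symmetry of the iterated form on $\Cc^\fu$ together with the invariance of $\Dc(H_0(P))$ under the unitary group of Lemma~\ref{lem:C_k^st-Nelson}.
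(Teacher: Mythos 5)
Your overall decomposition $A_{P,n;\delta}=D_\circ-S_{P,n;\delta}$ and the splitting of $[H_0(P),iS_{P,n;\delta}]$ into a $(P-\db\Gamma(k))^2$-piece and a $\db\Gamma(\omega)$-piece match the paper's strategy, and the treatment of the $D_\circ$- and $(P-\db\Gamma(k))^2$-contributions is correct: the uniform $\ell$-independent bound on $\sum_l\partial_l\widetilde s_{n,\ell;\delta}$ from~\eqref{eq:prop-tilde-sell-1} is indeed the decisive improvement over $A_{P,n}^\circ$. However, there is a genuine gap in the treatment of the term $[\db\Gamma(\omega),iS_{P,n;\delta}]$, which is precisely the technical crux.

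First, $\widetilde s_{n,\ell;\delta}$ is Lipschitz but \emph{not} smooth: the regularization by $\chi_\delta$ only smoothens the sign factor, not the functions $\max_{j,\ell}$ and $\min_{j,\ell}$, which retain derivative discontinuities across the permutation hyperplanes. Describing the multiplier as ``a bounded smooth function of $y_j$'' is therefore incorrect, and the appeal to ``standard symbolic-calculus estimates'' does not go through: Moyal-type expansions require symbol smoothness that is absent here. The correct tool for estimating $[|\nabla_{y_j}|,\,b]$ with a merely Lipschitz multiplier $b$ is Calder{\'o}n's commutator theorem (equivalently the $T(1)$ theorem), which is exactly what the paper invokes. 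This is a genuinely deeper harmonic-analysis input, not a routine pdo fact.

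Second, even granting the $T(1)$ estimate for $[|\nabla_{y_j}|,iS_{P,n,\ell;\delta}]$, you still need to handle the difference $\omega(i\nabla_{y_j})-|\nabla_{y_j}|$, which is a bounded Fourier multiplier but is not itself a Calder{\'o}n--Zygmund operator (its symbol $\sqrt{m^2+t^2}-|t|$ is only Lipschitz at the origin). The paper handles this via an almost-analytic (Helffer--Sj{\"o}strand) representation of $\omega(i\nabla_{y_j})-|\nabla_{y_j}|$ in terms of resolvents $(z-|\nabla_{y_j}|)^{-1}$, using the resolvent commutator estimate~\eqref{eq:commut-resolvent}. This reduction is entirely absent from your argument and cannot be replaced by a direct pseudodifferential bound. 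For the iteration $s\ge2$, the same two ingredients (Calder{\'o}n commutators and Helffer--Sj{\"o}strand calculus applied to the Lipschitz multipliers coming from $\widetilde s_{n,\ell;\delta}$ and its dilation/derivative descendants in~\eqref{eq:prop-tilde-sell-2}) are used, so the concern you raise about book-keeping Moyal remainders does not arise in the paper's framework, because no such expansion is performed.

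Your treatment of the first two pieces is sound, and the concluding remarks on self-adjointness and domain invariance are correct; but to close the proof you need to replace the ``standard symbolic calculus'' step by the Calder{\'o}n--Zygmund $T(1)$ estimate and add the almost-analytic reduction of $\omega(i\nabla_{y_j})$ to $|\nabla_{y_j}|$.
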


\begin{proof}
By definition~\eqref{eq:def-APn-delta-Nelson} of $A_{P,n;\delta}$, the first commutator can be split as
\begin{eqnarray*}
[H_0(P),iA_{P,n;\delta}]&=&[H_0(P),iD_\circ]-[H_0(P),iS_{P,n;\delta}]\\
&=&-\db\Gamma(k)\cdot(P-\db\Gamma(k))+\db\Gamma(k\cdot\nabla\omega(k))-[H_0(P),iS_{P,n;\delta}].
\end{eqnarray*}
Using that $|k\cdot\nabla\omega(k)|=\omega(k)-m^2\omega(k)^{-1}\le\omega(k)$,
the first two right-hand side terms are obviously $H_0(P)$-bounded operators:
we get for all $u,v\in\Cc^\fu$,
\begin{equation}\label{eq:H0P-commut-Nelson-1}
|\langle u,[H_0(P),iA_{P,n;\delta}]v\rangle|
\,\lesssim\,|P|^2\|u\|\|v\|+\|u\|\|H_0(P)v\|+|\langle u,[H_0(P),iS_{P,n;\delta}]v\rangle|,
\end{equation}
and it remains to estimate the term.
For that purpose, using position representation $y=i\nabla_\xi$ on $\Hf$, we write
\begin{eqnarray*}
[H_0(P),iS_{P,n;\delta}]&=&\tfrac12[(P+\db\Gamma(i\nabla_y))^2,iS_{P,n;\delta}]+[\db\Gamma(\omega(i\nabla_y)),iS_{P,n;\delta}]\\
&=&-\tfrac12\db\Gamma(\nabla_y)\cdot[\db\Gamma(\nabla_y),iS_{P,n;\delta}]-\tfrac12[\db\Gamma(\nabla_y),iS_{P,n;\delta}]\cdot\db\Gamma(\nabla_y)\\
&&-P\cdot[\db\Gamma(\nabla_y),S_{P,n;\delta}]
+[\db\Gamma(\omega(i\nabla_y)),iS_{P,n;\delta}],
\end{eqnarray*}
or alternatively,
\begin{multline*}
[H_0(P),iS_{P,n;\delta}]
\,=\,-[\db\Gamma(\nabla_y),S_{P,n;\delta}]\cdot(P+\db\Gamma(i\nabla_y))-\tfrac12\big[\db\Gamma(\nabla_y)\cdot,[\db\Gamma(\nabla_y),iS_{P,n;\delta}]\big]\\
+[\db\Gamma(\omega(i\nabla_y)),iS_{P,n;\delta}].
\end{multline*}
By definition~\eqref{eq:defin-SPnelldelta}--\eqref{eq:defin-SPndelta} of $S_{P,n;\delta}$,
recalling~\eqref{eq:min-symbol} and using~\eqref{eq:prop-tilde-sell-2},
the commutators $[\db\Gamma(\nabla_y),iS_{P,n;\delta}]$ and $\big[\db\Gamma(\nabla_y)\cdot,[\db\Gamma(\nabla_y),iS_{P,n;\delta}]\big]$ are bounded by $O(|P|)$. We deduce for all~\mbox{$u,v\in\Cc^\fu$},
\begin{multline}\label{eq:H0P-commut-Nelson-2}
|\langle u,[H_0(P),iS_{P,n;\delta}]v\rangle|
\,\lesssim\,|P|\|u\|\|v\|
+|P|\|u\|\|H_0(P)^\frac12v\|\\
+|\langle u,[\db\Gamma(\omega(i\nabla_y)),iS_{P,n;\delta}]v\rangle|,
\end{multline}
and it remains to estimate the last commutator $[\db\Gamma(\omega(i\nabla_y)),iS_{P,n;\delta}]$, that is, on the $\ell$-boson state space,
\begin{equation}\label{eq:split-gamma-commutS}
[\db\Gamma(\omega(i\nabla_y)),iS_{P,n;\delta}]|_{\Gamma_s^{(\ell)}(\Hf)}\,=\,\sum_{j=1}^\ell\,[\omega(i\nabla_{y_j}),iS_{P,n,\ell;\delta}].
\end{equation}
We split this task into three steps: we first show that the commutator $[|\nabla_{y_j}|,iS_{P,n,\ell;\delta}]$ is nicely bounded and then we appeal to the calculus of almost-analytic extensions to reduce the analysis of the difference $\omega(i\nabla_{y_j})-|\nabla_{y_j}|$ to that of the resolvent $(z-|\nabla_{y_j}|)^{-1}$.

\medskip
\step1 Preliminary commutator estimates: for all $1\le j\le\ell$, we have
\begin{equation}\label{eq:commut-m0-S}
\|[|\nabla_{y_j}|,iS_{P,n,\ell;\delta}]\|\,\lesssim\,\tfrac1n|P|,
\end{equation}
and in addition, for all $u_\ell,v_\ell\in\Cc^\fu\cap\Gamma_s^{(\ell)}(\Hf)$ and~$z\in\C\setminus\R$,
\begin{eqnarray}
|\langle u_\ell,[(z-|\nabla_{y_j}|)^{-1},iS_{P,n,\ell;\delta}]v_\ell\rangle|&\lesssim&\tfrac1n|P||\Im z|^{-2}\|u_\ell\|\|v_\ell\|.\label{eq:commut-resolvent}
\end{eqnarray}
We start with the proof of~\eqref{eq:commut-m0-S}. By symmetry, we focus on $j=1$. Recalling the definition of~$S_{P,n,\ell;\delta}$, cf.~\eqref{eq:defin-SPnelldelta}, and using the integral representation for $|\nabla_{y_1}|=(-\triangle_{y_1})^{1/2}$, we find for all $v_\ell\in\Cc^\fu\cap\Gamma_s^{(\ell)}(\Hf)$,
\begin{multline*}
[|\nabla_{y_1}|,iS_{P,n,\ell;\delta}]v_\ell(y_1,\ldots,y_\ell)
\,=\,C|k_\star(n,P)|\\
\times\int_{\R^d}\tfrac{1}{|y_1-y_1'|^{d+1}}\Big(i\widetilde s_{n,\ell;\delta}\big(\tfrac{P}{|P|}\cdot y_1,\tfrac{P}{|P|}\cdot y_2,\ldots,\tfrac{P}{|P|}\cdot y_\ell\big)-i\widetilde s_{n,\ell;\delta}\big(\tfrac{P}{|P|}\cdot y_1',\tfrac{P}{|P|}\cdot y_2,\ldots,\tfrac{P}{|P|}\cdot y_\ell\big)\Big)\\
\times v_\ell(y_1',y_2,\ldots,y_\ell)\,dy_1'.
\end{multline*}
Using that $\widetilde s_{n,\ell;\delta}$ is Lipschitz continuous and
appealing to the $T(1)$ theorem~\cite{David-Journe-84}, we find that this commutator defines a bounded operator on~$\Ld^2((\R^{d})^{\ell})$ with
\begin{eqnarray*}
\|[|\nabla_{y_1}|,iS_{P,n,\ell;\delta}]\|
\,\lesssim\,|k_\star(n,P)|\|\nabla_1\widetilde s_{n,\ell;\delta}\|_{\Ld^\infty(\R^\ell)}.
\end{eqnarray*}
Recalling~\eqref{eq:min-symbol} and noting that $\|\nabla_1\widetilde s_{n,\ell;\delta}\|_{\Ld^\infty(\R^\ell)}\lesssim1$, the claim~\eqref{eq:commut-m0-S} follows.
We turn to the proof of~\eqref{eq:commut-resolvent}. For all $z\in\C\setminus\R$, we can write
\begin{equation*}
[(z-|\nabla_{y_j}|)^{-1},iS_{P,n,\ell;\delta}]
\,=\,-(z-|\nabla_{y_j}|)^{-1}[|\nabla_{y_j}|,iS_{P,n,\ell;\delta}](z-|\nabla_{y_j}|)^{-1}.
\end{equation*}
As $\|(z-|\nabla_{y_j}|)^{-1}\|\le|\Im z|^{-1}$, the claim~\eqref{eq:commut-resolvent} is a direct consequence of~\eqref{eq:commut-m0-S}.

\medskip
\step2 Conclusion.\\
We start by decomposing
\begin{multline}\label{eq:decomp-omega-abs}
\langle u_\ell,[\omega(i\nabla_{y_j}),iS_{P,n,\ell;\delta}]v_\ell\rangle\\
\,=\,\langle u_\ell,[|\nabla_{y_j}|,iS_{P,n,\ell;\delta}]v_\ell\rangle
+\langle u_\ell,[\omega(i\nabla_{y_j})-|\nabla_{y_j}|,iS_{P,n,\ell;\delta}]v_\ell\rangle.
\end{multline}
In order to estimate the second right-hand side term, we appeal to the calculus of almost-analytic extensions, e.g.~\cite[Proposition~C.2.2]{Derezinski-Gerard-97}: there exist $f\in C^\infty(\C)$ and constants $C,C_N<\infty$ such that
\begin{gather*}
f(t)=\chi(t)\big((m^2+t^2)^\frac12-t\big),\quad\forall t\in\R,\\
|\tfrac{\partial f}{\partial\bar z}(z)|\le C_N\langle \Re z\rangle^{-N-2}|\Im z|^N,\quad\forall N\in\N,\\
\supp f\subset\{z\in\C:|\Im z|\le C\langle\Re z\rangle\},
\end{gather*}
where $\chi\in C^\infty(\R)$ is a cut-off function such that $\chi(t)=1$ for $t\ge0$ and $\chi(t)=0$ for $t\le-1$.
We can then represent
\[f(t)\,=\,\tfrac{i}{2\pi}\int_\C(z-t)^{-1}\tfrac{\partial f}{\partial\bar z}(z)\,dz\wedge d\bar z,\]
hence
\[\omega(i\nabla_{y_j})-|\nabla_{y_j}|\,=\,\tfrac i{2\pi}\int_\C(z-|\nabla_{y_j}|)^{-1}\tfrac{\partial f}{\partial\bar z}(z)\,dz\wedge d\bar z.\]
Using this representation to rewrite the second right-hand side term in~\eqref{eq:decomp-omega-abs}, and using properties of $f$ as well as commutator estimates~\eqref{eq:commut-m0-S} and~\eqref{eq:commut-resolvent}, we get
\begin{eqnarray*}
|\langle u_\ell,[\omega(i\nabla_{y_j}),iS_{P,n,\ell;\delta}]v_\ell\rangle|
\,\lesssim\,\tfrac1n|P|\|u_\ell\|\|v_\ell\|.
\end{eqnarray*}
In view of~\eqref{eq:split-gamma-commutS}, as $mN\le H_0(P)$, this entails for all $u,v\in\Cc^\fu$,
\begin{equation*}
|\langle u,[\db\Gamma(\omega(i\nabla_y)),iS_{P,n;\delta}]v\rangle|\,\lesssim\,\tfrac1n|P|\|u\|\|H_0(P)v\|.
\end{equation*}
Combined with~\eqref{eq:H0P-commut-Nelson-1} and~\eqref{eq:H0P-commut-Nelson-2}, this implies that the first commutator $[H_0(P),iA_{P,n;\delta}]$ satisfies for all $u,v\in\Cc^\fu$,
\[|\langle u,[H_0(P),iA_{P,n;\delta}]v\rangle|\,\lesssim\,|P|\|u\|\|v\|+|P|\|u\|\|H_0(P)v\|,\]
hence it extends uniquely to the form of an $H_0(P)$-bounded self-adjoint operator.
Similarly computing iterated commutators and using~\eqref{eq:prop-tilde-sell-2}, the full conclusion easily follows; we skip the detail.
\end{proof}

Finally, we state that the fiber interaction Hamiltonian $\Phi(\rho)$ still has the same $C^\infty$-regularity with respect to $A_{P,n;\delta}$ as in~Lemma~\ref{lem:limit-reg-C2}(iii), thus establishing Theorem~\ref{th:main/Nelson}(iii).
The proof is a straightforward adaptation of that of Lemma~\ref{lem:com-dgamma-bounded}, now appealing to Lemma~\ref{lem:S-function} instead of Lemma~\ref{lem:m-function}; we skip the detail.

\begin{lem}
Let the interaction kernel $\rho$ belong to $H^\nu(\R^d)$ with $\langle k\rangle^\nu\nabla^\nu\rho\in\Ld^2(\R^d)$ for some $\nu\ge1$. Then, for all $0\le s\le\nu$, the $s$-th iterated commutator $\ad^s_{iA_{P,n;\delta}}(\Phi(\rho))$ extends as an $N^{1/2}$-bounded self-adjoint operator.
\end{lem}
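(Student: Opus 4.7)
The plan is to mirror the structure of the proof of Lemma~\ref{lem:com-dgamma-bounded} for the quantum friction model, with Lemma~\ref{lem:S-function} replacing Lemma~\ref{lem:m-function} and with the single-variable position representation $y:=i\nabla_k$ on $\Hf=\Ld^2(\R^d)$ in place of its two-variable analogue. First I would decompose, using~\eqref{eq:def-APn-delta-Nelson},
\[
[\Phi(\rho),iA_{P,n;\delta}]\,=\,-\Phi(id_\circ\rho)-[\Phi(\rho),iS_{P,n;\delta}],
\]
so that the second-quantized part $-\Phi(id_\circ\rho)$ is handled by the standard $N^{1/2}$-estimate as soon as $d_\circ\rho\in\Ld^2(\R^d)$, which holds under the assumption $\rho\in H^1(\R^d)$ with $\langle k\rangle\nabla\rho\in\Ld^2(\R^d)$.

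Next I would attack the genuinely non-second-quantized piece $[\Phi(\rho),iS_{P,n;\delta}]$. By symmetry it suffices to treat $[a^*(\rho),iS_{P,n;\delta}]$; denoting by $\tilde\rho$ the inverse Fourier transform of $\rho$, an explicit computation analogous to~\eqref{eq:form-commut-ast-MP} yields, on an $\ell$-boson state,
\begin{multline*}
\big([a^*(\rho),iS_{P,n;\delta}]u_\ell\big)(y_1,\dots,y_{\ell+1})\,=\,\tfrac{|k_\star(n,P)|}{\sqrt{\ell+1}}\sum_{j=1}^{\ell+1}\tilde\rho(y_j)\,u_\ell\big((y_l)_{l\in\{1,\dots,\ell+1\}\setminus\{j\}}\big)\\
\times i\Big(\widetilde s_{n,\ell;\delta}\big((\tfrac{P}{|P|}\cdot y_l)_{l\neq j}\big)-\widetilde s_{n,\ell+1;\delta}\big((\tfrac{P}{|P|}\cdot y_l)_{l=1}^{\ell+1}\big)\Big).
\end{multline*}
Lemma~\ref{lem:S-function} then bounds the bracketed difference pointwise by $2|y_j|+1$, uniformly in $n$ and $\ell$, and, combined with $|k_\star(n,P)|\lesssim|P|$, delivers the pointwise domination
\[
\big|[a^*(\rho),iS_{P,n;\delta}]u_\ell\big|\,\lesssim\,|P|\Big(\tilde a^*(|y\tilde\rho|)+\tilde a^*(|\tilde\rho|)\Big)|u_\ell|,
\]
which is $N^{1/2}$-bounded provided $\nabla\rho,\rho\in\Ld^2(\R^d)$, settling the case $s=1$.

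For the iterated commutators with $s\leq\nu$, I would again split $A_{P,n;\delta}=D_\circ-S_{P,n;\delta}$ and reduce to products of $\ad_{iD_\circ}$ and $\ad_{iS_{P,n;\delta}}$ applied to $a^*(\rho)$. Iterating the explicit formula above produces the $s$-th power of the difference $\widetilde s_{n,\ell;\delta}-\widetilde s_{n,\ell+1;\delta}$, pointwise dominated by $(2|y_j|+1)^s$ through Lemma~\ref{lem:S-function}, while the interlacing with $\ad_{iD_\circ}$ amounts—in complete analogy with the identity~\eqref{eq:identity-ad-ad-s} of the friction proof—either to replacing $\rho$ by $d_\circ\rho$ or to producing lower-order remainders built from the multipliers $\sum_l z_l\partial_l\widetilde s_{n,\ell;\delta}-\widetilde s_{n,\ell;\delta}$, which are uniformly bounded in $\ell$ by~\eqref{eq:prop-tilde-sell-2} (and in any case bounded by a constant depending only on $n$, which is fixed). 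After Fourier transform the $s$-th iterated commutator is thus controlled by a finite linear combination of field operators whose kernels are of the form $\langle k\rangle^{s_1}\nabla^{s_2}\rho$ with $s_1+s_2\leq s\leq\nu$, and the assumed regularity $\rho\in H^\nu(\R^d)$ with $\langle k\rangle^\nu\nabla^\nu\rho\in\Ld^2(\R^d)$ is exactly what is needed to deduce $N^{1/2}$-boundedness. The main subtlety, and the reason for the careful regularization of $\widetilde s_{j,\ell;\delta}$, is that a na\"ive second-quantized replacement would generate factors growing with $\ell$ in the pointwise bound and so fail to be controlled uniformly by $N^{1/2}$; the telescoping-type estimate of Lemma~\ref{lem:S-function}, uniform in $\ell$ and $n$, is precisely what prevents this from happening.
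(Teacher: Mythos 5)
Your argument reproduces exactly the adaptation the paper has in mind: the paper's own treatment of this lemma merely remarks that the proof is a straightforward adaptation of the friction-model Lemma~\ref{lem:com-dgamma-bounded}, replacing Lemma~\ref{lem:m-function} by Lemma~\ref{lem:S-function}, and omits the details. Your decomposition $A_{P,n;\delta}=D_\circ-S_{P,n;\delta}$, the explicit formula for $[a^*(\rho),iS_{P,n;\delta}]$ on the $\ell$-boson sector, the pointwise domination via Lemma~\ref{lem:S-function}, and the iteration using the analogue of~\eqref{eq:identity-ad-ad-s} together with~\eqref{eq:prop-tilde-sell-2} are precisely the intended steps, so the proposal is correct and follows the paper's approach.
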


\subsection{Consequences of Mourre estimate}
Given a total momentum $|P|>|P_\star|$,
letting~\mbox{$n_P\ge1$} be defined via~\eqref{eq:def-nP-statement-1}, we turn to the proof of Corollary~\ref{cor:main2}.
By items~(i) and~(iii) in Theorem~\ref{th:main/Nelson}, the sufficient criterion in Lemma~\ref{lem:suff-crit-smooth} ensures that the coupled fiber Hamiltonian $H_g(P)$ is of class $C^\infty(A_{P,n;\delta})$ for all~$n,g$.
Next, by Theorem~\ref{th:main/Nelson}(ii), for~$n=1$ and for any $n\ge n_P$, for all $\e>0$, Lemma~\ref{lem:pert-Mourre} allows to infer that $H_g(P)$ satisfies a Mourre estimate with respect to~$A_{P,n;\delta}$ on the energy interval
\[\Big(E_0^{(n)}(P)+\e+\tfrac{gC_{P,n}}{\e}\,,\,E_0^{(n+1)}(P)-\tfrac{gC_{P,n}}{\e}\Big),\]
for some constant $C_{P,n}$. Optimizing in $\e$, we deduce that $H_g(P)$ satisfies a Mourre estimate on
\[J_{P,n;g}\,:=\,\Big(E_0^{(n)}(P)+\sqrt gC_{P,n}\,,\,E_0^{(n+1)}(P)-gC_{P,n}\Big).\]
Moreover, the Mourre estimate is strict outside $K_{P,n;g}:=\big[\frac12|P|^2-gC_{P,n}\,,\,\frac12|P|^2+gC_{P,n}\big]$.
We may then appeal to Theorem~\ref{th:Mourre-spectrum}, which states that $H_g(P)$ has no singular spectrum and at most a finite number of eigenvalues in $J_{P,n;g}$, and has no eigenvalue in $J_{P,n;g}\setminus K_{P,n;g}$.
In order to exclude the existence of eigenvalues in $K_{P,n;g}$, we appeal to Theorem~\ref{th:instab}, which states the instability of the uncoupled eigenvalue~$\frac12P^2$ provided that Fermi's condition~\eqref{eq:Fermi-cond0} holds.
Altogether, this proves item~(i) of Corollary~\ref{cor:main2}, and item~(ii) follows by further applying Theorem~\ref{th:CGH}.
It remains to make Fermi's condition~\eqref{eq:Fermi-cond0} more explicit for the model at hand, which we is the purpose of the following lemma; the proof is analogous to that of Lemma~\ref{Fermi-fric} and is skipped for brevity.

\begin{lem}\label{Fermi}
For all $|P|>|P_\star|$, we have
\begin{multline*}
\lim_{\e\downarrow0}\Big\langle\Omega\,,\,\Phi( \rho )\bar\Pi_\Omega \big( H_0(P) - \tfrac12P^2 - i\e\big)^{-1}\bar\Pi_\Omega \Phi( \rho )\Omega\Big\rangle\\
\,=\,(2\pi)^{-d}\,\pv\int_{E_0^{(1)}(P)}^\infty( t- \tfrac12P^2)^{-1}\bigg(\int_{\{k\,:\,\frac12(P-k)^2+\omega(k)=t\}}\tfrac{|\rho(k)|^2}{|k-P+\nabla\omega(k)|}\db\mathcal H_{d-1}(k)\bigg)\db t\\
+\tfrac i2(2\pi)^{1-d}\int_{\{k\,:\,\frac12(P-k)^2+\omega(k)=\frac12P^2\}}\tfrac{|\rho(k)|^2}{|k-P+\nabla\omega(k)|}\db\mathcal H_{d-1}(k),
\end{multline*}
where $\Hc_{d-1}$ stands for the $(d-1)$th-dimensional Hausdorff measure.
In particular, the imaginary part is positive if $\rho$ is nowhere vanishing.
\end{lem}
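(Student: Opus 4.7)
The strategy mirrors that of Lemma~\ref{Fermi-fric}: expand the matrix element explicitly on the one-boson sector, reduce the momentum integral to a coarea decomposition, and invoke the Plemelj formula.

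First, since $a(\rho)\Omega=0$, there holds $\Phi(\rho)\Omega=a^*(\rho)\Omega\in\Gamma_s^{(1)}(\Hf)$, and this vector lies in $\C\Omega^\bot$, so the projection $\bar\Pi_\Omega$ acts as the identity on it. Because $H_0(P)$ commutes with the number operator and its restriction to $\Gamma_s^{(1)}(\Hf)$ is the multiplication operator by the symbol
\[H_0^{(1)}(P;k)\,=\,\tfrac12(P-k)^2+\omega(k),\]
we obtain, for each $\e>0$,
\begin{equation*}
\Big\langle\Omega,\Phi(\rho)\bar\Pi_\Omega(H_0(P)-\tfrac12P^2-i\e)^{-1}\bar\Pi_\Omega\Phi(\rho)\Omega\Big\rangle
\,=\,\int_{\R^d}\frac{|\rho(k)|^2}{H_0^{(1)}(P;k)-\frac12P^2-i\e}\,\dbar k.
\end{equation*}

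Next, as $H_0^{(1)}(P;\cdot)$ is smooth with gradient $k-P+\nabla\omega(k)$, and recalling from Lemma~\ref{lem:Sigma0-n} that its essential image equals $[E_0^{(1)}(P),\infty)$, the coarea formula yields
\begin{multline*}
\int_{\R^d}\frac{|\rho(k)|^2}{H_0^{(1)}(P;k)-\frac12P^2-i\e}\,\dbar k\\
=(2\pi)^{-d}\int_{E_0^{(1)}(P)}^\infty\!\frac{1}{t-\frac12P^2-i\e}\bigg(\int_{\{k:H_0^{(1)}(P;k)=t\}}\!\tfrac{|\rho(k)|^2}{|k-P+\nabla\omega(k)|}\,\db\mathcal H_{d-1}(k)\bigg)\,\db t,
\end{multline*}
where the inner integral is well-defined for a.e.\ $t$ because the set of critical values of $H_0^{(1)}(P;\cdot)$ reduces to the single point $E_0^{(1)}(P)$, attained at the unique minimizer identified in~\eqref{eq:min-symbol}.

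Finally, I would pass to the limit $\e\downarrow0$ via the Plemelj formula. The assumption $|P|>|P_\star|$ ensures $\frac12P^2>E_0^{(1)}(P)$, so the pole lies strictly in the interior of the integration interval and is not at a critical value; combined with the integrability of the inner density (which is implied by the decay of $|\rho|^2$ and the non-degeneracy $|\nabla H_0^{(1)}(P;\cdot)|\ne0$ away from the single critical point), this yields the announced decomposition into a principal value and a delta contribution at $t=\frac12P^2$. Positivity of the imaginary part when $\rho$ is nowhere vanishing is immediate since the level set $\{k:\frac12(P-k)^2+\omega(k)=\frac12P^2\}$ is nonempty with positive $\mathcal H_{d-1}$-measure as $\frac12P^2>E_0^{(1)}(P)$. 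The main (mild) obstacle is checking integrability and the applicability of Plemelj uniformly down to $t=E_0^{(1)}(P)$, which follows from the explicit form of $|\nabla H_0^{(1)}(P;k)|$ away from the minimizer and the assumed regularity/decay of $\rho$.
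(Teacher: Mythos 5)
Your proposal is correct and follows exactly the same route the paper sketches (by reference to its proof of Lemma~\ref{Fermi-fric}): reduce to the one-boson sector via $\Phi(\rho)\Omega=a^*(\rho)\Omega$, apply the coarea formula with the level sets of $H_0^{(1)}(P;\cdot)$, and conclude by Plemelj. The only difference is that you give slightly more detail on the non-degeneracy of $\nabla H_0^{(1)}(P;\cdot)$ away from the unique minimizer, which the paper leaves implicit.
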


\appendix
\section{Mourre's commutator method}\label{app:Mourre}
In this appendix, we briefly recall for convenience standard definitions and statements from Mourre's theory that we use in this work; we refer e.g.\@ to~\cite{ABMG-96,GGM04a} for more detail.
We start with the notion of regularity with respect to a self-adjoint operator, which is crucial to define commutators and deal with domain issues.

\begin{defin}[Regularity]\label{def:Ck}
Let $A$ be a self-adjoint operator on a Hilbert space $\Hc$.
\begin{enumerate}[---]
\item A bounded operator $B$ on $\Hc$ is said to be of class $C^k(A)$ if for all $\phi\in\Hc$ the function $t\mapsto e^{-itA} B e^{itA}\phi$ is $k$-times continuously differentiable.
\smallskip\item A self-adjoint operator $H$ on $\Hc$ is said to be of class $C^k(A)$ if its resolvent $(H-z)^{-1}$ is of class $C^k(A)$ for some $z\in\C\setminus\R$.\qedhere
\end{enumerate}
\end{defin}

We recall the following characterization: a bounded operator $B$ is of class $C^1(A)$ if and only if it maps $ \Dc(A)$ into itself and if the commutator $\ad_{iA}(B):=[B,iA]$ extends uniquely from $\Dc(A)$ to a bounded operator on $\Hc$.
Therefore, if $H$ is a self-adjoint operator of class~$C^1(A)$, we may use the resolvent identity $[(H-z)^{-1},iA]=-(H-z)^{-1}[H,iA](H-z)^{-1}$ in the sense of forms on $\Dc(A)$, and we infer that the commutator $\ad_{iA}(H):=[H,iA]$ extends uniquely from $\Dc(H)\cap\Dc(A)$ to a bounded form on $\Dc(H)$. Equivalently, this means for all $\phi,\psi\in\Dc(H)\cap\Dc(A)$,
\begin{equation}\label{equiv:C1A}
|\langle\phi,[H,iA]\psi\rangle_\Hc|\,\lesssim\,\|(|H|+1)\phi\|_\Hc\|(|H|+1)\psi\|_\Hc.
\end{equation}
In fact, we state that the converse is also true under a technical assumption; see e.g.~\cite[Theorem~6.3.4]{ABMG-96}.
\begin{lem}[Characterization of regularity; \cite{ABMG-96}]
Let $A$ and $H$ be self-adjoint operators on a Hilbert space $\Hc$, and assume that the unitary group generated by $A$ leaves the domain of $H$ invariant,
\begin{equation}\label{eq:invar-HA}
e^{itA}\Dc(H)\subset\Dc(H)\quad\text{for all $t\in\R$}.
\end{equation}
Then, the domain $\Dc(H)\cap\Dc(A)$ is a core for $H$. In addition, $H$ is of class $C^1(A)$ if and only if~\eqref{equiv:C1A} holds.
\end{lem}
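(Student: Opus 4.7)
The plan is to treat the two assertions separately, using a mollification argument for the core property and the resolvent identity for the characterization of $C^1(A)$.

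First, I would establish that $\Dc(H)\cap\Dc(A)$ is a core for $H$. Fix $\phi\in\Dc(H)$ and introduce the mollified vectors
\[\phi_n \,:=\, \sqrt{n/\pi}\int_\R e^{-nt^2}\,e^{itA}\phi\,dt,\]
which lie in $\Dc(A^k)$ for every $k$ by a standard Fourier argument and converge to $\phi$ in $\Hc$. The nontrivial point is that $\phi_n\in\Dc(H)$ with $H\phi_n\to H\phi$. Here the invariance hypothesis $e^{itA}\Dc(H)\subset\Dc(H)$ is used twice: endowed with the graph norm $\|\cdot\|+\|H\cdot\|$, $\Dc(H)$ is a Hilbert space, and closedness of $H$ together with the closed graph theorem forces each $e^{itA}$ to be bounded on this space. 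The strong continuity of $t\mapsto e^{itA}\phi$ in $\Hc$ combined with the uniform boundedness principle then upgrades this to local boundedness of $t\mapsto\|He^{itA}\phi\|$, allowing the operator $H$ to be moved inside the defining integral of $\phi_n$ and conclude convergence in graph norm.

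Second, I would treat the equivalence. Necessity is the easier direction: if $H\in C^1(A)$, the resolvent $R(z):=(H-z)^{-1}$ maps $\Dc(A)$ into $\Dc(A)$ and $[R(z),iA]$ extends as a bounded operator on $\Hc$. Inverting the resolvent identity $[R(z),iA]=-R(z)[H,iA]R(z)$ as a form on $\Dc(A)$ and using $\Ran R(z)=\Dc(H)$ produces the bound \eqref{equiv:C1A} first on $\Dc(H)\cap\Dc(A)$, and then extends it to $\Dc(H)$ by the core property proven above. For sufficiency, \eqref{equiv:C1A} identifies $[H,iA]$ with a bounded sesquilinear form $T$ on $\Dc(H)$, equivalently an element of $\mathcal B(\Dc(H),\Dc(H)^*)$. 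The candidate commutator is then $C(z):=-R(z)\,T\,R(z)$, which is a bounded operator on $\Hc$. For $\phi,\psi\in\Dc(A)$, invariance of $\Dc(H)$ yields the identity
\[e^{itA}R(z)-R(z)e^{itA}\,=\,R(z)\big(e^{itA}H-He^{itA}\big)R(z)\]
as an equality of bounded operators; dividing by $t$, using \eqref{equiv:C1A} and the core property to pass to the limit, gives $\langle\phi,[R(z),iA]\psi\rangle=\langle\phi,C(z)\psi\rangle$. Thus $R(z)$ maps $\Dc(A)$ into itself with bounded commutator, which is the standard criterion for $R(z)\in C^1(A)$, hence $H\in C^1(A)$.

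The main obstacle is the interplay of two topologies on $\Dc(H)$ in the mollification step: the strong continuity of $\{e^{itA}\}$ is assumed only on $\Hc$, and must be transferred to the graph topology of $H$ in order to justify commuting $H$ with the mollifier integral. Closed graph and uniform boundedness are the right tools, but one must verify measurability and local boundedness of $t\mapsto He^{itA}\phi$ without circularity, since it is precisely continuity in the graph norm that one is trying to exploit. Once this regularization step is in place, the rest of the argument reduces to clean form-theoretic manipulations with the resolvent.
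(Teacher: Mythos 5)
The paper does not prove this lemma: it is stated with a pointer to \cite[Theorem~6.3.4]{ABMG-96}, so there is no in-paper proof to compare against, and your task here is to reconstruct the textbook argument. Your outline captures the right overall strategy, but there is a genuine gap in the mollification step establishing the core property. You claim that strong continuity of $t\mapsto e^{itA}\phi$ in $\Hc$, together with the uniform boundedness principle, ``upgrades'' the boundedness of each $e^{itA}$ on $(\Dc(H),\|\cdot\|_{\Dc(H)})$ to local boundedness of $t\mapsto\|He^{itA}\phi\|$. This is circular: the uniform boundedness principle needs as \emph{input} the pointwise bound $\sup_{|t|\le T}\|He^{itA}\phi\|<\infty$ for each fixed $\phi\in\Dc(H)$, which is exactly the property you are trying to establish. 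The argument in \cite{ABMG-96} proceeds differently. For each fixed $\phi\in\Dc(H)$ the map $t\mapsto\|He^{itA}\phi\|$ is lower semicontinuous: if $t_n\to t$ and $\|He^{it_nA}\phi\|$ stays bounded, then $e^{it_nA}\phi\to e^{itA}\phi$ in $\Hc$ and, by closedness of $H$, any weak cluster point of $He^{it_nA}\phi$ equals $He^{itA}\phi$, so $\|He^{itA}\phi\|\le\liminf_n\|He^{it_nA}\phi\|$. Hence $M(t):=\|e^{itA}\|_{\mathcal B(\Dc(H))}$, as a supremum of lower semicontinuous functions, is lower semicontinuous, finite everywhere, and submultiplicative by the group law. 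A Baire category argument on the closed sets $\{t:M(t)\le n\}$ then yields boundedness on some interval, and submultiplicativity propagates this to local boundedness everywhere. Only after this is the $C_0$-group structure on $\Dc(H)$ available, and only then can $H$ be moved through the Gaussian mollifier.

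The remainder of your outline --- reducing both directions of the $C^1(A)$ characterization to manipulations of the resolvent identity and passing to the difference-quotient limit --- is in the right spirit and matches the reference. One small slip: the algebraic identity should read
\begin{equation*}
e^{itA}R(z)-R(z)e^{itA}\,=\,R(z)\big(He^{itA}-e^{itA}H\big)R(z),
\end{equation*}
i.e.\ the opposite sign from what you wrote; this follows from $[e^{itA},R(z)]=-R(z)[e^{itA},H]R(z)$, which is valid on $\Hc$ precisely because the invariance \eqref{eq:invar-HA} guarantees $e^{itA}R(z)u\in\Dc(H)$.
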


We could write down similar characterizations for higher regularity, but we shall only need the following sufficient criterion in case of $H$-bounded commutators. Note that this $H$-boundedness condition is much stronger than~\eqref{equiv:C1A} and is not always satisfied; see in particular our setting in Section~\ref{sec:conjug/Nelson}.

\begin{lem}[Sufficient criterion for higher regularity; \cite{ABMG-96}]\label{lem:suff-crit-smooth}
Let $A$ and $H$ be self-adjoint operators on a Hilbert space $\Hc$, and assume that the unitary group generated by $A$ leaves the domain of $H$ invariant, cf.~\eqref{eq:invar-HA}.
Given $\nu\ge1$, assume iteratively for all $0\le s\le\nu$, starting with $\ad_{iA}^0(H):=H$, that the iterated commutator $\ad_{iA}^s(H)$ is defined as a form on $\Dc(H)\cap\Dc(A)$ and satisfies
\begin{equation}\label{eq:ad-iterate-bnd}
\||\langle\phi,\ad_{iA}^s(H)\psi\rangle|\,\lesssim\,\|\phi\|_\Hc\|(|H|+1)\psi\|_\Hc,
\end{equation}
which entails that $\ad_{iA}^s(H)$ extends uniquely to the form of an $H$-bounded operator and that the next commutator $\ad_{iA}^{s+1}(H):=[\ad_{iA}^{s}(H),iA]$ is also well-defined as a form on $\Dc(H)\cap\Dc(A)$. Then, $H$ is of class $C^\nu(A)$.
\end{lem}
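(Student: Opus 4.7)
The plan is to pass from the stated $H$-boundedness of the iterated commutators $\ad_{iA}^s(H)$ for $0\le s\le\nu$ to the boundedness on $\Hc$ of the iterated commutators of the resolvent $R(z):=(H-z)^{-1}$, for some fixed $z\in\C\setminus\R$. Indeed, by Definition~\ref{def:Ck}, $H$ is of class $C^\nu(A)$ if and only if $R(z)$ is, and a standard consequence of Duhamel's formula is that a bounded operator $B$ belongs to $C^\nu(A)$ as soon as each iterated commutator $\ad_{iA}^s(B)$, $1\le s\le\nu$, extends from $\Dc(A)$ to a bounded operator on $\Hc$. The whole task thus reduces to producing such extensions for $\ad_{iA}^s(R(z))$.

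The base case $s=1$ is immediate: the invariance hypothesis~\eqref{eq:invar-HA} ensures that $\Dc(H)\cap\Dc(A)$ is a core for $H$, on which the form-level resolvent identity
\[
[R(z),iA]~=~-R(z)\,[H,iA]\,R(z)
\]
holds, and the assumption~\eqref{eq:ad-iterate-bnd} for $s=1$ makes the right-hand side a bounded operator on $\Hc$, since $R(z)$ maps $\Hc$ continuously into $\Dc(H)$ and $[H,iA]$ is $H$-bounded.

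For higher orders, I would proceed by induction on $s$ and establish a Leibniz-type expansion
\[
\ad_{iA}^s\bigl(R(z)\bigr)~=~\sum_{k=1}^{s}\sum_{\substack{s_1,\ldots,s_k\ge1\\ s_1+\cdots+s_k=s}} c_{s_1,\ldots,s_k}\,R(z)\,\ad_{iA}^{s_1}(H)\,R(z)\,\cdots\,R(z)\,\ad_{iA}^{s_k}(H)\,R(z),
\]
for combinatorial coefficients $c_{s_1,\ldots,s_k}\in\mathbb{Z}$, which is verified inductively by combining the base case with the derivation property of $\ad_{iA}$ on products of bounded operators. Granted this identity, each factor $\ad_{iA}^{s_j}(H)R(z)$ is bounded on $\Hc$ by the $H$-boundedness assumption, so the whole product is bounded, yielding the required extension of $\ad_{iA}^s(R(z))$.

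The main obstacle is the rigorous justification of the Leibniz expansion at the form level: the iterated commutators $\ad_{iA}^s(H)$ are a priori only defined as forms on $\Dc(H)\cap\Dc(A)$, so one must verify inductively that the intermediate forms $\ad_{iA}^s(R(z))$ and the factors $\ad_{iA}^{s_j}(H)R(z)$ extend consistently from $\Dc(A)$ to $\Dc(H)$ and then to $\Hc$, using the core property granted by~\eqref{eq:invar-HA}. This is standard but somewhat delicate bookkeeping, and Lemma~\ref{lem:suff-crit-smooth} is essentially a clean packaging of this procedure; the full argument can be found in~\cite[Sec.~6.2--6.3]{ABMG-96}.
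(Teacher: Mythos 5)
The paper does not prove this lemma but defers to~\cite{ABMG-96}, and your argument is precisely the standard resolvent-expansion argument that the cited reference uses: pass to $R(z)=(H-z)^{-1}$, use the characterization of $C^\nu(A)$ for bounded operators via boundedness of iterated commutators, and prove by induction a Leibniz-type expansion of $\ad_{iA}^s(R(z))$ in which every factor $\ad_{iA}^{s_j}(H)\,R(z)$ is bounded thanks to the assumed $H$-boundedness. The combinatorial expansion formula is correct (one can check it against $s=1,2$), your identification of the delicate point (justifying the derivation property at the form level, given that $\ad_{iA}^{s_j}(H)$ is merely $H$-bounded and only a priori defined as a form on $\Dc(H)\cap\Dc(A)$) is the right one, and the invariance hypothesis~\eqref{eq:invar-HA} together with the core property are exactly what is needed to push that bookkeeping through. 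So the proposal is correct and essentially coincides with the argument the paper is citing.
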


With these regularity assumptions at hand, we may now turn to Mourre commutator estimates, which constitute a key tool for spectral analysis.

\begin{defin}[Mourre estimates]
Let $A$ be a self-adjoint operator on a Hilbert space~$\Hc$, let $H$ be a self-adjoint operator of class $C^1(A)$, and let $J\subset\R$ be a bounded open interval. The operator $H$ is said to satisfy a {\it Mourre estimate on $J$} with respect to the {\it conjugate operator} $A$ if there exists a constant $c_0>0$ and a compact operator $K$ such there holds in the sense of forms,
\[\mathds1_J(H)[H,iA]\mathds1_J(H)\,\ge\, c_0\mathds1_J(H)+K.\]
The Mourre estimate is said to be {\it strict} if it holds with $K=0$, and the constant $c_0$ is referred to as the Mourre constant.
\end{defin}

The main motivation for these commutator estimates is that they lead to precise information on the nature of the spectrum of $H$; see~\cite{Mourre-80,ABMG-96}.

\begin{theor}[Mourre's theory; \cite{Mourre-80,ABMG-96}]\label{th:Mourre-spectrum}
Let $A$ be a self-adjoint operator on a Hilbert space $\Hc$, let $H$ be a self-adjoint operator of class $C^1(A)$, and assume that $H$ satisfies a Mourre estimate with respect to $A$ on a bounded open interval $J\subset\R$. Then the following properties hold:
\begin{enumerate}[---]
\item $H$ has at most a finite number of eigenvalues in $J$ (counting multiplicities);
\item if $H$ is of class $C^2(A)$, then $H$ has no singular continuous spectrum in $J$;
\item if the Mourre estimate is strict, then $H$ has no eigenvalue in $J$.
\qedhere
\end{enumerate}
\end{theor}

Next, we adapt these developments to the setting of perturbation theory. First, the following standard lemma states that, if $H$ satisfies a Mourre estimate and if a perturbation~$V$ is sufficiently regular, then the perturbed operators $H_g:=H+gV$ also satisfy a corresponding Mourre estimate for $g$ small enough. In view of Section~\ref{sec:conjug/Nelson}, care is taken not to assume that $[H,iA]$ be $H$-bounded; the outline of the proof is included for convenience.

\begin{lem}[Mourre estimates under perturbations]\label{lem:pert-Mourre}
Let $A$ be a self-adjoint operator on a Hilbert space $\Hc$,
let $H$ be a self-adjoint operator of class $C^1(A)$,
let $V$ be a symmetric~$|H|^{1/2}$-bounded operator,
and assume that:
\begin{enumerate}[---]
\item the commutator $[H,iA]$ satisfies the following strengthened version of~\eqref{equiv:C1A},
\begin{equation}\label{eq:C1A-bnd}
|\langle\phi,[H,iA]\psi\rangle_\Hc|\,\lesssim\,\|(|H|+1)^\frac12\phi\|_\Hc\|(|H|+1)\psi\|_\Hc;
\end{equation}
\item the commutator $[V,iA]$ extends as an $H$-bounded operator, in the sense that
\begin{equation}\label{eq:C1A-bnd-2}
|\langle\phi,[V,iA]\psi\rangle_\Hc|\,\lesssim\,\|\phi\|_\Hc\|(|H|+1)\psi\|_\Hc.
\end{equation}
\end{enumerate}
Then the following properties hold.
\begin{enumerate}[(i)]
\item The perturbed operator $H_g=H+gV$ is self-adjoint on~$\Dc(H)$ and is of class $C^1(A)$ for all $g\in\R$.
\smallskip\item Further assume that $H$ satisfies a Mourre estimate with respect to $A$ on a bounded interval $(a,b)$, with constant $c_0$.
Then $H_g$ satisfies a Mourre estimate with respect to~$A$ on the restricted interval
\[(a+\eta,b-\eta),\qquad \eta:=\tfrac{gC}{c_0} (1+|a|+|b|)^{\frac32},\]
for some constant $C$ only depending on the multiplicative constants in~\eqref{eq:C1A-bnd}--\eqref{eq:C1A-bnd-2}.
If in addition~$[H,iA]$ is $H$-bounded, in the sense that $(|H|+1)^{1/2}\phi$ can be replaced by~$\phi$ in the right-hand side of~\eqref{eq:C1A-bnd}, then the same holds with $\eta=\frac{gC}{c_0}(1+|a|+|b|)$.
Finally, if the Mourre estimate for~$H$ is strict, then the one for $H_g$ is strict too.
\smallskip\item Further assume that $H$ is of class $C^2(A)$ and that
$[[V,iA],iA]$ extends as an $H$-bounded operator.
Then, $H_g$ is of class $C^2(A)$ for all $g\in\R$.\qedhere
\end{enumerate}
\end{lem}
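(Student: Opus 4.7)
For item~(i), the self-adjointness of $H_g$ on $\Dc(H)$ follows from the Kato--Rellich theorem: since $V$ is $|H|^{1/2}$-bounded, the elementary inequality $\||H|^{1/2}\phi\|\le\e\|H\phi\|+C_\e\|\phi\|$ makes $V$ infinitesimally $H$-bounded for every $\e>0$. For the $C^1(A)$-regularity, the plan is to combine hypotheses~\eqref{eq:C1A-bnd} and~\eqref{eq:C1A-bnd-2} for the commutator $[H_g,iA]=[H,iA]+g[V,iA]$; together with the comparability $(|H|+1)^s\lesssim(|H_g|+1)^s$ for $s\in\{1/2,1\}$ and $|g|$ bounded (again consequence of the $|H|^{1/2}$-boundedness of $V$), this yields the bound~\eqref{equiv:C1A} with $H_g$ in place of $H$. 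The characterization lemma then supplies $H_g\in C^1(A)$, granted the group invariance~\eqref{eq:invar-HA} of $\Dc(H_g)=\Dc(H)$ inherited from $H$.

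Item~(ii) is the heart of the lemma, and the plan is to transfer the Mourre estimate for $H$ on $J=(a,b)$ to $H_g$ on the shrunk interval $J_\eta:=(a+\eta,b-\eta)$ via the decomposition $[H_g,iA]=[H,iA]+g[V,iA]$ combined with an almost-analytic functional calculus. I pick $f\in C_c^\infty(a,b)$ with $f\equiv1$ on $\overline{J_\eta}$, so that $\mathds1_{J_\eta}(H_g)=f(H_g)\mathds1_{J_\eta}(H_g)$. The resolvent identity $(z-H_g)^{-1}-(z-H)^{-1}=-g(z-H_g)^{-1}V(z-H)^{-1}$ combined with Helffer--Sj\"ostrand yields $\|f(H_g)-f(H)\|=O(g)$ and, more delicately, $\|(|H|+1)^{1/2}(f(H_g)-f(H))\|=O\bigl(g(1+|a|+|b|)^{1/2}\bigr)$. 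Substituting $f(H_g)=f(H)+(f(H_g)-f(H))$ into the middle of $\mathds1_{J_\eta}(H_g)[H,iA]\mathds1_{J_\eta}(H_g)$ produces a main term and three cross-terms. Using $f(H)=\mathds1_J(H)f(H)$, the main term reduces via the Mourre estimate for $H$ on $J$ to $c_0(1-O(g))\,\mathds1_{J_\eta}(H_g)$ plus a compact remainder, exploiting $\|f(H_g)^2-f(H)^2\|=O(g)$ together with $f(H_g)^2\mathds1_{J_\eta}(H_g)=\mathds1_{J_\eta}(H_g)$. The cross-terms are controlled by~\eqref{eq:C1A-bnd}: the $(|H|+1)^{1/2}$ factor on the left pairs with $f(H_g)-f(H)$ to give $O\bigl(g(1+|a|+|b|)^{1/2}\bigr)$, while the $(|H|+1)$ factor on the right pairs with $f(H)$ to give $O(1+|a|+|b|)$, producing a total correction of order $g(1+|a|+|b|)^{3/2}$. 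The perturbation term $g\,\mathds1_{J_\eta}(H_g)[V,iA]\mathds1_{J_\eta}(H_g)$ is bounded by $gC(1+|a|+|b|)$ via~\eqref{eq:C1A-bnd-2}. Choosing $\eta=gC(1+|a|+|b|)^{3/2}/c_0$ absorbs all corrections into $c_0/2$ and closes the estimate; strictness of the Mourre estimate for $H$ eliminates the compact remainder and is thus inherited by $H_g$. If $[H,iA]$ is genuinely $H$-bounded, the $(|H|+1)^{1/2}$ factor on the left is replaced by the identity, saving a factor $(1+|a|+|b|)^{1/2}$ and yielding the announced linear dependence of $\eta$.

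Item~(iii) is the iterated analogue of~(i). Writing $\ad_{iA}^2(H_g)=\ad_{iA}^2(H)+g\,\ad_{iA}^2(V)$, the $H$-boundedness of both summands (inherited from $H\in C^2(A)$ and from the additional hypothesis on $V$), combined with the comparability of $(|H|+1)$ with $(|H_g|+1)$ for small $|g|$, yields~\eqref{eq:ad-iterate-bnd} with $s=2$ for $H_g$. Lemma~\ref{lem:suff-crit-smooth} then grants $H_g\in C^2(A)$.

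The main obstacle is the sharp power-counting in the cross-terms of~(ii): a naive bound treating $[H,iA]$ as $(|H|+1)$-bounded on both sides would yield $\eta\sim g(1+|a|+|b|)^2$, and extracting the correct exponent $3/2$ requires the asymmetric bound~\eqref{eq:C1A-bnd} together with the refined estimate $\|(|H|+1)^{1/2}(f(H_g)-f(H))\|=O\bigl(g(1+|a|+|b|)^{1/2}\bigr)$; this in turn rests on the almost-analytic calculus and careful tracking of the polynomial growth of the seminorms of $f$ in terms of the endpoints of its support.
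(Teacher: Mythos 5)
Your item~(ii) follows essentially the paper's route (cutoff $h_\eta$ supported in $(a,b)$ and equal to $1$ on $J_\eta$, expansion of $h_\eta(H_g)[H_g,iA]h_\eta(H_g)$ around $h_\eta(H)$, power-counting of cross-terms using the asymmetric bound~\eqref{eq:C1A-bnd}), and your final formula for $\eta$ matches the paper's. However, your intermediate bound $\|f(H_g)-f(H)\|=O(g)$ is incorrect: since $f$ transitions from $0$ to $1$ over intervals of length $\eta$, its derivative blows up like $1/\eta$, and any Helffer--Sj\"ostrand or operator-Lipschitz estimate necessarily yields $\|f(H_g)-f(H)\|=O(g/\eta)$ (the paper states this explicitly: $\|h_\eta(H_g)-h_\eta(H)\|\lesssim g/\eta$). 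The same issue affects your weighted bound and your claim $\|f(H_g)^2-f(H)^2\|=O(g)$. Consequently your claimed total correction $g(1+|a|+|b|)^{3/2}$ should be $\tfrac{g}{\eta}(1+|a|+|b|)^{3/2}$, and the announced choice of $\eta$ is not an arbitrary bookkeeping convenience but precisely the threshold forced by requiring this $\eta$-dependent correction to fall below $c_0$. You track the dependence on the endpoints $a,b$ but drop the dependence on the width $\eta$ of the transition layer; both are needed, and the conclusion only survives because with the correct $1/\eta$-dependence the optimal $\eta$ happens to coincide with what you announced.

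Your treatment of items~(i) and~(iii) has a more structural gap. For~(i) you invoke the characterization lemma, and for~(iii) you invoke Lemma~\ref{lem:suff-crit-smooth}; both of these require the group-invariance condition~\eqref{eq:invar-HA}, which is \emph{not} among the hypotheses of Lemma~\ref{lem:pert-Mourre} and cannot be ``inherited from $H$'' since $H\in C^1(A)$ alone does not imply $e^{itA}\Dc(H)\subset\Dc(H)$. Moreover, for~(iii) you claim $\ad_{iA}^2(H)$ is $H$-bounded because $H\in C^2(A)$, but $C^2$-regularity only gives the weaker form bound in the spirit of~\eqref{equiv:C1A}, not the stronger $H$-boundedness~\eqref{eq:ad-iterate-bnd} required by Lemma~\ref{lem:suff-crit-smooth}; indeed assumption~\eqref{eq:C1A-bnd} itself has $(|H|+1)^{1/2}\phi$ rather than $\phi$ on the left, so even the first commutator is not assumed $H$-bounded. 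The paper sidesteps both difficulties by arguing directly on the resolvent, using the identities $(H_g-z)^{-1}=(H-z)^{-1}(1+gV(H-z)^{-1})^{-1}$ and the resulting formula for $[(H_g-z)^{-1},iA]$ with $|\Im z|$ chosen large, which establishes the $C^1(A)$ and $C^2(A)$ regularity without any appeal to~\eqref{eq:invar-HA} or to $H$-boundedness of $\ad_{iA}(H)$.
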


\begin{proof}
As the perturbation $V$ is $|H|^\frac12$-bounded, the perturbed operator $H_g=H+gV$ is self-adjoint and has the same domain as~$H$ for all $g\in\R$. The proof of items~(i) and~(iii) is standard, following the same lines as e.g.~\cite[proof of Proposition~2.5]{MR13}, starting from identities
\begin{eqnarray*}
(H_g-z)^{-1}&=&(H-z)^{-1}(1+gV(H-z)^{-1})^{-1},\\
{}[(H_g-z)^{-1},iA]&=&[(H-z)^{-1},iA](1+gV(H-z)^{-1})^{-1}\\
&&-g(H_g-z)^{-1}V[(H-z)^{-1},iA](1+gV(H-z)^{-1})^{-1}\\
&&-g(H_g-z)^{-1}[V,iA](H_g-z)^{-1},
\end{eqnarray*}
where $\Im z$ is chosen large enough so that $\|gV(H-z)^{-1}\|<1$.
We skip the detail and turn to item~(ii).
Assume that $H$ satisfies a Mourre estimate with respect to $H$ on a bounded interval $J=(a,b)$. Let $\eta\in(0,1)$, let $J_\eta:=(a+\eta,b-\eta)$, and choose $h_\eta\in C_c^\infty(\R)$ such that $\mathds1_{J_\eta}\le h_\eta\le \mathds1_J$ and $|\nabla h_\eta|\lesssim \frac1\eta$. Multiplying both sides of the Mourre estimate for~$H$ with $h_\eta(H)$, we get for some compact operator $K$,
\[h_\eta(H)[H,iA]h_\eta(H)\,\ge\,c_0h_\eta(H)+h_\eta(H)Kh_\eta(H),\]
hence,
as $[V,iA]$ is $H$-bounded,
\begin{equation}\label{eq:pre-commut-Hg}
h_\eta(H)[H_g,iA]h_\eta(H)\,\ge\,\Big(c_0-gC(1+|a|+|b|)\Big)h_\eta(H)+h_\eta(H)Kh_\eta(H).
\end{equation}
Next, we decompose
\begin{multline}\label{eq:decomp-Hg-commut}
h_\eta(H_g)[H_g,iA]h_\eta(H_g)\,=\,h_\eta(H)[H_g,iA]h_\eta(H)\\
+(h_\eta(H_g)-h_\eta(H))[H_g,iA]h_\eta(H_g)+h_\eta(H)[H_g,iA](h_\eta(H_g)-h_\eta(H)).
\end{multline}
Recalling~\eqref{eq:C1A-bnd}, the $|H|^\frac12$-boundedness of $V$, and the $H$-boundedness of $[V,iA]$, and noting that $\|h_\eta(H_g)-h_\eta(H)\|\lesssim \frac1\eta g$,
we easily find that the last two right-hand side terms in~\eqref{eq:decomp-Hg-commut} have operator norm bounded by~$\frac{gC}\eta (1+|a|+|b|)^{3/2}$.
Combined with~\eqref{eq:pre-commut-Hg}, this yields
\begin{equation*}
h_\eta(H_g)[H_g,iA]h_\eta(H_g)\,\ge\,\Big(c_0-\tfrac{gC}\eta (1+|a|+|b|)^{\frac32}\Big)h_\eta(H_g)+h_\eta(H)Kh_\eta(H).
\end{equation*}
Now multiplying both sides with $\mathds1_{J_\eta}(H_g)$, the conclusion~(ii) follows.
\end{proof}

An important question concerns the perturbation of an eigenvalue embedded in continuous spectrum~\cite{Simon-73}. In view of formal second-order perturbation theory, Fermi's golden rule is expected to provide an instability criterion, cf.~\eqref{eq:Fermi-cond0} below, and various works have shown how Mourre's theory can be used to establish it rigorously, e.g.~\cite{AHS-89,Hunziker-Sigal-00,Faupin-Moller-Skibsted-11}. Revisiting~\cite[Theorem~8.8]{Hunziker-Sigal-00}, we can derive for instance the following statement, where care is taken again not to assume that $[H,iA]$ is $H$-bounded; the outline of the proof is included for convenience.

\begin{theor}[Instability of embedded bound states]\label{th:instab}
Let $A$ be a self-adjoint operator on a Hilbert space $\Hc$, let $H$ be a self-adjoint operator of class $C^2(A)$,
let $V$ be a symmetric~$|H|^{1/2}$-bounded operator,
and assume that:
\begin{enumerate}[---]
\item the commutator $[H,iA]$ satisfies~\eqref{eq:C1A-bnd};
\item the commutators $[V,iA]$ and $[[V,iA],iA]$ extend as $H$-bounded operators;
\item $H$ satisfies a Mourre estimate with respect to $A$ on a bounded open interval $J\subset\R$.
\end{enumerate}
In addition, assume that $H$ has an eigenvalue $E_0\in J$, denote by $\Pi_0$ the associated eigenprojector, let $\bar\Pi_0:=1-\Pi_0$, assume that the eigenspace satisfies $\Ran(\Pi_0)\subset\Dc(A^2)$ and $\Ran(A\Pi_0)\subset\Dc(V)$, and assume that Fermi's condition holds, that is, there exists $\gamma_0>0$ such that
\begin{equation}\label{eq:Fermi-cond0}
\lim_{\e\downarrow0}\,\Im\Big\{\Pi_0V\bar\Pi_0(H-E_0-i\e)^{-1}\bar\Pi_0V\Pi_0\Big\}\,\ge\,\gamma_0\Pi_0.
\end{equation}
Then, there exists $g_0>0$ and a neighborhood $J_0\subset J$ of $E_0$ such that the perturbed operator $H_g=H+gV$ satisfies
\[\sigma_\pp(H_g)\cap J_0\,=\,\varnothing\qquad\text{for all $0<|g|\le g_0$}.\qedhere\]
\end{theor}

\begin{proof}
Note that all assumptions of Lemma~\ref{lem:pert-Mourre} are satisfied, hence the perturbed operator~$H_g$ is of class $C^2(A)$ and satisfies a Mourre estimate on $J'$ with respect to $A$ for all $J'\Subset J$ and $g$ small enough.
Consider the reduced perturbed operator $\bar H_g:=\bar\Pi_0 H_g\bar\Pi_0$ on the range $\Ran(\bar\Pi_0)$,
and set also $\bar H:=\bar\Pi_0H\bar\Pi_0$, $\bar V:=\bar\Pi_0V\bar\Pi_0$, $\bar A:=\bar\Pi_0A\bar\Pi_0$.
We follow the approach in~\cite[Theorem~8.8]{Hunziker-Sigal-00} and split the proof into three steps.

\medskip
\step1 Proof that $\bar H_g$ is of class $C^2(\bar A)$ for all $g$ and that there exists $g_0>0$ and an open interval $J_0\subset J$ with $E_0\in J_0$ such that for all $|g|\le g_0$ the operator $\bar H_g$ satisfies a strict Mourre estimate on $J_0$ with respect to $\bar A$. In particular, in view of Theorem~\ref{th:Mourre-spectrum}(iii), this entails that $\bar H_g$ has no eigenvalue in $J_0$ for any $|g|\le g_0$.

\medskip\noindent
It is easily checked that reduced operators $\bar A,\bar H,\bar V$ satisfy all the assumptions of Lemma~\ref{lem:pert-Mourre} on $\Ran(\bar\Pi_0)$.
In particular, in order to ensure that $[\bar V,i\bar A]$ and $[[\bar V,i\bar A],i\bar A]$ are $\bar H$-bounded, it suffices to decompose
\begin{eqnarray*}
{}[\bar V,i\bar A]&=&\bar\Pi_0[V,iA]\bar\Pi_0-\bar\Pi_0V\Pi_0iA\bar\Pi_0+\bar\Pi_0iA\Pi_0V\bar\Pi_0,\\
{}[[\bar V,i\bar A],i\bar A]&=&\bar\Pi_0[[V,iA], iA]\bar\Pi_0
+\bar\Pi_0ViA\Pi_0 iA\bar\Pi_0+\bar\Pi_0 iA\Pi_0iAV\bar\Pi_0\\
&&\hspace{-1.5cm}-\bar\Pi_0V\Pi_0(iA)^2\bar\Pi_0-\bar\Pi_0(i A)^2\Pi_0V\bar\Pi_0+\bar\Pi_0V\Pi_0iA\Pi_0 iA\bar\Pi_0+\bar\Pi_0i A\Pi_0iA\Pi_0V\bar\Pi_0\\
&&+2\bar\Pi_0iA\Pi_0[V,iA]\bar\Pi_0-2\bar\Pi_0[V,i A]\Pi_0iA\bar\Pi_0-2\bar\Pi_0i A\Pi_0V\Pi_0iA\bar\Pi_0,
\end{eqnarray*}
and to note that our assumptions precisely ensure that the different right-hand side terms are all $\bar H$-bounded.
Applying Lemma~\ref{lem:pert-Mourre},
we then deduce that $\bar H_g$ is of class $C^2(\bar A)$ for all~$g$ and satisfies a Mourre estimate on $J'$ with respect to $\bar A$ for all $J'\Subset J$ and $g$ small enough.
Next, multiplying both sides of this estimate with $\mathds1_L(\bar H)$ and using the fact that $\mathds1_L(\bar H)$ converges strongly to $0$ as $L\to\{E_0\}$, we deduce that there is a neighborhood $L_0$ of $E_0$ on which $\bar H$ satisfies a strict Mourre estimate.
The claimed strict Mourre estimate for $\bar H_g$ then follows from Lemma~\ref{lem:pert-Mourre}(ii) for any $J_0\Subset L_0$ and $g$ small enough.

\medskip
\step2 Proof that, if for some $|g|\le g_0$ the perturbed operator $H_g$ has an eigenvalue $E\in J_0$ with eigenvector $\psi$, then it satisfies
\begin{equation}\label{eq:eigenvalue-Fermi}
\lim_{\e\downarrow0}~\Im\Big\langle \psi\,,\,\Pi_0 W\bar\Pi_0(\bar H_g-E-i\e)^{-1}\bar\Pi_0 W\Pi_0\psi\Big\rangle_\Hc\,=\,0.
\end{equation}
This observation is found e.g.~in~\cite[Lemma~8.10]{Hunziker-Sigal-00}, but we repeat the proof for convenience.
Decomposing $1=\Pi_0+\bar\Pi_0$ and using $\Pi_0H\Pi_0=E_0\Pi_0$ and $\Pi_0H\bar\Pi_0=0$, the eigenvalue equation $H_g\psi=E\psi$ is equivalent to the system
\begin{equation}\label{eq:decomp-eigenvalue-eqn}
\left\{\begin{array}{l}
g\Pi_0 W\Pi_0\psi+g\Pi_0 W\bar\Pi_0\psi=(E-E_0)\Pi_0\psi,\\
\bar H_g\bar\Pi_0\psi+g\bar\Pi_0 W\Pi_0\psi=E\bar\Pi_0\psi.
\end{array}\right.
\end{equation}
For all $\e>0$, the second equation entails
\[\bar\Pi_0\psi=-g(\bar H_g-E-i\e)^{-1}\bar\Pi_0 W\Pi_0\psi-i\e(\bar H_g-E-i\e)^{-1}\bar\Pi_0\psi.\]
By Step~1, we know that $E\in J_0$ cannot be an eigenvalue of $\bar H_g$, hence the last right-hand side term converges strongly to $0$ as $\e\downarrow0$ and we get
\[\bar\Pi_0\psi=-g\lim_{\e\downarrow0}(\bar H_g-E-i\e)^{-1}\bar\Pi_0 W\Pi_0\psi.\]
Inserting this into the first equation of~\eqref{eq:decomp-eigenvalue-eqn}, taking the scalar product with $\psi$, and taking the imaginary part of both sides, the claim~\eqref{eq:eigenvalue-Fermi} follows.

\medskip
\step3 Conclusion.\\
In view of Step~1, as $\bar H_g$ is of class $C^2(\bar A)$ and satisfies a strict Mourre estimate on $J_0$ for all $|g|\le g_0$, Mourre's theory entails the validity of the following strong limiting absorption principle, cf.~\cite{ABMG-96,Sahbani-97}: for all $s>\frac12$ and $J_0'\Subset J_0$, the limit $\lim_{\e\downarrow0}\langle\bar A\rangle^{-s}(\bar H_g-E-i\e)^{-1}\langle\bar A\rangle^{-s}$ exists in the weak operator topology, uniformly for $E\in J_0'$ and $|g|\le g_0$. (Note that we could not find a reference for the uniformity with respect to $g$, but it is easily checked to follow from~\cite{ABMG-96,Sahbani-97} by further making use of the $H$-boundedness of $V$ and $[V,iA]$.)
Decomposing $iA\bar\Pi_0V\Pi_0=ViA\Pi_0-[V,iA]\Pi_0-iA\Pi_0V\Pi_0$ and noting that our assumptions ensure that the different right-hand side terms are all bounded, we find that $\langle A\rangle\bar\Pi_0W\Pi_0$ is bounded (and finite-rank), hence the limiting absorption principle entails that the limit
\begin{eqnarray*}
F_g(E)&:=&\lim_{\e\downarrow0}~\Pi_0V\bar\Pi_0(\bar H_g-E-i\e)^{-1}\bar\Pi_0V\Pi_0\\
&=&\lim_{\e\downarrow0}~\big(\Pi_0V\bar\Pi_0\langle\bar A\rangle\big)\Big(\langle\bar A\rangle^{-1}(\bar H_g-E-i\e)^{-1}\langle\bar A\rangle^{-1}\Big)\big(\langle\bar A\rangle\bar\Pi_0V\Pi_0\big)
\end{eqnarray*}
exists, uniformly for $E\in J_0'$ and $|g|\le g_0$. This ensures in particular that the limit in~\eqref{eq:Fermi-cond0} exists. Assumption~\eqref{eq:Fermi-cond0} takes the form $\Im F_0(E_0)\ge\gamma_0\Pi_0$, and therefore by uniformity there exists~$g_0'>0$ and a neighborhood $J''_0$ of $E_0$ such that
\[\Im F_g(E)\ge\tfrac12\gamma_0\Pi_0\qquad\text{for all $E\in J_0''$ and $|g|\le g_0'$.}\]
In view of Step~2, this implies that for $|g|\le g_0'$ any eigenvalue of $H_g$ in~$J_0''$ must have eigenvector in $\Ran(\bar\Pi_0)$. However, this would entail that it is actually an eigenvalue of the reduced operator $\bar H_g$, which is excluded by Step~1.
\end{proof}

Moreover, if there is enough analyticity for the analytic continuation of the resolvent,
the perturbed embedded eigenvalue is actually expected to become a complex resonance when dissolving in the absolutely continuous spectrum~\cite{Simon-73}. This resonance then describes the metastability of the bound state and the quasi-exponential decay of the system away from this state.
While this is not guaranteed in the general framework of Mourre's theory, the following result by Cattaneo, Graf, and Hunziker~\cite{CGH-06} shows how additional regularity allows to develop an approximate dynamical resonance theory.
We emphasize that $C^2$-regularity is no longer enough here.

\begin{theor}[Approximate dynamical resonances; \cite{CGH-06}]\label{th:CGH}
Let $A$ and $H$ be self-adjoint operators on a Hilbert space $\Hc$,
let $V$ be symmetric and $|H|^{1/2}$-bounded, and assume that for some $\nu\ge0$,
\begin{enumerate}[---]
\item the unitary group generated by $A$ leaves the domain of $H$ invariant, cf.~\eqref{eq:invar-HA};
\item for all $0\le j\le 5+\nu$, the iterated commutators $\ad^j_{iA}(H)$ and $\ad^j_{iA}(V)$ extend as $H$-bounded operators;
\item $H$ satisfies a Mourre estimate with respect to $A$ on a bounded open interval $J\subset\R$.
\end{enumerate}
In addition, assume that $H$ has a simple eigenvalue $E_0\in J$ with normalized eigenvector~$\psi_0$, denote by~$\bar\Pi_0$ the orthogonal projection on $\{\psi_0\}^\bot$, and assume that Fermi's condition is satisfied, that is,
\begin{equation}\label{eq:Fermi-cond0-re}
\gamma_0\,:=\,\lim_{\e\downarrow0}\Im\Big\langle \bar\Pi_0(V\psi_0)\,,\,(H-E_0-i\e)^{-1}\bar\Pi_0(V\psi_0)\Big\rangle\,>\,0.
\end{equation}
Then, the perturbed operator $H_g=H+gV$ satisfies the following quasi-exponential decay law: for all smooth cut-off functions $h$ supported in $J$ and equal to $1$ in a neighborhood of~$E_0$, and for all $g$ small enough, there holds for all~$t\ge0$,
\[\Big|\big\langle\psi_0,e^{-iH_gt}h(H_g)\psi_0\big\rangle-e^{-iz_gt}\Big|\,\lesssim_{h,\gamma_0}\,\left\{\begin{array}{ll}
g^2|\!\log g|\langle t\rangle^{-\nu},&\text{if $\nu\ge0$;}\\
g^2\langle t\rangle^{-(\nu-1)},&\text{if $\nu\ge1$;}
\end{array}\right.\]
where the dynamical resonance $z_g$ is given by Fermi's golden rule,
\[z_g\,=\,E_0+g\langle\psi_0,V\psi_0\rangle-g^2\lim_{\e\downarrow0}\Big\langle\bar\Pi_0(V\psi_0)\,,\,\big(H-E_0-i\e\big)^{-1}\bar\Pi_0(V\psi_0)\Big\rangle.\]
In particular, in view of~\eqref{eq:Fermi-cond0-re}, this satisfies $\Im z_g<0$.
\end{theor}

\section*{Acknowledgements}
The authors thank Jérémy Faupin and Sylvain Golenia for motivating discussions at different stages of this work. MD acknowledges financial support from F.R.S.-FNRS.

\bibliographystyle{plain}
\bibliography{biblio}

\end{document}